\newcommand\N{\ensuremath{\mathbf{N}}}
\newcommand\R{\ensuremath{\mathbf{R}}}
\newcommand\Rplus{\ensuremath{\R^+}}
\newcommand\Rbar{\ensuremath{\overline{\R}}}
\newcommand\Z{\ensuremath{\mathbf{Z}}}
\newcommand\Q{\ensuremath{\mathbf{Q}}}
\renewcommand\vec[1]{\boldsymbol{#1}}
\renewcommand\leq{\leqslant}
\renewcommand\geq{\geqslant}
\newcommand\MinPl{\ensuremath{\mathsf{Min}}\xspace} 
\newcommand\MaxPl{\ensuremath{\mathsf{Max}}\xspace} 
\newcommand\Guard[1]{\ensuremath{\mathsf{Guard}(#1)}} 
\newcommand\regcst[1]{\ensuremath{\mathsf{Reg}_{#1}}} 
\newcommand\den[1]{\ensuremath{[\![#1]\!]\xspace}} 
\newcommand\valuation{\ensuremath{\nu}\xspace}
\newcommand\CF[1]{\ensuremath{{\sf CF}_{#1}}} 
\newcommand\opcf{\ensuremath{\rhd}} 
\newcommand\PTG{{\sf PTG}\xspace}
\newcommand\Locs{\ensuremath{L}}
\newcommand\locs{\Locs}
\newcommand\LocsMin{\Locs_{\MinPl}}
\newcommand\LocsMax{\Locs_{\MaxPl}}
\newcommand\LocsFin{\ensuremath{\Locs_f}} 
\newcommand\LocsUrg{\ensuremath{\Locs_u}} 
\newcommand\loc{\ensuremath{\ell}}
\newcommand\fgoal{\ensuremath{\varphi}}
\newcommand\fgoalvec{\ensuremath{\boldsymbol{\varphi}}}
\newcommand\transitions{\ensuremath{\Delta}}
\newcommand\price{\ensuremath{\pi}}
\newcommand\transition{\delta}
\newcommand\game{\ensuremath{\mathcal G}\xspace}
\newcommand\hame{\ensuremath{\mathcal H}\xspace}
\newcommand\cstgame{\ensuremath{S_\game}}
\newcommand\reggame{\regcst{\game}}
\newcommand\maxPriceLoc{\ensuremath{\Pi^{\mathrm{loc}}}\xspace}
\newcommand\maxPriceTrans{\ensuremath{\Pi^{\mathrm{tr}}}\xspace}
\newcommand\maxPriceFin{\ensuremath{\Pi^{\mathrm{fin}}}\xspace}
\newcommand\confs[1]{\ensuremath{{\sf Conf}_{#1}}\xspace}
\newcommand\confgame{\confs{\game}}
\newcommand\costname{{\sf Cost}}
\newcommand\costgame[2]{\ensuremath{\costname_{#1}(#2)}}
\newcommand\cost[1]{\ensuremath{\costname(#1)}}
\newcommand\len[1]{\ensuremath{|#1|}}
\newcommand\run{\rho}
\newcommand\strat{\ensuremath{\sigma}\xspace}
\newcommand\stratmin{\strat_{\MinPl}}
\newcommand\stratmax{\strat_{\MaxPl}}
\newcommand\stratsofmin{\ensuremath{{\sf Strat}_{\MinPl}}}
\newcommand\stratsofmax{\ensuremath{{\sf Strat}_{\MaxPl}}}
\newcommand\stratsofmingame[1]{\stratsofmin(#1)}
\newcommand\stratsofmaxgame[1]{\stratsofmax(#1)}
\newcommand\outcomes{\mathsf{Play}}
\newcommand\Play[1]{\ensuremath{\outcomes(#1)}}
\newcommand\points{\mathsf{pts}}
\newcommand\intervals{\mathsf{int}}
\newcommand\fakeValue{\mathsf{fake}}
\newcommand\val{\ensuremath{{\sf Val}}}
\newcommand\Val{\val}
\newcommand\uval{\ensuremath{\overline{\val}}}
\newcommand\lval{\ensuremath{\underline{\val}}}
\newcommand\Value{\val}
\newcommand\uvalgs[2]{\uval_{#1}^{#2}}
\newcommand\lvalgs[2]{\lval_{#1}^{#2}}
\newcommand\valgs[2]{\val_{#1}^{#2}}
\newcommand\uvalgame{\uvalgs{\game}{}}
\newcommand\lvalgame{\lvalgs{\game}{}}
\newcommand\valgame{\valgs{\game}{}}
\newcommand\SPTG{{\sf SPTG}\xspace}
\newcommand\rightpoint{\ensuremath{r}\xspace}
\newcommand\locMin{\ensuremath{\loc^\star}}
\newcommand\Next{\textsf{left}}
\newcommand\operator{\mathcal F}
\newcommand\bupval[1]{\overline{\Value}^{\leq #1}}
\newcommand\minstrategy{\stratmin}
\newcommand\maxstrategy{\stratmax}
\newcommand\Zstar{\ensuremath{\Z_{\valuation,\fgoalvec}}}
\newcommand\Zstarinf{\ensuremath{\Z_{\valuation,\fgoalvec}^{+\infty}}}
\newcommand\costbound[1]{\costname^{\leq#1}}
\newcommand\possval{\ensuremath{{\sf PossVal}}}
\newcommand\posscp{{\sf PossCP}}
\newcommand{\F}{{\sf F}}
\title{Simple Priced Timed Games Are Not That Simple\footnote{The
    research leading to these results has received funding from the
    European Union Seventh Framework Programme (FP7/2007-2013) under
    Grant Agreement n°601148 (CASSTING).}}
\author[1]{Thomas Brihaye} \author[2]{Gilles Geeraerts}
\author[1]{Axel Haddad} \author[3]{Engel Lefaucheux}
\author[4]{Benjamin Monmege} 
\affil[1]{Universit\'e de Mons, Belgium,
  \texttt{\{thomas.brihaye,axel.haddad\}@umons.ac.be}}
\affil[2]{Universit\'e libre de Bruxelles, Belgium,
  \texttt{gigeerae@ulb.ac.be}}
\affil[3]{LSV, ENS Cachan, Inria Rennes, France,
  \texttt{engel.lefaucheux@ens-cachan.fr}}
\affil[4]{LIF, Aix-Marseille Universit\'e, CNRS, France,
\texttt{benjamin.monmege@lif.univ-mrs.fr}}
\authorrunning{T. Brihaye, G. Geeraerts, A. Haddad, E. Lefaucheux,
  B. Monmege}
\subjclass{D.2.4 Software/Program Verification, F.3.1 Specifying and
  Verifying and Reasoning about Programs}
\keywords{Priced timed games; Real-time systems; Game theory}
\theoremstyle{theorem}
\newtheorem{proposition}[theorem]{Proposition}
\begin{document}

\maketitle

\begin{abstract}
  Priced timed games are two-player zero-sum games played on priced
  timed automata (whose locations and transitions are labeled by
  weights modeling the costs of spending time in a state and executing
  an action, respectively). The goals of the players are to minimise
  and maximise the cost to reach a target location, respectively. We
  consider priced timed games with one clock and arbitrary (positive
  and negative) weights and show that, for an important subclass of
  theirs (the so-called \emph{simple} priced timed games), one can
  compute, in exponential time, the optimal values that the players
  can achieve, with their associated optimal strategies. As side
  results, we also show that one-clock priced timed games are
  determined and that we can use our result on simple priced timed
  games to solve the more general class of so-called reset-acyclic
  priced timed games (with arbitrary weights and one-clock).
\end{abstract}

\section{Introduction} 

The importance of models inspired from the field of game theory is
nowadays well-established in theoretical computer science. They allow
to describe and analyse the possible interactions of antagonistic
agents (or players) as in the \emph{controller synthesis} problem, for
instance. This problem asks, given a model of the environment of a
system, and of the possible actions of a controller, to compute a
controller that constraints the environment to respect a given
specification. Clearly, one can not, in general, assume that the two
players (the environment and the controller) will collaborate, hence
the need to find a \emph{controller strategy} that enforces the
specification \emph{whatever the environment does}. This question thus
reduces to computing a so-called winning strategy for the
corresponding player in the game model.

In order to describe precisely the features of complex computer
systems, several game models have been considered in the
literature. In this work, we focus on the model of Priced Timed
Games~\cite{RW89} (\PTG{s} for short), which can be regarded as an
extension (in several directions) of classical finite automata. First,
like timed automata~\cite{AluDil94}, \PTG{s} have \emph{clocks}, which
are real-valued variables whose values evolve with time elapsing, and
which can be tested and reset along the transitions. Second, the
locations are associated with price-rates and transitions are labeled
by discrete prices, as in priced timed
automata~\cite{BehFeh01,AluLa-04,BouBri07}. These prices allow one to
associate a \emph{cost} with all runs (or plays), which depends on the
sequence of transitions traversed by the run, and on the time spent in
each visited location. Finally, a \PTG is played by two players,
called $\MinPl$ and $\MaxPl$, and each location of the game is owned
by either of them (we consider a turn-based version of the game). The
player who controls the current location decides how long to wait, and
which transition to take.

In this setting, the goal of $\MinPl$ is to reach a given set of
target locations, following a play whose cost is as small as
possible. Player $\MaxPl$ has an antagonistic objective: he tries to
avoid the target locations, and, if not possible, to maximise the
accumulated cost up to the first visit of a target location. To
reflect these objectives, we define the upper value $\uval$ of the
game as a mapping of the configurations of the \PTG to the least cost
that $\MinPl$ can guarantee while reaching the target, whatever the
choices of $\MaxPl$. Similarly, the lower value $\lval$ returns the
greatest cost that $\MaxPl$ can ensure (letting the cost being
$+\infty$ in case the target locations are not reached).

\begin{figure}[tbp]
  \begin{center}
    \begin{tabular}{ccc}
      
      \begin{tikzpicture}[minimum size=5mm,node distance=1.3cm]
        \everymath{\footnotesize}
        
        \node[draw,circle,label={[label distance=-1mm]above:$-2$}] (q1) {\makebox[0pt][c]{$\loc_1$}};
        \node[draw,rectangle,below of=q1,label={[label distance=-1mm]above left:$-14$}] (q2) {\makebox[0pt][c]{$\loc_2$}};
        \node[draw,circle,below right of=q1,xshift=8mm,yshift=3mm,label={[label distance=-1mm]above:$4$}] (q3) {\makebox[0pt][c]{$\loc_3$}};
        \node[draw,rectangle,above right of=q3,xshift=8mm,yshift=-3mm,label={[label distance=-1mm]below:$3$}] (q4) {\makebox[0pt][c]{$\loc_4$}};
        \node[draw,circle,left of=q2,label=below:$8$,xshift=-5mm] (q5) {\makebox[0pt][c]{$\loc_5$}};
        \node[draw,circle,above of=q5,label=above:$-12$] (q6) {\makebox[0pt][c]{$\loc_6$}};
        \node[draw,circle,below of=q4,label={[label distance=-1mm]above:$-16$}] (q7){\makebox[0pt][c]{$\loc_7$}};
        \node[draw,circle,below right of=q4,xshift=8mm,yshift=3mm,accepting] (qf) {\makebox[0pt][c]{$\loc_f$}};
        
        \path[draw,arrows=-latex'] (q1) edge (q2) 
        (q2) edge (q3) 
        (q2) edge (q5)
        (q5) edge (q6) 
        (q6) edge node[above] {$1$} (q1)
        (q5) edge[bend right=15] node[pos=.7,above,xshift=2mm] {$2$} (q7) 
        (q3) edge node[above right,xshift=-2mm] {$6$} (q7)
        (q3) edge (q1)
        (q3) edge (q4) 
        (q4) edge node[below left] {$-7$} (qf)
        (q7) edge (qf) 
        (q1) edge[bend left=40] (qf);
        
        \draw[dotted] (1.3,-2) rectangle (5.6,1) ; 
      \end{tikzpicture}
      &&
         \begin{tikzpicture}[xscale=.8,yscale=0.55]
           \draw[->] (6,-5) -- (10.5,-5) node[anchor=north] {$\valuation$};
           \draw	(6,-5) node[anchor=south] {$0$}
           (7,-5) node[anchor=south] {$\frac 1 4$}
           (8,-5) node[anchor=south] {$\frac 1 2$}
           (9,-5) node[anchor=south] {$\frac 3 4$}
           (9.6,-5) node[anchor=south] {$\frac 9 {10}$}
           (10,-5) node[anchor=south] {$1$};
           
           \draw[->] (6,-5) -- (6,-9) node[anchor=west] {$\lval(\loc_1,\valuation)$};
           \draw	(6,-8.3) node[anchor=east] {$-9.5$}
           (6,-7.3) node[anchor=east] {$-6$}
           (6,-6.7) node[anchor=east] {$-5.5$}
           (6,-5.6) node[anchor=east] {$-2$}
           (6,-5.1) node[anchor=east] {$-0.2$};
           
           \draw[thick] (6,-8.3) -- (7,-7) --
           (8,-6.85)--(9,-5.6)--(9.6,-5.1)--(10,-5);

           \draw[dotted] (7,-7) -- (6,-7);
           \draw[dotted] (7,-7) -- (7,-5);
           \draw[dotted] (8,-6.85) -- (6,-6.85) ;
           \draw[dotted] (8,-6.85) -- (8,-5) ;
           \draw[dotted] (9,-5.6) -- (6,-5.6) ;
           \draw[dotted] (9,-5.6) -- (9, -5) ;
           \draw[dotted] (9.6,-5.1) -- (6,-5.1) ;
           \draw[dotted] (9.6,-5.1) -- (9.6,-5) ;

         \end{tikzpicture}
    \end{tabular}
    \caption{A simple priced timed game (left) and the lower value
      function of location $\loc_1$ (right).}
    \label{fig:ex-ptg2}
  \end{center}
\end{figure}
An example of \PTG is given in Figure~\ref{fig:ex-ptg2}, where the
locations of $\MinPl$ (respectively, $\MaxPl$) are represented by
circles (respectively, rectangles), and the integers next to the
locations are their price-rates, i.e., the cost of spending one time
unit in the location. Moreover, there is only one clock $x$ in the
game, which is never reset and all guards on transitions are
$x\in[0,1]$ (hence this guard is not displayed and transitions are
only labeled by their respective discrete cost): this is an example of
\emph{simple priced timed game}, as we will define them properly
later. It is easy to check that $\MinPl$ can force reaching the target
location $\ell_f$ from all configurations $(\ell,\valuation)$ of the
game, where $\ell$ is a location and $\valuation$ is a real valuation
of the clock in $[0,1]$. Let us comment on the optimal strategies for
both players.  From a configuration $(\loc_4,\valuation)$, with
$\valuation\in[0,1]$, $\MaxPl$ better waits until the clock takes
value $1$, before taking the transition to $\loc_f$ (he is forced to
move, by the rule of the game). Hence, $\MaxPl$'s optimal value is
$3(1-\valuation)-7 = -3\valuation -4$ from all configurations
$(\loc_4,\valuation)$. Symmetrically, it is easy to check that
$\MinPl$ better waits as long as possible in $\loc_7$, hence his
optimal value is $-16(1-\valuation)$ from all configurations
$(\loc_7,\valuation)$. However, optimal value functions are not always
\emph{that simple}, see for instance the lower value function of
$\loc_1$ on the right of Figure~\ref{fig:ex-ptg2}, which is a
piecewise affine function. To understand why value functions can be
piecewise affine, consider the sub-game enclosed in the dotted
rectangle in Figure~\ref{fig:ex-ptg2}, and consider the value that
$\MinPl$ can guarantee from a configuration of the form
$(\loc_3,\valuation)$ in this sub-game. Clearly, $\MinPl$ must decide
how long he will spend in $\loc_3$ and whether he will go to $\loc_4$
or $\loc_7$. His optimal value from all $(\loc_3,\valuation)$ is thus
$\inf_{0\leq t\leq 1-\valuation} \min\big(4t+ (-3(\valuation+t)-4), 4t
+ 6 -16(1-(\valuation+t))\big) = \min
(-3\valuation-4,16\valuation-10)$.
Since $16\valuation-10\geq -3\valuation-4$ if and only if
$\valuation \leq 6/19$, the best choice of $\MinPl$ is to move
instantaneously to $\loc_7$ if $\valuation \in [0,6/19]$ and to move
instantaneously to $\loc_4$ if $\valuation \in (6/19,1]$, hence the
value function of $\loc_3$ (in the subgame) is a piecewise affine
function with two pieces.

\subparagraph*{Related work.} \PTG{s} were independently investigated
in \cite{BouCas04} and \cite{AluBer04}. For (non-necessarily
turn-based) \PTG{s} with \emph{non-negative} prices, semi-algorithms
are given to decide the \emph{value problem} that is to say, whether
the lower value of a location (the best cost that $\MinPl$ can
guarantee in valuation $0$), is below a given threshold. They also
showed that, under the \emph{strongly non-Zeno assumption} on prices
(asking the existence of $\kappa>0$ such that every cycle in the
underlying region graph has a cost at least $\kappa$), the proposed
semi-algorithms always terminate. This assumption was justified in
\cite{BriBru05,BouBri06} by showing that, in the absence of non-Zeno
assumption, the \emph{existence problem}, that is to decide whether
$\MinPl$ has a strategy guaranteeing to reach a target location with a
cost below a given threshold, is indeed undecidable for \PTG{s} with
non-negative prices and three or more clocks. This result was recently
extended in \cite{BJMr14} to show that the \emph{value problem} is
also undecidable for \PTG{s} with non-negative prices and four or more
clocks. In \cite{BCJ09}, the undecidability of the existence problem
has also been shown for \PTG{s} with arbitrary price-rates (without
prices on transitions), and two or more clocks.
On a positive side, the value problem was shown decidable by
\cite{BouLar06} for \PTG{s} with one clock when the prices are
non-negative: a 3-exponential time algorithm was first proposed,
further refined in \cite{Rut11,DueIbs13} into an exponential time
algorithm. The key point of those algorithms is to reduce the problem
to the computation of optimal values in a restricted family of \PTG{s}
called \emph{Simple Priced Timed Games} (\SPTG{s} for short), where
the underlying automata contain no guard, no reset, and the play is
forced to stop after one time unit.  More precisely, the \PTG is
decomposed into a sequence of \SPTG{s} whose value functions are
computed and re-assembled to yield the value function of the original
\PTG.  Alternatively, and with radically different techniques, a
pseudo-polynomial time algorithm to solve one-clock \PTG{s} with
arbitrary prices on transitions, and price-rates restricted to two
values amongst $\{-d,0,+d\}$ (with $d\in\N$) was given in
\cite{BGKMMT14}.

\subparagraph*{Contributions.} Following the decidability results
sketched above, we consider \PTG{s} with one clock. We extend those
results by considering arbitrary (positive and negative)
prices. Indeed, all previous works on \PTG{s} with only one clock
(except \cite{BGKMMT14}) have considered non-negative weights only,
and the status of the more general case with arbitrary weights has so
far remained elusive. Yet, arbitrary weights are an important modeling
feature. Consider, for instance, a system which can consume but also
produce energy at different rates. In this case, energy consumption
could be modeled as a positive price-rate, and production by a
negative price-rate. We propose an \emph{exponential time algorithm to
  compute the value of one-clock \SPTG{s} with arbitrary
  weights}. While this result might sound limited due to the
restricted class of simple \PTG{s} we can handle, we recall that the
previous works mentioned above \cite{BouLar06,Rut11,DueIbs13} have
demonstrated that solving \SPTG{s} is a key result towards solving
more general \PTG{s}. Moreover, this algorithm is, as far as we know,
the first to handle the full class of \SPTG{s} with arbitrary weights,
and we note that the solutions (either the algorithms or the proofs)
known so far do not generalise to this case. Finally, as a side
result, this algorithm allows us to solve the more general class of
\emph{reset-acyclic} one-clock \PTG{s} that we introduce. Thus,
although we can not (yet) solve the whole class of \PTG{s} with
arbitrary weights, our result may be seen as a potentially important
milestone towards this goal.

Some proofs and technical details are in the Appendix.

\section{Priced timed games: syntax, semantics, and preliminary
  results}
\label{sec:ptg}

\subparagraph*{Notations and definitions.}  Let $x$ denote a positive
real-valued variable called \emph{clock}. A \emph{guard} (or
\emph{clock constraint}) is an interval with endpoints in
$\N\cup\{+\infty\}$. We often abbreviate guards, for instance
$x\leq 5$ instead of $[0,5]$.  Let $S\subseteq \Guard{x}$ be a finite
set of guards. We let $\den{S}=\bigcup_{I\in S} I$. Assuming
$M_0=0<M_1<\cdots<M_k$ are all the endpoints of the intervals in~$S$
(to which we add $0$), we let
$\regcst{S}=\{(M_i,M_{i+1})\mid 0\leq i\leq k-1\} \cup \{ \{M_i\}\mid
0\leq i\leq k\}$
be the set of \emph{regions} of $S$. Observe that $\regcst{S}$ is also
a set of guards.

We rely on the notion of \emph{cost
  function} to formalise the notion of optimal value function sketched
in the introduction.
Formally, for a set of guards $S\subseteq \Guard{x}$, a \emph{cost
  function} over $S$ is a function
$f\colon \den{\regcst{S}} \to \Rbar= \R\cup \{+\infty,-\infty\}$ such
that over all regions $r\in\regcst{S}$, $f$ is either infinite or a
continuous piecewise affine function, with a finite set of cutpoints
(points where the first derivative is not defined)
$\{\kappa_1,\ldots,\kappa_p\}\subseteq \Q$, and with
$f(\kappa_i)\in\Q$ for all $1\leq i\leq p$. In particular, if
$f(r)=\{f(\valuation)\mid \valuation\in r\}$ contains $+\infty$
(respectively, $-\infty$) for some region $r$, then $f(r)=\{+\infty\}$
($f(r)=\{-\infty\}$). We denote by $\CF{S}$ the set of all cost
functions over $S$. In our algorithm to solve \SPTG{s}, we will need
to combine cost functions thanks to the $\opcf$ operator. Let
$f\in\CF{S}$ and $f'\in\CF{S'}$ be two costs functions on set of
guards $S,S'\subseteq \Guard{x}$, such that $\den S \cap \den{S'}$ is
a singleton. We let $f\opcf f'$ be the cost function in
$\CF{S\cup S'}$ such that $(f\opcf f')(\valuation)=f(\valuation)$ for
all $\valuation\in\den{\regcst S}$, and
$(f\opcf f')(\valuation)=f'(\valuation)$ for all
$\valuation\in\den{\regcst{S'}}\setminus\den{\regcst{S}}$.

We consider an extended notion of one-clock priced timed games
(\PTG{s} for short) allowing for the use of \emph{urgent locations},
where only a zero delay can be spent, and \emph{final cost functions}
which are associated with each final location and incur an extra cost
to be paid when ending the game in this location. Formally, a \PTG
$\game$ is a tuple $(\LocsMin, \LocsMax, \LocsFin, \LocsUrg, \fgoalvec,
\transitions, \price)$ where
\begin{inparaenum}[]
\item $\LocsMin$ (respectively, $\LocsMax$) is a finite set of
  \emph{locations} for player $\MinPl$ (respectively, $\MaxPl$), with
  $\LocsMin\cap\LocsMax=\emptyset$;
\item $\LocsFin$ is a finite set of \emph{final} locations, and we let
  $\Locs = \LocsMin \cup \LocsMax\cup\LocsFin$ be the whole location
  space;
\item $\LocsUrg \subseteq \Locs\setminus\LocsFin$ indicates
  \emph{urgent} locations\footnote{Here we differ from~\cite{BouLar06}
    where $\LocsUrg \subseteq \LocsMax$.}; 
\item
  $\transitions \subseteq (\Locs\setminus \LocsFin)\times \Guard{x}
  \times \{\top,\bot\} \times \Locs$
  is a finite set of \emph{transitions};
\item $\fgoalvec = (\fgoal_\loc)_{\loc\in\LocsFin}$ associates to each
  $\loc\in \LocsFin$ its \emph{final cost function}, that is an
  affine\footnote{The affine restriction on final cost function is to
    simplify our further arguments, though we do believe that all of
    our results could be adapted to cope with general cost functions.}
  cost function $\fgoal_\loc$ over
  $\cstgame = \{I\mid \exists \loc, R, \loc': (\loc,I,R,\loc')\in
  \transitions\}$;
\item $\price\colon \Locs \cup \transitions \to \Z$ mapping an integer
  \emph{price} to each location---its \emph{price-rate}---and
  transition.
\end{inparaenum}

Intuitively, a transition $(\loc,I,R,\loc')$ changes the current
location from $\loc$ to $\loc'$ if the clock has value in $I$ and the
clock is reset according to the Boolean $R$. We assume that, in all
\PTG{s}, the clock $x$ is \emph{bounded}, i.e., there is $M\in\N$ such
that for all guards $I\in \cstgame$,
$I\subseteq [0,M]$.\footnote{Observe that this last restriction is
  \emph{not} without loss of generality in the case of \PTG{s}. While
  all timed automata $\mathcal{A}$ can be turned into an equivalent
  (with respect to reachability properties) $\mathcal{A}'$ whose
  clocks are bounded \cite{BehFeh01}, this technique can not be
  applied to \PTG{s}, in particular with arbitrary prices.} We denote
by $\reggame$ the set $\regcst{\cstgame}$ of \emph{regions of}
$\game$.  We further denote\footnote{Throughout the paper, we often
  drop the $\game$ in the subscript of several notations when the game
  is clear from the context.} by $\maxPriceTrans_\game$,
$\maxPriceLoc_\game$ and $\maxPriceFin_\game$ respectively the values
$\max_{\transition\in\transitions}|\price(\transition)|$,
$\max_{\loc\in\Locs} |\price(\loc)|$ and
$\sup_{\valuation\in [0,M]} \max_{\loc\in\Locs}
|\fgoal_\loc(\valuation)|=\max_{\loc\in\Locs} \max(
|\fgoal_\loc(0)|,|\fgoal_\loc(M)|)$.
That is, $\maxPriceTrans_\game$, $\maxPriceLoc_\game$ and
$\maxPriceFin_\game$ are the largest absolute values of the location
prices, transition prices and final cost functions.

Let
$\game = (\LocsMin, \LocsMax, \LocsFin, \LocsUrg, \fgoalvec,
\transitions, \price)$
be a \PTG. A \emph{configuration} of $\game$ is a pair
$s=(\loc,\valuation)\in \Locs \times \Rplus$. We denote by \confgame
the set of configurations of $\game$. Let $(\loc,\valuation)$ and
$(\loc',\valuation')$ be two configurations. Let
$\transition=(\loc,I,R,\loc')\in\transitions$ be a transition of
$\game$ and $t\in\Rplus$ be a delay. Then, there is a
$(t,\transition)$-transition from $(\loc,\valuation)$ to
$(\loc',\valuation')$ with cost $c$, denoted by
$(\loc, \valuation) \xrightarrow{t,\transition,c}(\loc',
\valuation')$, if \begin{inparaenum}[$(i)$]
\item $\loc\in\LocsUrg$ implies $t=0$; 
\item $\valuation+t\in I$; 
\item $R=\top$ implies $\valuation'=0$; 
\item $R=\bot$ implies $\valuation'=\valuation+t$;
\item $c=\price(\transition)+t\times\price(\loc)$.
\end{inparaenum}
Observe that the cost of $(t,\transition)$ takes into account the
price-rate of~$\loc$, the delay spent in $\loc$, and the price of
$\transition$. We assume that the game has no deadlock: for all
$s\in\confgame$, there are $(t,\transition,c)$ and $s'\in\confgame$
such that $s\xrightarrow{t,\transition,c} s'$. Finally, we write
$s\xrightarrow{c} s'$ whenever there are $t$ and $\transition$ such
that $s\xrightarrow{t,\transition,c} s'$.
A \emph{play} of \game is a finite or infinite path
$\rho = (\loc_0,\valuation_0)\xrightarrow{c_0}
(\loc_1,\valuation_1)\xrightarrow{c_1}(\loc_2,\valuation_2) \cdots$.
For a finite play
$\rho = (\loc_0,\valuation_0)\xrightarrow{c_0}
(\loc_1,\valuation_1)\xrightarrow{c_1}(\loc_2,\valuation_2) \cdots
\xrightarrow{c_{n-1}} (\loc_n,\valuation_n)$,
we let $\len{\rho}=n$. For an infinite play
$\rho = (\loc_0,\valuation_0)\xrightarrow{c_0}
(\loc_1,\valuation_1)\xrightarrow{c_1}(\loc_2,\valuation_2) \cdots$,
we let $\len{\rho}$ be the least position $i$ such that
$\loc_i\in \LocsFin$ if such a position exists, and
$\len{\rho}=+\infty$ otherwise. Then, we let $\costgame{\game}{\rho}$
be the \emph{cost} of $\rho$, with $\costgame{\game}{\rho}=+\infty$ if
$\len{\rho}=+\infty$, and
$\costgame{\game}{\rho}=\sum^{{|\rho|}-1}_{i=0} c_i +
\fgoal_{\loc_{|\rho|}}(\valuation_{|\rho|})$
otherwise. 

A \emph{strategy} for player $\MinPl$ is a function $\stratmin$
mapping every finite play ending in location of $\MinPl$ to a pair
$(t,\transition)\in \Rplus\times\transitions$, indicating what
$\MinPl$ should play. We also request that the strategy proposes only
valid pairs $(t,\transition)$, i.e., that for all runs $\run$ ending
in $(\loc,\valuation)$, $\stratmin(\run) = (t,(\loc,I,R,\loc'))$
implies that $\valuation + t \in I$. Strategies $\stratmax$ of player
$\MaxPl$ are defined accordingly. We let $\stratsofmingame{\game}$ and
$\stratsofmaxgame{\game}$ be the sets of strategies of $\MinPl$ and
$\MaxPl$, respectively. A pair of strategies
$(\stratmin,\stratmax)\in \stratsofmingame{\game}\times
\stratsofmaxgame{\game}$
is called a \emph{profile of strategies}. Together with an initial
configuration $s_0=(\loc_0,\valuation_0)$, it defines a unique play
$\Play{s_0,\stratmin,\stratmax}=
s_0\xrightarrow{c_0}s_1\xrightarrow{c_1} s_2 \cdots s_k
\xrightarrow{c_k} \cdots$
where for all $j\geq 0$, $s_{j+1}$ is the unique configuration such
that $s_j\xrightarrow{t_j,\transition_j,c_j}s_{j+1}$ with
$(t_j,\transition_j)=\stratmin(s_0\xrightarrow{c_0}s_1\cdots
s_{j-1}\xrightarrow{c_{j-1}} s_{j})$
if $\loc_j\in\LocsMin$; and
$(t_j,\transition_j)=\stratmax(s_0\xrightarrow{c_0}s_1\cdots
s_{j-1}\xrightarrow{c_{j-1}} s_{j})$
if $\loc_j\in\LocsMax$.  We let $\Play{\stratmin}$ (respectively,
$\Play{s_0,\stratmin}$) be the set of plays that conform with
$\stratmin$ (and start in~$s_0$).

As sketched in the introduction, we consider optimal reachability-cost
games on \PTG{s}, where the aim of player $\MinPl$ is to reach a
location of $\LocsFin$ while minimising the cost. To formalise this
objective, we let the value of a strategy $\stratmin$ for $\MinPl$ be
the function $\valgs{\game}{\stratmin}\colon \confgame\to\Rbar$ such
that for all $s\in\confgame$:
$\valgs{\game}{\stratmin}(s)= \sup_{\stratmax\in\stratsofmax}
\cost{\Play{s,\stratmin,\stratmax}}$.
Intuitively, $\valgs{\game}{\stratmin}(s)$ is the largest value that
$\MaxPl$ can achieve when playing against strategy $\stratmin$ of
$\MinPl$ (it is thus a worst case from the point of view of
$\MinPl$). Symmetrically, for $\stratmax\in\stratsofmax$,
$\valgs{\game}{\stratmax}(s)= \inf_{\stratmin\in\stratsofmin}
\cost{\Play{s,\stratmin,\stratmax}}$,
for all $s\in\confgame$. Then, the \emph{upper and lower values} of
$\game$ are respectively the functions
$\uvalgame\colon \confgame\to\Rbar$ and
$\lvalgame\colon \confgame\to\Rbar$ where, for all $s\in \confgame$,
$\uvalgame(s)= \inf_{\stratmin\in\stratsofmin}
\valgs{\game}{\stratmin}(s)$
and
$\lvalgame(s)= \sup_{\stratmax\in\stratsofmax}
\valgs{\game}{\stratmax}(s)$.
We say that a game is \emph{determined} if the lower and the upper
values match for every configuration $s$, and in this case, we say
that the optimal value $\valgame$ of the game $\game$ exists, defined
by $\valgame = \lvalgame = \uvalgame$.  A strategy $\stratmin$ of
$\MinPl$ is \emph{optimal} (respectively,
$\varepsilon$-\emph{optimal}) if $\valgs{\game}{\stratmin}=\uvalgame$
($\valgs{\game}{\stratmin}\leq \uvalgame+\varepsilon$), i.e.,
$\stratmin$ ensures that the cost of the plays will be at most
$\uvalgame$ ($\uvalgame+\varepsilon$). Symmetrically, a strategy
$\stratmax$ of $\MaxPl$ is \emph{optimal} (respectively,
$\varepsilon$-\emph{optimal}) if $\valgs{\game}{\stratmax}=\lvalgame$
($\valgs{\game}{\stratmax}\geq\lvalgame-\varepsilon$).

\subparagraph*{Properties of the value.} Let us now prove useful
preliminary properties of the value function of \PTG{s}, that---as far
as we know---had hitherto never been established. Using a general
determinacy result by Gale and Stewart \cite{GS1953}, we can show that
\PTG{s} (with one clock) are \emph{determined}. Hence, the value
function $\valgame$ exists for all \PTG $\game$. We can further show
that, for all locations~$\loc$, $\valgame(\loc)$ is a \emph{piecewise
  continuous function} that might exhibit discontinuities \emph{only
  on the borders of the regions} of $\reggame$ (where $\valgame(\loc)$
is the function such that
$\valgame(\loc)(\valuation)=\valgame(\loc,\valuation)$ for all
$\valuation\in\Rplus$). See Appendix~\ref{app:continuity-of-val} for
detailed proofs of these results. The continuity holds only in the
case of \PTG{s} with a single clock. An example with two clocks and a
value function exhibiting discontinuities inside a region is in
Appendix~\ref{app:pas-continu2}.

\begin{theorem}
  \label{thm:determined} \label{prop:continuity-of-val} For all
  (one-clock) \PTG{s} $\game$:
  \begin{inparaenum}[$(i)$]
  \item $\uvalgame=\lvalgame$, i.e., \PTG{s} are \emph{determined}; and
  \item for all $r\in \reggame$, for all $\loc\in\locs$,
    $\valgame(\loc)$ is either infinite or continuous over $r$.
  \end{inparaenum}
\end{theorem}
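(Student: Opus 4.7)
The plan is to handle parts~(i) and~(ii) with complementary techniques, while exploiting a useful interaction between them. For part~(i), the strategy is to reduce the quantitative reachability-cost game to a family of qualitative games indexed by a cost threshold. For each $c\in\R$, let $\game_c$ denote the game on the same arena where $\MinPl$ wins iff the play reaches $\LocsFin$ with total cost at most~$c$. This winning condition is a Borel subset of the space of infinite plays. I would then cast $\game_c$ in a countably-branching form by arguing that, without loss of generality, both players restrict their delays to rational numbers (using boundedness of the clock, finiteness of the guard set, and the local regularity exposed by part~(ii)). Gale--Stewart determinacy~\cite{GS1953}, or more directly Martin's Borel determinacy, then applies and yields a winner for $\game_c$ from every configuration. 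Comparing threshold games gives
\[
\uvalgame(s) \;=\; \inf\bigl\{c \in \R \mid \MinPl \text{ wins } \game_c \text{ from } s\bigr\} \;=\; \lvalgame(s),
\]
proving determinacy.

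For part~(ii), fix a region $r\in\reggame$ and a location $\loc$ such that $\valgame(\loc,\cdot)$ is finite on $r$. I would prove Lipschitz continuity on $r$ with constant $\maxPriceLoc$ by a \emph{strategy-shifting} argument. Given $\valuation,\valuation'\in r$ and an $\varepsilon$-optimal strategy $\stratmin$ for $\MinPl$ from $(\loc,\valuation)$, one builds $\stratmin'$ from $(\loc,\valuation')$ that imitates $\stratmin$ but adjusts by $\valuation-\valuation'$ the first positive delay preceding the first clock reset. Because $\valuation$ and $\valuation'$ lie in the same region, the sequence of clock values at which transitions fire is preserved, so every guard remains satisfied and every transition stays enabled; once the clock is first reset, the two plays reach the exact same configuration and coincide thereafter. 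The only cost discrepancy, accruing in the shifted delay, is bounded by $\maxPriceLoc\cdot|\valuation-\valuation'|$. Applying the symmetric construction to $\MaxPl$ and letting $\varepsilon\to 0$ yield
\[
\bigl|\valgame(\loc,\valuation)-\valgame(\loc,\valuation')\bigr| \;\leq\; \maxPriceLoc\cdot|\valuation-\valuation'|,
\]
hence continuity on $r$.

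The hardest part is making the shift well-defined when every delay that $\stratmin$ prescribes before the first reset is smaller than $|\valuation-\valuation'|$. I would resolve this by proving continuity pointwise at an arbitrary $\valuation_0\in r$ and choosing $\valuation,\valuation'$ within a small enough neighbourhood of $\valuation_0$: if the prefix up to the first reset is purely instantaneous (a chain of zero-delay transitions, all enabled since guards are constant on $r$), then $\stratmin$ is valid verbatim from $(\loc,\valuation')$ and the reset brings both plays to the identical configuration $(\loc'',0)$ with no cost gap; otherwise some step has a strictly positive delay, and $|\valuation-\valuation'|$ can be taken smaller than it so that the shift absorbs into that single step. The other delicate point is the discretization used for Gale--Stewart in part~(i): the rational-delay reduction leans on the region-wise regularity of the value, which is exactly what part~(ii) supplies; this suggests proving~(ii) first (directly for $\uvalgame$ and $\lvalgame$ separately, since determinacy has not yet been invoked) and then feeding it into~(i). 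Full technical details, as well as the two-clock counterexample showing that intra-region continuity genuinely fails beyond one clock, are deferred to the appendix.
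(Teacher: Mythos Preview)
Your high-level plan matches the paper's: reduce determinacy to threshold games and invoke open determinacy, then prove Lipschitz continuity on each region by a strategy-translation argument. But both parts, as you have written them, contain real problems.

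\textbf{Part~(i).} The detour through a countably-branching game is unnecessary and is where your circularity with part~(ii) comes from. Open determinacy (the Gale--Stewart argument) does not require countable branching: if $\MinPl$ has no winning strategy from a position, then after any move of $\MinPl$, $\MaxPl$ can reach a position where $\MinPl$ still has no winning strategy; iterating this keeps the play inside the closed complement forever. The paper simply observes that the winning set of $\MinPl$ in the threshold game is a union of cones, hence open, and applies this directly. Drop the discretisation and the dependency on~(ii) disappears.

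\textbf{Part~(ii).} The mechanism ``adjust by $\valuation-\valuation'$ the first positive delay preceding the first reset'' does not yield a well-defined strategy translation, because you only control $\MinPl$'s delays. If $\MaxPl$ moves first with a positive delay $t$, the play from $(\loc,\valuation')$ reaches clock value $\valuation'+t$, and nothing forces this to coincide with any clock value reachable from $(\loc,\valuation)$ under $\stratmin$. Your claim that ``the sequence of clock values at which transitions fire is preserved'' is therefore false in general, and the argument collapses before any reset is ever taken. The case of plays that never reset the clock is also not covered by your scheme. The issue you identify as ``hardest'' (all of $\MinPl$'s pre-reset delays being shorter than $|\valuation-\valuation'|$) is a side concern by comparison.

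The paper fixes this by transforming the delay at \emph{every} step, for both players, via $t'=\max(0,\valuation_k+t-\valuation'_k)$: it builds $\stratmin'$ together with a map $g$ from plays of $\stratmin'$ back to plays of $\stratmin$, so that $\MaxPl$'s moves from $(\loc,\valuation')$ are mapped to legal $\MaxPl$ moves from $(\loc,\valuation)$ by the same rule. An inductive invariant then shows that the two current valuations always lie in the same region, that their distance is non-increasing, and that the accumulated cost gap is bounded by $\maxPriceLoc(|\valuation-\valuation'|-|\valuation_k-\valuation'_k|)$. No reset is needed for the bound, and no choice of $\MaxPl$ breaks the correspondence.
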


\subparagraph*{Simple priced timed games.} As sketched in the
introduction, our main contribution is to solve the special case of
simple one-clock priced timed games with arbitrary costs. Formally, an
$\rightpoint$-\SPTG, with $\rightpoint\in \Q^+\cap [0,1]$, is a \PTG
$\game = (\LocsMin, \LocsMax, \LocsFin, \LocsUrg, \fgoalvec,
\transitions, \price)$
such that for all transitions $(\loc,I,R,\loc')\in \transitions$, 
$I=[0,\rightpoint]$ and $R=\bot$. Hence, transitions of
$\rightpoint$-\SPTG{s} are henceforth denoted by $(\loc,\loc')$,
dropping the guard and the reset. Then, an \SPTG is a $1$-\SPTG. This
paper is devoted mainly to proving the following theorem on \SPTG{s}:

\begin{theorem}\label{thm:main-result}
  Let $\game$ be an \SPTG. Then, for all locations $\loc \in \locs$,
  the function $\valgame(\loc)$ is either infinite, or continuous and
  piecewise-affine with at most an exponential number of
  cutpoints. The value functions for all locations, as well as a pair
  of optimal strategies $(\stratmin,\stratmax)$ (that always exist if
  no values are infinite) can be computed in exponential time.
\end{theorem}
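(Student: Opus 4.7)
The plan is a right-to-left sweep algorithm: I maintain an intermediate point $\rightpoint \in \Q^+ \cap [0,1]$ such that $\valgame(\loc,\cdot)$ is already known as a piecewise-affine cost function on $[\rightpoint,1]$ for every $\loc$, and iteratively push $\rightpoint$ leftwards until $\rightpoint=0$. The sweep is initialised at $\rightpoint=1$, where no positive delay is possible (every guard is $[0,1]$), so the $\rightpoint$-\SPTG collapses to a finite turn-based graph game whose edges have cost $\price((\loc,\loc')) + \fgoal_{\loc'}(1)$ if $\loc' \in \LocsFin$ and whose remaining structure is purely discrete; this shortest-path-style game can be solved by a classical fixed-point iteration (a sub-routine I call \textsf{solveInstant}) in polynomial time, yielding a rational value at $\valuation = 1$ for every non-final location, and similarly producing discrete optimal moves for both players.

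Given the piecewise-affine value $\valgame(\loc',\cdot)$ already known on $[\rightpoint,1]$ for every $\loc'$, I would extend the computation onto a left neighbourhood $[\rightpoint',\rightpoint)$ of $\rightpoint$ on which the locally optimal choice does not change. Concretely, for $\loc \in \LocsMin$ the Bellman equation reads
\[
  \valgame(\loc, \valuation) = \inf_{t \in [0,\rightpoint - \valuation]} \min_{(\loc,\loc') \in \transitions}
  \bigl( t\,\price(\loc) + \price((\loc,\loc')) + \valgame(\loc',\valuation + t) \bigr),
\]
with the dual $\sup/\max$ formula for $\loc \in \LocsMax$, and the additional constraint $t=0$ if $\loc \in \LocsUrg$. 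The right-hand side is the lower envelope of finitely many affine maps of $\valuation$ obtained by composing an affine function with the already computed $\valgame(\loc',\cdot)$ and optimising over the successor transition and the delay variable $t$; it is therefore itself piecewise-affine. The largest $\rightpoint'<\rightpoint$ at which the ``active piece'' at some location changes --- because two candidate affine pieces cross, because an existing cutpoint of some $\valgame(\loc',\cdot)$ is hit, or because the optimal delay moves from a boundary of $[0,\rightpoint-\valuation]$ into its interior or vice versa --- is a rational number effectively computable from the data. I would then restart the procedure from $\rightpoint'$, gluing the newly computed part to the previously computed one via the $\opcf$ operator. Determinacy and continuity on regions (Theorem~\ref{thm:determined}) justify that the piecewise-affine function built this way coincides with $\valgame$, and the algorithm records at each step the transition and delay realising the (inf/sup)-(min/max), which produces a pair of optimal strategies whenever the values remain finite.

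The hard part will be bounding the total number of cutpoints introduced by the sweep by a single exponential, and correctly identifying the ``next'' cutpoint in the presence of \emph{arbitrary} (positive and negative) prices. When all prices are non-negative, the value function is monotone and the convexity arguments from \cite{BouLar06,Rut11,DueIbs13} yield polynomial per-location bounds; with negative price-rates the value may genuinely alternate --- a $\MinPl$ location may prefer to leave immediately on one sub-region and to wait and exploit a negative price-rate on another --- producing new cutpoints at each alternation, and the value may even become $-\infty$ on an entire region, forcing a dedicated pre-processing step to detect such configurations before the sweep. I expect to control the cutpoint count combinatorially by enumerating, at each location, the possible \emph{types} of locally optimal behaviour (which successor transition is selected and whether the optimal delay lies at a boundary or in the interior of the admissible interval), showing that the number of types is polynomial in $|\transitions|$ and in the number of already-existing right-cutpoints, and concluding by induction on the sweep. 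Combined with explicit rational bit-size bounds maintained throughout, this is what should push the overall running time into exponential time, matching the bound of the statement, and yield the advertised exponential bound on the number of cutpoints of each $\valgame(\loc)$.
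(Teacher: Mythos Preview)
Your high-level sweep idea matches the paper's Algorithm~\ref{alg:solve}, but two genuine gaps remain.

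\textbf{Circularity in the Bellman step.} Your recursion
\[
  \valgame(\loc,\valuation)=\inf_{t\in[0,\rightpoint-\valuation]}\min_{(\loc,\loc')}\bigl(t\,\price(\loc)+\price((\loc,\loc'))+\valgame(\loc',\valuation+t)\bigr)
\]
needs $\valgame(\loc',\valuation+t)$ for $\valuation+t<\rightpoint$, which is \emph{not} ``already computed'': with $t=0$ (always allowed, and forced at urgent locations) the values $\{\valgame(\loc,\valuation)\}_{\loc}$ at a single $\valuation<\rightpoint$ are mutually recursive through zero-delay transitions. You cannot read them off as a lower envelope of known affine pieces; you must solve a fixed point. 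The paper breaks this circularity with the $\Waiting$ construction: it adds, for each non-urgent $\loc$, a fresh final location $\loc^f$ with $\fgoal_{\loc^f}(\valuation)=(\rightpoint-\valuation)\price(\loc)+\valgame(\loc,\rightpoint)$, then makes \emph{all} locations urgent. The resulting $\rightpoint$-\SPTG $\game_{\Locs\setminus\LocsUrg,\rightpoint}$ has only urgent locations, so at each fixed $\valuation$ it is an untimed min-cost reachability game, solvable by \SolveInstant. The values of this auxiliary game agree with $\valgame$ on an interval $[\Next(\rightpoint),\rightpoint]$ characterised by a slope condition (Proposition~\ref{lem:SameValue}). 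Without this reduction your ``extend leftwards'' step is not well-defined.

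\textbf{The cutpoint bound.} Your plan to bound cutpoints by counting ``types of locally optimal behaviour, polynomial in $|\transitions|$ and in the number of already-existing right-cutpoints'' is exactly the kind of argument that works for non-negative prices and fails here: with arbitrary signs, new cutpoints on $[\rightpoint',\rightpoint)$ can feed back into later iterations, and an induction whose bound depends on previously created cutpoints does not close. The paper's termination argument is quite different and non-obvious: it first proves finite optimality by a separate induction that makes locations urgent \emph{one at a time}, always picking a location $\locMin$ of \emph{minimum price-rate}. For that choice, Lemma~\ref{lem:strictlySmaller} shows the sequence $r_0>r_1>\cdots$ stabilises in at most $|\F_\game|^2+2$ steps, where $\F_\game$ is a fixed pseudo-polynomial family of affine functions (translates of the final cost functions). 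This yields the exponential cutpoint bound of Proposition~\ref{prop:ExpCutpoints}. Only \emph{after} finite optimality is established does Lemma~\ref{lem:r_2-r_1-r_0} guarantee that the actual algorithm (which makes all locations urgent at once) discovers at least one genuine cutpoint of $\valgame$ per outer iteration, hence terminates. Your proposal has no analogue of $\F_\game$, no role for the minimum-price location, and no argument decoupling the bound from the sweep itself.
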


Before sketching the proof of this theorem, we discuss a class of
(simple) strategies that are sufficient to play optimally. Roughly
speaking, \MaxPl has always a \emph{memoryless} optimal strategy,
while \MinPl might need \emph{(finite) memory} to play optimally---it
is already the case in untimed quantitative reachability games with
arbitrary weights (see Appendix~\ref{app:ex-JF}). Moreover, these
strategies are finitely representable (recall that even a memoryless
strategy depends on the current \emph{configuration} and that there
are infinitely many in our time setting).

We formalise $\MaxPl$'s strategies with the notion of \emph{finite
  positional strategy} (FP-strategy): they are memoryless strategies
$\strat$ (i.e., for all finite plays
$\rho_1 = \rho_1' \xrightarrow{c_1} s$ and
$\rho_2 = \rho_2' \xrightarrow{c_2} s$ ending in the same
configuration, we have $\strat(\rho_1) = \strat(\rho_2)$), such that
for all locations $\loc$, there exists a finite sequence of rationals
$0\leq \valuation^\loc_1<\valuation^\loc_2<\cdots <
\valuation^\loc_k=1$
and a finite sequence of transitions
$\transition_1,\ldots,\transition_k\in\transitions$ such that
\begin{inparaenum}[$(i)$]
\item for all $1\leq i \leq k$, for all
  $\valuation\in (\valuation^\loc_{i-1},\valuation^\loc_{i}]$, either
  $\strat(\loc,\valuation) = (0,\transition_i)$, or
  $\strat(q,\valuation) = (\valuation^\loc_i-\valuation,
  \transition_i)$
  (assuming $\valuation^\loc_0=\min(0,\valuation^\loc_1)$); and
\item if $\valuation^\loc_1> 0$, then
  $\strat(\loc,0) = (\valuation^\loc_1,\transition_1)$.
\end{inparaenum}
We let $\points(\strat)$ be the set of $\valuation^\loc_i$ for all
$\loc$ and $i$, and $\intervals(\strat)$ be the set of all
successive intervals generated by $\points(\strat)$. 
Finally, we let $|\strat|=|\intervals(\strat)|$ be the size of
$\strat$. Intuitively, in an interval
$(\valuation^\loc_{i-1},\valuation^\loc_i]$, $\strat$ always returns
the same move: either to take \emph{immediately} $\transition_i$ or to
wait until the clock reaches the endpoint $\valuation^\loc_i$ and then
take $\transition_i$.

\MinPl, however may require memory to play optimally. Informally, we
will compute optimal \emph{switching} strategies, as introduced in
\cite{BGHM14} (in the untimed setting). A switching strategy is
described by a pair $(\stratmin^1,\stratmin^2)$ of FP-strategies and a
switch threshold $K$, and consists in playing $\stratmin^1$ until the
total accumulated cost of the discrete transitions is below $K$; and
then to \emph{switch} to strategy $\stratmin^2$. The role of
$\stratmin^2$ is to ensure reaching a final location: it is thus a
(classical) attractor strategy. The role of $\stratmin^1$, on the
other hand, is to allow \MinPl to decrease the cost low enough
(possibly by forcing negative cycles) to secure a cost below $K$, and
the computation of $\stratmin^1$ is thus the critical point in the
computation of an optimal switching strategy. To characterise
$\stratmin^1$, we introduce the notion of negative cycle strategy
(NC-strategy). 
Formally, an NC-strategy $\stratmin$ of \MinPl is an FP-strategy such
that for all runs
$\rho = (\loc_1,\valuation)\xrightarrow{c_1} \cdots
\xrightarrow{c_{k-1}} (\loc_k,\valuation')\in \Play{\stratmin}$
with $\loc_1=\loc_k$, and $\valuation,\valuation'$ in the same
interval of $\intervals(\stratmin)$, the sum of prices of
\emph{discrete transitions} is at most $-1$, i.e.,
$\price(\loc_1,\loc_2)+\cdots+\price(\loc_{k-1},\loc_k) \leq -1$. To
characterise the fact that $\stratmin^1$ must allow \MinPl to reach a
cost which is \emph{small enough, without necessarily reaching a
  target state}, we define the \emph{fake value} of an NC-strategy
$\stratmin$ from a configuration $s$ as
$\fakeValue_\game^{\stratmin}(s) = \sup \{\cost{\run} \mid \rho \in
\Play{s,\stratmin}, \rho \textrm{ reaches a target}\}$,
i.e., the value obtained when \emph{ignoring} the $\stratmin$-induced
plays that \emph{do not} reach the target.  Thus, clearly,
$\fakeValue_\game^{\stratmin}(s) \leq \val^{\stratmin}(s)$. We say
that an NC-strategy is \emph{fake-optimal} if its fake value, in every
configuration, is equal to the optimal value of the configuration in
the game. This is justified by the following result whose proof relies
on the switching strategies described before (see a detailed proof in
Appendix~\ref{app:fake-optimality}):

\begin{lemma}\label{lem:fake-optimality}
  If $\val_\game(\loc,\valuation)\neq +\infty$, for all $\loc$ and
  $\valuation$, then for all NC-strategies $\stratmin$, there is a
  strategy $\stratmin'$ such that
  $\Value_\game^{\stratmin'}(s)\leq\fakeValue_\game^{\stratmin}(s)$
  for all configurations $s$. In particular, if $\stratmin$ is a
  fake-optimal NC-strategy, then $\stratmin'$ is an optimal
  (switching) strategy of the \SPTG.
\end{lemma}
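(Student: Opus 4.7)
My plan is to construct $\stratmin'$ as the \emph{switching strategy} sketched in the paragraph preceding the lemma: play $\stratmin$ until a suitable threshold on the accumulated cost is crossed, then switch to an attractor strategy $\stratmin^2$ of $\MinPl$ toward $\LocsFin$. The hypothesis $\val_\game(s) \neq +\infty$ at every $s$ means $\MinPl$ can force reaching $\LocsFin$ from every configuration, so such an attractor FP-strategy $\stratmin^2$ exists by applying the classical attractor computation on the finite graph refinement of $\game$ induced by the interval decomposition $\intervals(\stratmin)$. Moreover $\stratmin^2$ reaches a target within a bounded number of transitions, and since any play of an \SPTG has total elapsed time at most $1$, the cost contributed by $\stratmin^2$ from any configuration is bounded in absolute value by a uniform constant $B$ depending only on $|\Locs|$, $|\intervals(\stratmin)|$, $\maxPriceTrans$, $\maxPriceLoc$, and $\maxPriceFin$.

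Given an initial configuration $s_0$, I set the switching threshold $K(s_0) = \fakeValue_\game^{\stratmin}(s_0) - B$ and define $\stratmin'$ by: play $\stratmin$ while the accumulated cost remains strictly above $K(s_0)$, and as soon as it drops to $K(s_0)$ or below, switch to $\stratmin^2$. This is a well-defined history-based strategy because $s_0$ and the running cost are recoverable from the play history. To establish $\Value_\game^{\stratmin'}(s_0) \leq \fakeValue_\game^{\stratmin}(s_0)$, fix any $\stratmax \in \stratsofmax$ and consider $\rho = \Play{s_0, \stratmin', \stratmax}$. If no switch happens and $\rho$ reaches a target, then $\rho$ is entirely a $\stratmin$-play reaching a target, so $\cost{\rho} \leq \fakeValue_\game^{\stratmin}(s_0)$ by definition of fake value. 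If a switch happens at step $k$, the prefix $\rho_{\leq k}$ has cost at most $K(s_0)$ by the switching condition and the suffix under $\stratmin^2$ adds at most $B$ while reaching a target, so the total cost is at most $K(s_0) + B = \fakeValue_\game^{\stratmin}(s_0)$. The remaining case---no switch and no target---is ruled out: by pigeonhole on $\Locs \times \intervals(\stratmin)$, any infinite $\stratmin$-play revisits some (location, interval)-pair infinitely often, and by the NC-property each such cycle contributes at most $-1$ to the sum of discrete transition prices; since the elapsed time is bounded by $1$, the location-cost contribution stays bounded by $\maxPriceLoc$ in absolute value throughout, so the accumulated cost tends to $-\infty$ and eventually crosses $K(s_0)$, forcing a switch. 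Specialising to a fake-optimal $\stratmin$ yields $\Value_\game^{\stratmin'}(s) \leq \val_\game(s)$, which combined with the trivial bound $\val_\game \leq \Value_\game^{\stratmin'}$ makes $\stratmin'$ optimal.

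The main obstacle is showing that the ``never switch, never reach target'' case is impossible: this relies crucially on combining the NC-property (every $(\text{location}, \text{interval})$-cycle has discrete cost $\leq -1$) with the \SPTG-specific bound that total elapsed time is at most $1$ (so location costs contribute only a bounded amount), which together force the accumulated cost of any infinite $\stratmin$-play to tend to $-\infty$ and cross any finite threshold. A subtler edge case is $\fakeValue_\game^{\stratmin}(s_0) = -\infty$, i.e.\ no $\stratmin$-play from $s_0$ reaches a target and so $K(s_0) = -\infty$ and the switch is never triggered; this is handled either by the convention $\sup \emptyset = +\infty$ natural in this minimization context (making the inequality vacuous), or by noting that it does not arise for the fake-optimal strategies of the lemma's main corollary, since $\val_\game$ is finite by hypothesis.
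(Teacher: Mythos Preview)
Your proof is correct and follows essentially the same approach as the paper: build the switching strategy that plays $\stratmin$ until the accumulated cost drops below $\fakeValue_\game^{\stratmin}(s_0)-B$ (where $B$ bounds the cost incurred by the attractor), then switch to the attractor; both arguments hinge on the same key observation that the accumulated cost along any infinite $\stratmin$-play tends to $-\infty$, obtained by combining the NC-property with the global time bound of an \SPTG. Your case analysis is somewhat more explicit than the paper's (in particular you treat the ``no switch, target reached'' case separately), but the content is the same.

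One small correction on the edge case: the standard convention is $\sup\emptyset=-\infty$, not $+\infty$, so your first suggested handling of $\fakeValue_\game^{\stratmin}(s_0)=-\infty$ does not work. The paper deals with this case by tacitly invoking the additional hypothesis $\val_\game\neq-\infty$ (which holds wherever the lemma is actually applied, since locations with infinite value are removed beforehand) and then arguing---via the same switching construction with an arbitrary integer threshold $N$---that $\fakeValue_\game^{\stratmin}(s_0)=-\infty$ would force $\val_\game(s_0)=-\infty$, a contradiction. Your second suggestion, that the issue does not arise for fake-optimal $\stratmin$ since $\val_\game$ is finite, covers the ``in particular'' clause but not the general statement as written.
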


Then, an \SPTG is called \emph{finitely optimal} if
\begin{inparaenum}[$(i)$]
\item $\MinPl$ has a fake-optimal NC-strategy;
\item $\MaxPl$ has an optimal FP-strategy; and
\item $\Value_\game(\loc)$ is a cost function, for all locations~$\loc$.
\end{inparaenum} 
The central point in establishing Theorem~\ref{thm:main-result} will
thus be to prove that \textbf{all \SPTG{s} are finitely optimal}, as
this guarantees the existence of well-behaved optimal strategies and
value functions. We will also show that they can be computed in
exponential time.
The proof is by induction on the number of urgent locations of the
\SPTG. In Section~\ref{sec:urgentSPTG}, we address the base case of
\SPTG{s} with urgent locations only (where no time can elapse). Since
these \SPTG{s} are very close to the \emph{untimed} min-cost
reachability games of \cite{BGHM14}, we adapt the algorithm in this
work and obtain the \SolveInstant function
(Algorithm~\ref{algo:value-iteration-fixed}). This function can also
compute $\valgame(\loc,1)$ for all $\loc$ and all games $\game$ (even
with non-urgent locations) since time can not elapse anymore when the
clock has valuation $1$. Next, using the continuity result of
Theorem~\ref{prop:continuity-of-val}, we can detect locations $\loc$
where $\valgame(\loc,\valuation)\in\{+\infty,-\infty\}$, for all
$\valuation\in[0,1]$, and remove them from the game. Finally, in
Section~\ref{sec:solving-sptg} we handle \SPTG{s} with non-urgent
locations by refining the technique of \cite{BouLar06,Rut11} (that
work only on \SPTG{s} with non-negative costs). Compared to
\cite{BouLar06,Rut11}, our algorithm is simpler, being iterative,
instead of recursive.

\section{SPTGs with only urgent locations}
\label{sec:urgentSPTG}

Throughout this section, we consider an $\rightpoint$-\SPTG
$\game= (\LocsMin, \LocsMax, \LocsFin, \LocsUrg, \fgoalvec,
\transitions, \price)$
where all locations are urgent, i.e.,
$\LocsUrg = \LocsMin\cup\LocsMax$. We first explain briefly how we can
compute the value function of the game for a \emph{fixed} clock
valuation $\valuation\in [0,\rightpoint]$ (more precisely, we can
compute the vector $(\val_\game(\loc,\valuation))_{\loc\in\Locs}$).
Since no time can elapse, we can adapt the techniques developed in
\cite{BGHM14} to solve (untimed) \emph{min-cost reachability
  games}. The adaptation consists in taking into account the final
cost functions (see Appendix~\ref{app:urgentSPTG}).  This yields the
function \SolveInstant
(\algorithmcfname~\ref{algo:value-iteration-fixed}), that computes the
vector $(\val_\game(\loc,\valuation))_{\loc\in\Locs}$ for a
fixed~$\valuation$. The results of \cite{BGHM14} also allow us to
compute associated optimal strategies: when
$\val(\loc,\valuation)\notin \{-\infty,+\infty\}$ the optimal strategy
for $\MaxPl$ is memoryless, and the optimal strategy for $\MinPl$ is a
switching strategy $(\stratmin^1,\stratmin^2)$ with a threshold $K$
(as described in the previous section).

\begin{algorithm}[tbp]
  \caption{\texttt{solveInstant}($\game$,$\valuation$)}
  \label{algo:value-iteration-fixed}
  \DontPrintSemicolon%
  \KwIn{$\rightpoint$-\SPTG
    $\game= (\LocsMin, \LocsMax, \LocsFin, \LocsUrg, \fgoalvec, \transitions,
    \price)$, a valuation $\valuation\in [0,\rightpoint]$}%
  \SetKw{value}{\ensuremath{\mathsf{X}}}
  \SetKw{prevvalue}{\ensuremath{\mathsf{X}_{pre}}}
  
  \BlankLine

  \ForEach{$\loc\in\Locs$}{%
    \leIf{$\loc\in \LocsFin$}%
    {$\value(\loc) :=
      \fgoal_\loc(\valuation)$}{$\value(\loc):=+\infty$} %
  }%

  \Repeat{$\value = \prevvalue$}{%
    $\prevvalue := \value$\;%
    \lForEach{$\loc\in\LocsMax$}{$\value(\loc) :=
      \max_{(\loc,\loc')\in\transitions}
      \big(\price(\loc,\loc')+\prevvalue(\loc')\big)$} %
    \lForEach{$\loc\in
      \LocsMin$}{$\value(\loc)
      := \min_{(\loc,\loc')\in\transitions}
      \big(\price(\loc,\loc')+\prevvalue(\loc')\big)$}%
    \lForEach{$\loc\in \Locs$ \emph{such that}
      $\value(\loc) < -(|\Locs|-1)
      \maxPriceTrans-\maxPriceFin$\label{line-infty-RT}\label{line-infty}}%
    {$\value(\loc) := -\infty$\label{line-update}}%
  } %
  \Return{$\value$}
\end{algorithm}

Now let us explain how we can reduce the computation of
$\val_\game(\loc)\colon \valuation\in[0,\rightpoint] \mapsto
\val(\loc,\valuation)$
(for all $\loc$) to a \emph{finite number of calls} to \SolveInstant.
Let $\F_{\game}$ be the set of affine functions over $[0,\rightpoint]$
such that
$\F_{\game} = \{k+\fgoal_\loc\mid \loc\in\LocsFin \land k\in {\cal
  I}\}$,
where
${\cal I}=[-(|\Locs|-1)\maxPriceTrans,|\Locs|\maxPriceTrans]\cap\Z$.
Observe that $\F_\game$ has cardinality $2|\Locs|^2\maxPriceTrans$,
i.e., pseudo-polynomial in the size of $\game$. From \cite{BGHM14}, we
conclude that the functions in $\F_\game$ are sufficient to
characterise $\val_\game$, in the following sense: for all
$\loc\in \Locs$ and $\valuation\in [0,\rightpoint]$ such that
$\val(\loc,\valuation)\notin\{-\infty,+\infty\}$, there is
$f\in\F_{\game}$ with $\val(\loc,\valuation)=f(\valuation)$ (see
Lemma~\ref{lem:for-all-val-there-is-f-in-F},
Appendix~\ref{app:urgentSPTG} for the details). Using the continuity
of $\val_\game$ (Theorem~\ref{prop:continuity-of-val}), we show that
all the cutpoints of $\val_\game$ are intersections of functions
from~$\F_{\game}$, i.e., belong to the set of \emph{possible
  cutpoints}
$\posscp_\game =\{\valuation\in [0,\rightpoint]\mid \exists
f_1,f_2\in\F_{\game}\quad f_1\neq f_2\land
f_1(\valuation)=f_2(\valuation)\}$.
Observe that $\posscp_\game$ contains at most
$|\F_{\game}|^2=4|\LocsFin|^4(\maxPriceTrans)^2$ points (also a
pseudo-polynomial in the size of~$\game$) since all functions in
$\F_{\game}$ are affine, and can thus intersect at most once with
every other function.  Moreover, $\posscp_\game\subseteq \Q$, since
all functions of $\F_{\game}$ take rational values in $0$ and
$\rightpoint\in\Q$. Thus, for all $\loc$, $\val_\game(\loc)$ is a cost
function (with cutpoints in $\posscp_\game$ and pieces from
$\F_\game$).  Since $\val_\game(\loc)$ is a piecewise affine function,
we can characterise it completely by computing only its value on its
cutpoints. Hence, we can reconstruct $\val_\game(\loc)$ by calling
\SolveInstant on each rational valuation
$\valuation \in \posscp_\game$.  From the optimal strategies computed
along \SolveInstant \cite{BGHM14}, we can also reconstruct a
fake-optimal NC-strategy for $\MinPl$ and an optimal FP-strategy for
$\MaxPl$, hence:
\begin{proposition}\label{prop:baseCase}
  Every $r$-\SPTG $\game$ with only urgent locations is finitely
  optimal. Moreover, for all locations~$\loc$, the piecewise affine
  function $\val_\game(\loc)$ has cutpoints in $\posscp_{\game}$ of
  cardinality $4|\LocsFin|^4(\maxPriceTrans)^2$, pseudo-polynomial in
  the size of $\game$.
\end{proposition}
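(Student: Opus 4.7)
The plan is to combine three ingredients: the instant-valuation algorithm \SolveInstant which handles each $\valuation \in [0,\rightpoint]$ as a (slightly modified) untimed min-cost reachability game, a pointwise characterisation of the finite values by the affine pool $\F_\game$, and the regional continuity of $\val_\game$ provided by Theorem~\ref{prop:continuity-of-val}.

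First, I would fix $\valuation\in[0,\rightpoint]$ and observe that since every location is urgent, no time elapses, so the cost of any play is simply the sum of its discrete transition prices plus $\fgoal_{\loc'}(\valuation)$ for the final location reached. Thus $(\val_\game(\loc,\valuation))_{\loc\in\Locs}$ is the value vector of a classical min-cost reachability game on $(\Locs,\transitions)$ with terminal payoffs $\fgoal_\loc(\valuation)$, and \SolveInstant is precisely the value-iteration algorithm of \cite{BGHM14} adapted to this initialisation: repeated application of the $\min/\max$ Bellman operator, with the pruning at line~\ref{line-infty} identifying $-\infty$ values once the iterate drops below the threshold $-(|\Locs|-1)\maxPriceTrans-\maxPriceFin$ (no finite value of smaller magnitude is achievable in the $|\Locs|-1$ transitions that suffice to reach a target along an acyclic path). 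Correctness, termination, and the extraction of a memoryless optimal strategy for $\MaxPl$ together with a switching strategy $(\stratmin^1,\stratmin^2,K)$ for $\MinPl$ are inherited from \cite{BGHM14}.

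Second, I would invoke Lemma~\ref{lem:for-all-val-there-is-f-in-F} (the structural counterpart of \cite{BGHM14}) to show that whenever $\val_\game(\loc,\valuation)$ is finite, it equals $f(\valuation)$ for some $f \in \F_\game$: an optimal play uses at most $|\Locs|-1$ non-looping transitions before either reaching a target or entering a strictly profitable cycle, so its discrete cost is an integer in $[-(|\Locs|-1)\maxPriceTrans,\, |\Locs|\maxPriceTrans]$, and the remaining contribution is a single $\fgoal_{\loc'}(\valuation)$. Combining this pointwise identification with the continuity of $\val_\game(\loc)$ (Theorem~\ref{prop:continuity-of-val}, which holds in the absence of guards or resets over the whole $[0,\rightpoint]$), I would conclude that on any maximal subinterval of $[0,\rightpoint]\setminus\posscp_\game$ no two functions of $\F_\game$ meet, so by continuity $\val_\game(\loc)$ must equal a single $f\in\F_\game$ throughout. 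Hence $\val_\game(\loc)$ is piecewise affine with pieces drawn from $\F_\game$ and cutpoints in $\posscp_\game$; the bound $|\posscp_\game|\leq 4|\LocsFin|^4(\maxPriceTrans)^2$ follows from $|\F_\game|\leq 2|\Locs|^2\maxPriceTrans$ and the fact that two distinct affine functions cross at most once. This establishes item (iii) of finite optimality together with the cutpoint count.

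Finally, to extract global strategies, I would call \SolveInstant at $\valuation = 0$ and at one representative of every maximal open interval of $[0,\rightpoint]\setminus\posscp_\game$. Since on such an interval the value function coincides with a single $f\in\F_\game$, the optimal first transition returned by \SolveInstant at the representative remains optimal throughout the whole interval (any other choice would, by the $\F_\game$-identification, give a strictly larger or smaller value at some point). Stitching these pointwise optima together produces an FP-strategy for $\MaxPl$ (item (ii)), and the $\stratmin^1$-components of \MinPl's switching strategies assemble into an NC-strategy: every cycle traversed within a single interval must have strictly negative discrete cost, otherwise value iteration would have stabilised before pruning any location to $-\infty$. Fake-optimality then follows through Lemma~\ref{lem:fake-optimality}, giving item (i). The main obstacle is this last stitching step: one must argue that the discrete choice made by the algorithm at a sampled point remains optimal across the whole open interval containing it, which rests on the fact that inside such an interval the value function is a single affine function of $\F_\game$, so deviating from the locally optimal transition would violate the $\F_\game$-identification at interior points.
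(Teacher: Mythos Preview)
Your proposal is correct and follows essentially the same route as the paper: reduce each fixed $\valuation$ to an untimed min-cost reachability game handled by \SolveInstant, invoke Lemma~\ref{lem:for-all-val-there-is-f-in-F} together with the continuity of Theorem~\ref{prop:continuity-of-val} to pin every piece of $\val_\game(\loc)$ to some $f\in\F_\game$ with cutpoints in $\posscp_\game$, and then lift the instantwise optimal strategies across each open interval between consecutive points of $\posscp_\game$ (the paper samples at the midpoints $\mu_i$, which is just a concrete choice of representative). Your stitching argument---that on an interval free of $\posscp_\game$-points any affine function of $\F_\game$ agreeing with $f_i^\loc$ at the sample must coincide with it everywhere---is in fact the clean way to make rigorous what the paper leaves as ``easy to check''.

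One genuine slip: your justification for why the assembled $\stratmin^1$-components form an NC-strategy is wrong. You write that a non-negative cycle would mean ``value iteration would have stabilised before pruning any location to $-\infty$'', but once all infinite-value locations have been removed (as you assume), line~\ref{line-infty} of \SolveInstant never fires, so nothing is pruned and this reasoning is vacuous. The NC property of $\stratmin^1$ is not a by-product of the pruning rule; it comes directly from the construction in~\cite{BGHM14}, where $\stratmin^1$ is built so that along any play it induces, each revisit of a location strictly decreases the accumulated integer transition weight---hence every such cycle has discrete cost at most $-1$. You should invoke that construction explicitly rather than the $-\infty$ threshold.
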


\section{Solving general SPTGs}
\label{sec:solving-sptg}

In this section, we consider \SPTG{s} with possibly non-urgent
locations. We first prove that all such \SPTG{s} are finitely
optimal. Then, we introduce Algorithm~\ref{alg:solve} to compute
optimal values and strategies of \SPTG{s}. To the best of our
knowledge, this is the first algorithm to solve \SPTG{s} with
arbitrary weights.  Throughout the section, we fix an \SPTG
$\game = (\LocsMin, \LocsMax, \LocsFin, \LocsUrg, \fgoalvec,
\transitions, \price)$
with possibly non-urgent locations.  Before presenting our core
contributions, let us explain how we can detect locations with
infinite values. As already argued, we can compute $\val(\loc,1)$ for
all $\loc$ assuming all locations are urgent, since time can not
elapse anymore when the clock has valuation $1$. This can be done with
\SolveInstant. Then, by continuity, $\val(\loc,1)=+\infty$
(respectively, $\val(\loc,1)=-\infty$), for some $\loc$ if and only if
$\val(\loc,\valuation)=+\infty$ (respectively,
$\val(\loc,\valuation)=-\infty$) for all $\valuation \in[0,1]$. We
remove from the game all locations with infinite value without
changing the values of other locations (as justified
in~\cite{BGHM14}). Thus, we henceforth assume that
$\val(\loc,\valuation)\in\R$ for all $(\loc,\valuation)$.

\subparagraph*{The $\game_{\Locs',r}$ construction.} To prove finite
optimality of \SPTG{s} and to establish correctness of our algorithm,
we rely in both cases on a construction that consists in decomposing
$\game$ into a sequence of \SPTG{s} with \emph{more urgent
  locations}. Intuitively, a game with more urgent locations is easier
to solve since it is closer to an untimed game (in particular, when
all locations are urgent, we can apply the techniques of
Section~\ref{sec:urgentSPTG}). More precisely, given a set~$\Locs'$ of
non-urgent locations, and a valuation $r_0\in [0,1]$, we will define a
(possibly infinite) sequence of valuations $1=r_0> r_1>\cdots$ and a
sequence $\game_{\Locs',r_0}$, $\game_{\Locs',r_1},\ldots$ of \SPTG{s}
such that
\begin{inparaenum}[$(i)$]
\item all locations of $\game$ are also present in each
  $\game_{\Locs', r_i}$, except that the locations of $\Locs'$ are now
  urgent; and
\item for all $i\geq 0$, the value function of $\game_{\Locs',r_i}$ is
  equal to $\val_\game$ on the interval $[r_{i+1},r_i]$. Hence, we can
  re-construct $\val_\game$ by assembling
  well-chosen parts of the values functions of the
  $\game_{\Locs',r_i}$ (assuming $\inf_i r_i=0$).
\end{inparaenum}
This basic result will be exploited in two directions. First, we prove
by induction on the number of urgent locations that all \SPTG{s} are
finitely optimal, by re-constructing $\val_\game$ (as well as optimal
strategies) as a $\opcf$-concatenation of the value functions of a
finite sequence of \SPTG{s} with one more urgent locations. The base
case, with only urgent locations, is solved by
Proposition~\ref{prop:baseCase}. This construction suggests a
\emph{recursive} algorithm in the spirit of \cite{BouLar06,Rut11} (for
non-negative prices). Second, we show that this recursion can be
\emph{avoided} (see Algorithm~\ref{alg:solve}). Instead of turning
locations urgent one at a time, this algorithm makes them all urgent
and computes directly the sequence of \SPTG{s} with only urgent
locations. Its proof of correctness relies on the finite optimality of
\SPTG{s} and, again, on our basic result linking the values functions
of $\game$ and games $\game_{\Locs',r_i}$.

Let us formalise these constructions. Let $\game$ be an \SPTG, let
$r\in[0,1]$ be an endpoint, and let $\vec x = (x_\loc)_{\loc\in\locs}$
be a vector of rational values. Then, $\Waiting(\game,r,\vec x)$ is an
$r$-\SPTG in which both players may now decide, in all non-urgent
locations $\loc$, to \emph{wait} until the clock takes value $r$, and
then to stop the game, adding the cost $x_\loc$ to the current cost of
the play. Formally,
$\Waiting(\game,r,\vec x) = (\LocsMin,\LocsMax,\LocsFin',\LocsUrg,
\fgoalvec', T', \price')$
is such that
$\LocsFin' = \LocsFin \uplus \{\loc^f \mid \loc\in \locs\setminus
\LocsUrg\}$;
for all $\loc' \in \LocsFin$ and $\valuation\in[0,r]$,
$\fgoal'_{\loc'}(\valuation) =\fgoal_{\loc'}(\valuation)$, for all
$\loc\in\locs\setminus \LocsUrg$,
$\fgoal'_{\loc^f}(\valuation)=(r-\valuation)\cdot
\price(\loc)+x_\loc$;
$T'=T\cup \{(\loc, [0,r], \bot,\loc^f)\mid \loc\in \locs\setminus
\LocsUrg\}$;
for all $\transition\in T'$,
$\price'(\transition)=\price(\transition)$ if $\transition\in T$, and
$\price'(\transition) =0$ otherwise. Then, we let
$\game_r=\Waiting\big(\game,r,(\Value_\game(\loc,r))_{\loc\in\Locs}\big)$,
i.e., the game obtained thanks to $\Waiting$ by letting $\vec x$ be
the value of $\game$ in $r$.  One can check that this first
transformation does not alter the value of the game, for valuations
before $r$:
$\Value_{\game}(\loc,\valuation) = \Value_{\game_r}(\loc,\valuation)$
for all $\valuation\leq r$.

Next, we make locations urgent. For a set
$\Locs'\subseteq \Locs\setminus\LocsUrg$ of non-urgent locations, we
let $\game_{\Locs',r}$ be the \SPTG obtained from $\game_r$ by making
urgent every location $\loc$ of $\Locs'$. Observe that, although all
locations $\loc\in \Locs'$ are now urgent in $\game_{\Locs',r}$, their
clones $\loc^f$ allow the players to wait until $r$. When $\Locs'$ is
a singleton $\{\loc\}$, we write $\game_{\loc,r}$ instead of
$\game_{\{\loc\},r}$.
While the construction of $\game_r$ does not change the value of the
game, introducing urgent locations \emph{does}. Yet, we can
characterise an interval $[a,r]$ on which the value functions of
$\hame=\game_{L',r}$ and $\hame^+=\game_{L'\cup\{\ell\},r}$ coincide,
as stated by the next proposition. The interval $[a,r]$ depends on the
\emph{slopes} of the pieces of $\val_{\hame^+}$ as depicted in
Figure~\ref{fig:slopeBigger}: for each location $\ell$ of $\MinPl$,
the slopes of the pieces of $\val_{\hame^+}$ contained in $[a,r]$
should be $\leq-\price(\loc)$ (and $\geq-\price(\loc)$ when $\loc$
belongs to $\MaxPl$). It is proved by lifting optimal strategies of
$\hame^+$ into $\hame$, and strongly relies on the determinacy result
of Theorem~\ref{thm:determined}:

\begin{proposition}\label{lem:SameValue}
  Let $0\leq a < r \leq 1$, $\Locs'\subseteq \Locs\setminus\LocsUrg$
  and $\loc\notin \Locs'\cup\LocsUrg$ a non-urgent location of \MinPl
  (respectively, \MaxPl). Assume that $\game_{\Locs'\cup\{\loc\},r}$
  is finitely optimal, and for all
  $a\leq \valuation_1<\valuation_2 \leq r$
  \begin{equation}
    \frac{\Value_{\game_{\Locs'\cup\{\loc\},r}}(\loc,\valuation_2) -
      \Value_{\game_{\Locs'\cup\{\loc\},r}}(\loc,\valuation_1)} 
    {\valuation_2-\valuation_1} \geq
    -\price(\loc) \quad (\textrm{respectively, }\leq
    -\price(\loc))\,.\label{eq:SlopeBigger} 
  \end{equation}
  Then, for all $\valuation\in [a,r]$ and $\loc'\in \locs$,
  $\Value_{\game_{\Locs'\cup\{\loc\},r}}(\loc',\valuation) =
  \Value_{\game_{\Locs',r}}(\loc',\valuation)$.
  Furthermore, fake-optimal NC-strategies and optimal FP-strategies in
  $\game_{\Locs'\cup\{\loc\},r}$ are also fake-optimal and optimal
  over $[a,r]$ in $\game_{\Locs',r}$.
\end{proposition}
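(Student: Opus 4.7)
The plan is to establish the equality $\Value_\hame = \Value_{\hame^+}$ on $[a,r]$, writing $\hame=\game_{\Locs',r}$ and $\hame^+=\game_{\Locs'\cup\{\loc\},r}$. I will treat the case $\loc\in\LocsMin$; the $\loc\in\LocsMax$ case is perfectly symmetric. For the easy direction $\Value_\hame \leq \Value_{\hame^+}$---which in fact holds \emph{everywhere} and not only on $[a,r]$---I observe that making $\loc$ urgent in $\hame^+$ only restricts $\MinPl$ at $\loc$, while $\MaxPl$'s moves are unchanged. Every $\MinPl$-strategy of $\hame^+$ therefore lifts trivially (unchanged, taking immediate moves at $\loc$) to a $\MinPl$-strategy of $\hame$, inducing the same play against any $\MaxPl$-strategy. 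Taking infima yields $\uval_\hame \leq \uval_{\hame^+}$, and determinacy (Theorem~\ref{thm:determined}) concludes.

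For the harder direction $\Value_\hame \geq \Value_{\hame^+}$ on $[a,r]$, I will lift an optimal FP-strategy $\stratmax^+$ of $\MaxPl$ in $\hame^+$ (which exists by finite optimality of $\hame^+$) to a strategy in $\hame$ and prove, via a \emph{potential-function argument}, that it still guarantees at least $\Value_{\hame^+}$ on $[a,r]$. Setting the potential $\phi(\loc',\valuation):=\Value_{\hame^+}(\loc',\valuation)$, the core step is the inequality
\[
c_i + \phi(\loc_{i+1},\valuation_{i+1}) \geq \phi(\loc_i,\valuation_i)
\]
at every step of any play $\rho=(\loc_0,\valuation_0)\xrightarrow{c_0}(\loc_1,\valuation_1)\cdots$ of $\hame$ consistent with $\stratmax^+$ and starting with $\valuation_0\in[a,r]$. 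Because the clock is monotone and bounded by $r$, every reached valuation lies in $[\valuation_0,r]\subseteq[a,r]$, so the slope hypothesis applies throughout. Telescoping the step inequality up to a reached final location $\loc_n$ and using $\phi(\loc_n,\valuation_n)=\fgoal_{\loc_n}(\valuation_n)$ yields $\cost{\rho}\geq \Value_{\hame^+}(\loc_0,\valuation_0)$ (infinite plays trivially have cost $+\infty$), whence $\val_\hame^{\stratmax^+}\geq \val_{\hame^+}$ on $[a,r]$, and determinacy then gives $\val_\hame\geq\val_{\hame^+}$.

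I will establish the step inequality by case analysis on $\loc_i$: at $\MinPl$-locations $\loc_i\neq\loc$ it is the Bellman inequality for $\val_{\hame^+}$ (the inf is a lower bound on the value of any move); at $\MaxPl$-locations, the optimality of $\stratmax^+$ in $\hame^+$ from every configuration yields equality in the Bellman equation for the move $\stratmax^+$ plays; the critical case is $\loc_i=\loc$, where $\MinPl$ can delay by $t_i\geq 0$ in $\hame$ but not in $\hame^+$. There, combining the Bellman equation for $\val_{\hame^+}$ at $(\loc,\valuation_i+t_i)$ with the slope hypothesis~(\ref{eq:SlopeBigger}) gives
\[
t_i\,\price(\loc) + \price(\loc,\loc_{i+1}) + \val_{\hame^+}(\loc_{i+1},\valuation_i+t_i)\ \geq\ t_i\,\price(\loc) + \val_{\hame^+}(\loc,\valuation_i+t_i)\ \geq\ \val_{\hame^+}(\loc,\valuation_i).
\]
This is the main obstacle and precisely where the slope condition is indispensable: without it, $\MinPl$ could profitably delay at $\loc$ in $\hame$ to reach a cheaper configuration, breaking the equality.

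The strategy-transfer statement will follow from the same argument. The construction above already shows $\stratmax^+$ is optimal in $\hame$ over $[a,r]$. For a fake-optimal NC-strategy $\stratmin^+$ of $\MinPl$ in $\hame^+$, a dual potential argument---establishing the reversed step inequality $c_i+\phi(\loc_{i+1},\valuation_{i+1})\leq\phi(\loc_i,\valuation_i)$ along every play consistent with $\stratmin^+$ that reaches the target, using at $\loc_i=\loc$ the same combination of Bellman and slope but with the inequalities flipped---shows $\fakeValue_\hame^{\stratmin^+}\leq \val_{\hame^+}=\val_\hame$ on $[a,r]$, so $\stratmin^+$ is fake-optimal in $\hame$ over $[a,r]$.
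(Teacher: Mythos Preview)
Your argument is essentially the paper's own proof, recast as a potential/telescoping argument rather than an induction on play length: the hard direction lifts the optimal FP-strategy $\stratmax^+$ from $\hame^+$ to $\hame$ and verifies, step by step with the same three-case analysis (the slope hypothesis being used precisely when $\MinPl$ delays at $\loc$), that every play reaching a target has cost at least $\Value_{\hame^+}$ of its source. Your easy direction is actually \emph{simpler} than the paper's, which routes the inequality $\Value_\hame\leq\Value_{\hame^+}$ through Lemma~\ref{lem:fake-optimality}; your direct strategy-inclusion argument is cleaner.

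Two small points to tighten. First, you omit the verification that a fake-optimal NC-strategy $\stratmin^+$ of $\hame^+$ is still an \emph{NC-strategy} in $\hame$; the paper checks this explicitly (first paragraph of its proof), and it is needed because ``fake-optimal'' is defined only for NC-strategies and later uses of the proposition rely on this. Second, your ``dual potential argument'' for fake-optimality of $\stratmin^+$ is unnecessary and its description is off: since $\stratmin^+$ always takes zero delay at $\loc$ (it was defined in $\hame^+$ where $\loc$ is urgent), the slope condition is never invoked in that direction. More simply, the set of plays consistent with $\stratmin^+$ is \emph{identical} in $\hame$ and $\hame^+$ (the only location whose urgency changed is controlled by $\MinPl$, who plays $\stratmin^+$), so $\fakeValue_\hame^{\stratmin^+}=\fakeValue_{\hame^+}^{\stratmin^+}=\Value_{\hame^+}=\Value_\hame$ on $[a,r]$ directly---this is exactly how the paper handles it in its equation~\eqref{eq:fake-value-inequ}.
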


Given an \SPTG $\game$ and some \emph{finitely optimal}
$\game_{\Locs',r}$, we now characterise precisely the left endpoint of
the maximal interval ending in $r$ where the value functions of
$\game$ and $\game_{L',r}$ coincide, with the operator
$\Next_{\Locs'}\colon (0,1]\to [0,1]$ (or simply $\Next$, if $\Locs'$
is clear) defined as:
\[\Next_{\Locs'}(r)=  \sup \{ r'\leq r \mid \forall \loc\in \Locs \
\forall \valuation \in [r',r] \
\Value_{\game_{\Locs',r}}(\loc,\valuation) =
\Value_{\game}(\loc,\valuation)\}\,.\]
By %
continuity of the value (Theorem~\ref{prop:continuity-of-val}), this
supremum exists and
$\Value_{\game}(\loc,\Next_{\Locs'}(r)) =
\Value_{\game_{\Locs',r}}(\loc,\Next_{\Locs'}(r))$.
Moreover, $\Value_{\game}(\loc)$ is a cost function on $[\Next(r),r]$,
since $\game_{\Locs',r}$ is finitely optimal. However, this definition
of $\Next(r)$ is semantical. Yet, building on the ideas of
Proposition~\ref{lem:SameValue}, we can effectively compute
$\Next(r)$, given $\Value_{\game_{\Locs',r}}$. We claim that
$\Next_{\Locs'}(r)$ is the \emph{minimal valuation} such that for all
locations $\loc\in \Locs'\cap\LocsMin$ (respectively,
$\loc \in \Locs'\cap\LocsMax$), the slopes of the affine sections of
the cost function $\Value_{\game_{\Locs',r}}(\loc)$ on $[\Next(r),r]$
are at least (at most) $-\price(\loc)$ (see
Lemma~\ref{lem:operatorNext} in appendix). Hence, $\Next(r)$ can be
obtained (see Figure~\ref{fig:HowToConstructri}), by inspecting
iteratively, for all $\loc$ of $\MinPl$ (respectively, $\MaxPl$), the
slopes of $\Value_{\game_{\Locs',r}}(\loc)$, by decreasing valuations,
until we find a piece with a slope $>-\price(\loc)$ (respectively,
$<-\price(\loc)$). This enumeration of the slopes is effective as
$\Value_{\game_{\Locs',r}}$ has finitely many pieces, by
hypothesis. Moreover, this guarantees that $\Next(r)<r$. Thus, one can
reconstruct $\val_\game$ on $[\inf_i r_i,r_0]$ from the value
functions of the (potentially infinite) sequence of games
$\game_{\Locs',r_0}$, $\game_{\Locs',r_1},\ldots$ where
$r_{i+1}=\Next(r_i)$ for all $i$ such that $r_i>0$, for all possible
choices of non-urgent locations $\Locs'$. Next, we will define two
different ways of choosing $\Locs'$: the former to prove finite
optimality of all \SPTG{s}, the latter to obtain an algorithm to solve
them.

\begin{figure}[tbp]
  \begin{minipage}[t]{.47\linewidth}
    \centering
    \footnotesize
    \begin{tikzpicture}[scale=0.6]
      \node[above] at (0,5) {$\Value_{\game_{\loc,r}}(\loc,\valuation)$};
      \node[below] at (7,0) {$\valuation$};
      \draw[->] (-0.3,0) -- (7,0);
      \draw[->] (0,-0.3) -- (0,5);
      \draw[dotted] (6,0) -- (6,5);
      \draw[dotted] (3,0) -- (3,5);

      \node[below] at (3,0) {$a$};
      \node[below] at (6,0) {$r$};

      \path[name path=courbe,draw] (0,0.5)  .. controls (0.5,7) and ( 1.5,0) .. (6,4); 
      \path[name path = x] (3.5,0) -- (3.5,7);
      \path[name path = xprime] (5.5,0) -- (5.5,7);

      \draw[thick, name intersections={of=courbe and x,by={c}},name intersections={of=courbe and xprime,by={d}}]  (c)--(d);

      \node at (c) {$\bullet$};
      \node at (d) {$\bullet$};

      \coordinate (p) at ($(d)-(2,-0.5)$); 

      \draw[dashed] (c) -- ($(c)+(2,-0.5)$);

      \draw[dotted] (3.5,0) -- (c);
      \draw[dotted] (5.5,0) -- (d);
      \node[below] at (3.5,0) {$\valuation_1$};
      \node[below] at (5.5,0) {$\valuation_2$};

      \coordinate(cy) at ($(c)-(3.5,0)$);
      \coordinate(dy) at ($(d)-(5.5,0)$);

      \draw[dotted] (cy) -- (c);
      \draw[dotted] (dy) -- (d);

      \node[left] at (cy) {$\Value_{\game_{\loc,r}}(\loc,\valuation_1)$};
      \node[left] at (dy) {$\Value_{\game_{\loc,r}}(\loc,\valuation_2)$};

    \end{tikzpicture}
    \caption{The condition~\eqref{eq:SlopeBigger} (in the case
      $\Locs'=\emptyset$ and $\loc\in \LocsMin$): graphically, it means
      that the slope between any two points of the plot in $[a,r]$
      (represented with a thick line) is greater than or equal to
      $-\price(\loc)$ (represented with dashed line).}
    \label{fig:slopeBigger}
  \end{minipage} \ \ \ 
  \begin{minipage}[t]{.47\linewidth}
    \centering
    \footnotesize
    \begin{tikzpicture}[scale=0.6]
      \node[above] at (0,5) {$\Value_{\game_{\locMin,r}}(\locMin,\valuation)$};
      \node[below] at (7,0) {$\valuation$};
      \draw[->] (-0.3,0) -- (7,0);
      \draw[->] (0,-0.3) -- (0,5);
      \draw[dotted] (6,0) -- (6,5);
      \draw[dotted] (3,0) -- (3,5);
      \draw[dotted] (0,4) -- (7,4);
      \node[left] at (0,4) {$\Value_\game(\locMin,r)$};
      \node[below] at (3,0) {$\Next(r)$};
      \node[below] at (6,0) {$r$};

      \draw (6,4) -- (5,3) -- (4,2.7) -- (3.7,2) -- (3,1) -- (2,1.7)
      -- (1.5,1) -- (1,1.3) -- (0.5,0.5) -- (0,0.5);    
      \begin{scope}[xshift=5cm, yshift=3cm]
        \draw[dashed] (0,0) -- (0.5,0.125);
      \end{scope}
      \begin{scope}[xshift=4cm, yshift=2.5cm]
        \draw[dashed] (0,0) -- (0.5,0.125);
      \end{scope}
      \begin{scope}[xshift=3.7cm, yshift=2cm]
        \draw[dashed] (0,0) -- (0.5,0.125);
      \end{scope}
      \begin{scope}[xshift=3cm, yshift=1cm]
        \draw[dashed] (0,0) -- (0.5,0.125);
      \end{scope}
      \begin{scope}[xshift=2cm, yshift=1.7cm]
        \draw[dashed] (0,0) -- (0.5,0.125);
      \end{scope}
      \begin{scope}[xshift=1.5cm, yshift=1cm]
        \draw[dashed] (0,0) -- (0.5,0.125);
      \end{scope}
      \begin{scope}[xshift=1cm, yshift=1.3cm]
        \draw[dashed] (0,0) -- (0.5,0.125);
      \end{scope}
      \begin{scope}[xshift=0.5cm, yshift=0.5cm]
        \draw[dashed] (0,0) -- (0.5,0.125);
      \end{scope}
      \begin{scope}[xshift=0cm, yshift=0.5cm]
        \draw[dashed] (0,0) -- (0.5,0.125);
      \end{scope}
    \end{tikzpicture}
    \caption{In this example $\Locs'=\{\locMin\}$
      and $\locMin\in \LocsMin$. $\Next(r)$ is the
      leftmost point such that all slopes on its right are smaller
      than or equal to $-\price(\locMin)$ in the graph of
      $\Value_{\game_{\locMin,r}}(\locMin,\valuation)$. Dashed lines
      have slope $-\price(\locMin)$.}
    \label{fig:HowToConstructri}
  \end{minipage}
  
\end{figure}

\subparagraph*{\SPTG{s} are finitely optimal.} To prove finite
optimality of all \SPTG{s} we reason by induction on the number of
non-urgent locations and instantiate the previous results to the case
where $L'=\{\locMin\}$ where $\locMin$ is a non-urgent location of
\emph{minimum price-rate} (i.e., for all $\loc\in \Locs$,
$\price(\locMin)\leq \price(\loc)$). Given $r_0\in [0,1]$, we let
$r_0> r_1> \cdots$ be the decreasing sequence of valuations such that
$r_i=\Next_{\locMin}(r_{i-1})$ for all $i>0$. As explained before, we
will build $\val_\game$ on $[\inf_i r_i, r_0]$ from the value
functions of games $\game_{\locMin,r_i}$. Assuming finite optimality
of those games, this will prove that $\game$ is finitely optimal
\emph{under the condition} that $r_0> r_1> \cdots$ eventually stops,
i.e., $r_i=0$ for some $i$. This property is given by the next lemma,
which ensures that, for all $i$, the owner of $\locMin$ has a strictly
better strategy in configuration $(\locMin,r_{i+1})$ than waiting
until $r_i$ in location $\locMin$. 

\begin{lemma}\label{lem:strictlySmaller}\label{lem:stationarysequence-locMin}
  If $\game_{\locMin,r_i}$ is finitely optimal for all $i\geq 0$, then
  \begin{inparaenum}[$(i)$]
  \item if $\locMin\in\LocsMin$ (respectively, $\LocsMax$),
    $\Val_\game(\locMin,r_{i+1}) <
    \Val_\game(\locMin,r_i)+(r_i-r_{i+1})\price(\locMin)$
    (respectively,
    $\Val_\game(\locMin,r_{i+1}) >
    \Val_\game(\locMin,r_i)+(r_i-r_{i+1})\price(\locMin)$),
    for all $i$; and
  \item there is $i\leq |\F_\game|^2+2$ such that $r_i=0$.
  \end{inparaenum}
\end{lemma}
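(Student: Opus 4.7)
The plan is to treat the two items separately, using the assumption that each $\game_{\locMin, r_i}$ is finitely optimal, so that $\val_{\game_{\locMin, r_i}}(\locMin)$ is a cost function whose affine pieces are drawn from the finite set $\F_\game$. I assume throughout that $\locMin \in \LocsMin$; the case $\locMin \in \LocsMax$ is entirely symmetric.

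For part~(i), the non-strict inequality $\val_\game(\locMin, r_{i+1}) \leq \val_\game(\locMin, r_i) + (r_i - r_{i+1})\, \price(\locMin)$ simply reflects the fact that, in $\game$, $\MinPl$ may wait in $\locMin$ from valuation $r_{i+1}$ up to $r_i$ and then resume optimal play. To upgrade it to a strict inequality, I exploit the slope-based characterisation of $\Next_{\locMin}(r_i) = r_{i+1}$ discussed just after Proposition~\ref{lem:SameValue}: since $\Next_{\locMin}(r_i)$ cannot be pushed strictly below $r_{i+1}$, the cost function $\val_{\game_{\locMin, r_i}}(\locMin)$ must admit, immediately below $r_{i+1}$, an affine piece of slope strictly less than $-\price(\locMin)$. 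Using continuity of $\val_{\game_{\locMin, r_i}}(\locMin)$ inside the region $(0, r_i)$ (Theorem~\ref{prop:continuity-of-val}), this produces some $\epsilon > 0$ with
\[
  \val_{\game_{\locMin, r_i}}(\locMin, r_{i+1} - \epsilon)
  \;>\;
  \val_{\game_{\locMin, r_i}}(\locMin, r_{i+1}) + \epsilon\, \price(\locMin)
  \;=\;
  \val_\game(\locMin, r_{i+1}) + \epsilon\, \price(\locMin),
\]
where the final equality uses the coincidence of $\val_{\game_{\locMin, r_i}}$ and $\val_\game$ at $r_{i+1}$. On the other hand, the clone $\locMin^f$ introduced by the $\Waiting$ construction caps $\MinPl$'s value in $\game_{\locMin, r_i}$ at $(r_i - \valuation)\, \price(\locMin) + \val_\game(\locMin, r_i)$ from any configuration $(\locMin, \valuation)$. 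Instantiating this cap at $\valuation = r_{i+1} - \epsilon$ and chaining it with the strict lower bound above makes the $\epsilon$-terms cancel, which is exactly the required strict inequality. For $\locMin \in \LocsMax$ all inequalities are reversed, using that the clone now guarantees a lower bound on $\MaxPl$'s value.

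For part~(ii), I associate to each index $i \geq 0$ with $r_{i+1} > 0$ a pair $(g_i, h_i) \in \F_\game \times \F_\game$, where $g_i$ (respectively, $h_i$) is the affine piece of $\val_{\game_{\locMin, r_i}}(\locMin)$ lying immediately to the right (respectively, left) of $r_{i+1}$. Both pieces meet at the point $(r_{i+1}, \val_\game(\locMin, r_{i+1}))$, and they are genuinely distinct since, by the argument of part~(i), $g_i$ has slope at least $-\price(\locMin)$ whereas $h_i$ has slope strictly smaller. Hence the pair $(g_i, h_i)$ determines $r_{i+1}$ uniquely as the intersection of two distinct affine functions, so two different indices $i \neq j$ must yield two different pairs. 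Since $\F_\game \times \F_\game$ contains at most $|\F_\game|^2$ pairs, and accounting for the initial index ($r_0 = 1$) and the terminal index with $r_i = 0$, the sequence must stop after at most $|\F_\game|^2 + 2$ steps.

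The main obstacle is securing the strictness of part~(i): the non-strict bound is immediate from a single waiting strategy, but the strict version requires the delicate interplay between the slope-drop of $\val_{\game_{\locMin, r_i}}$ just below $r_{i+1}$ (forced by the minimality inherent in the definition of $\Next$) and the clone-based cap on $\MinPl$'s value, so that the auxiliary parameter $\epsilon$ cancels out cleanly. Part~(ii) then falls into place as a clean pair-counting argument, whose correctness crucially rests on the slope distinction between $g_i$ and $h_i$ established in part~(i).
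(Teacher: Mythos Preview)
Your argument for part~(i) is correct and essentially matches the paper's: you invoke the minimality characterisation of $\Next$ (an affine piece of slope strictly below $-\price(\locMin)$ just to the left of $r_{i+1}$), combine it with the clone-based upper bound (the paper's Lemma~\ref{lem:bounded}), and let the auxiliary point cancel.

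Your proof of part~(ii), however, has a genuine gap. You assert that finite optimality of $\game_{\locMin,r_i}$ implies that every affine piece of $\val_{\game_{\locMin,r_i}}(\locMin)$ belongs to $\F_\game$. This is not what finite optimality says, and it is false in general. The game $\game_{\locMin,r_i}$ has, besides the final locations of $\game$, clone locations $\loc^f$ whose final cost functions are $\valuation\mapsto (r_i-\valuation)\price(\loc)+\Val_\game(\loc,r_i)$; these depend on the real number $\Val_\game(\loc,r_i)$ and on the index~$i$, so there is no fixed finite set containing all of them across~$i$. The paper isolates precisely when a piece is forced to lie in $\F_\game$ via Lemma~\ref{lem:eqGoal}: this happens only when the slope is \emph{strictly greater} than $-\price(\locMin)$, because such a slope rules out that optimal play ends in a clone (every clone contributes a slope $-\price(\loc)\leq -\price(\locMin)$, since $\locMin$ has minimal price-rate). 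Your piece $h_i$, by construction, has slope strictly \emph{less} than $-\price(\locMin)$, so Lemma~\ref{lem:eqGoal} does not apply and you have no grounds for placing $h_i$ in $\F_\game$; likewise $g_i$ may have slope exactly $-\price(\locMin)$. Without this, the pair $(g_i,h_i)$ is not confined to $\F_\game\times\F_\game$, and the counting argument collapses.

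This is also why the paper does \emph{not} treat the two cases symmetrically in part~(ii). When $\locMin\in\LocsMax$, the violating piece to the left of each $r_i$ has slope $>-\price(\locMin)$, so Lemma~\ref{lem:eqGoal} applies and a single tag $f_i\in\F_\game$ can be attached and shown injective in~$i$. When $\locMin\in\LocsMin$, the violating piece has slope $<-\price(\locMin)$, Lemma~\ref{lem:eqGoal} is unavailable, and the paper instead works with segments of $\Val_\game(\locMin)$ (not of $\val_{\game_{\locMin,r_i}}$) having slope $>-\price(\locMin)$: it uses part~(i) together with Lemma~\ref{lem:r_2-r_1-r_0} to exhibit at least one such segment inside each interval $[r_{i+1},r_i]$, and a separate geometric argument to bound their total number by~$|\F_\game|^2$.
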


By iterating this construction, we make all locations urgent
iteratively, and obtain:

\begin{proposition}\label{prop:ExpCutpoints}
  Every \SPTG $\game$ is finitely optimal and for all locations
  $\loc$, $\Value_\game(\loc)$ has at most
  $O\left ( (\maxPriceTrans|\Locs|^2)^{2|\locs|+2}\right)$ cutpoints.
\end{proposition}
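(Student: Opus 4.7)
The plan is induction on the number $k$ of non-urgent locations of $\game$, driven by the decomposition into games $\game_{\{\locMin\},r_i}$ developed above. The base case $k=0$ is exactly Proposition~\ref{prop:baseCase}, which yields finite optimality with at most $4|\LocsFin|^4\maxPriceTrans^2 = O((\maxPriceTrans|\Locs|^2)^2)$ cutpoints per location, matching the bound $O((\maxPriceTrans|\Locs|^2)^{2\cdot 0+2})$.

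For the inductive step, I assume finite optimality for \SPTG{s} with at most $k$ non-urgent locations and consider $\game$ with $k+1$ such locations. I pick a non-urgent location $\locMin$ of minimum price-rate, set $r_0=1$, and iteratively define $r_{i+1}=\Next_{\{\locMin\}}(r_i)$. Each auxiliary game $\game_{\{\locMin\},r_i}$ has exactly $k$ non-urgent locations (since $\locMin$ becomes urgent while the clones $\loc^f$ introduced by $\Waiting$ are final, not non-urgent), so the induction hypothesis provides a fake-optimal NC-strategy $\stratmin^i$, an optimal FP-strategy $\stratmax^i$, and a value function with at most $O((\maxPriceTrans|\Locs|^2)^{2k+2})$ cutpoints per location. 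Lemma~\ref{lem:stationarysequence-locMin}(ii) then guarantees that $r_0>r_1>\cdots$ reaches $0$ in at most $N\leq |\F_\game|^2+2 = O((\maxPriceTrans|\Locs|^2)^2)$ steps, making the decomposition finite.

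To reconstruct $\Value_\game$ on $[0,1]$ I combine the definition of $\Next$ with Proposition~\ref{lem:SameValue}: for each $i$, $\Value_\game$ coincides with $\Value_{\game_{\{\locMin\},r_i}}$ on $[r_{i+1},r_i]$. Thus $\Value_\game(\loc)$ is the $\opcf$-concatenation of the pieces $\Value_{\game_{\{\locMin\},r_{N-1}}}\opcf \cdots \opcf \Value_{\game_{\{\locMin\},r_0}}$, hence a cost function with at most $N\cdot(O((\maxPriceTrans|\Locs|^2)^{2k+2})+1) = O((\maxPriceTrans|\Locs|^2)^{2(k+1)+2})$ cutpoints, as required. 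Global strategies $\stratmax$ and $\stratmin$ are obtained by dispatching $\stratmax^i$ and $\stratmin^i$ on the interval $[r_{i+1},r_i]$ containing the current clock valuation; optimality/fake-optimality on each interval transfers to $\game$ via the second sentence of Proposition~\ref{lem:SameValue}.

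The main obstacle is to certify that this concatenated $\stratmin$ is still a valid NC-strategy of $\game$, not merely a family of NC-strategies on disjoint intervals. The intervals of $\stratmin$ are the common refinement of $\intervals(\stratmin^0),\ldots,\intervals(\stratmin^{N-1})$, separated by the $r_i$'s. Since clocks do not decrease along a play, any location-cycle with both endpoint valuations in the same refined interval is entirely contained in a single $[r_{i+1},r_i]$ and thus inherits the NC-property from $\stratmin^i$. Combined with $\Value_\game^{\stratmin}=\fakeValue_\game^{\stratmin}=\Value_\game$ (by Proposition~\ref{lem:SameValue} applied piecewise), this shows $\game$ is finitely optimal; iterating through all $k\leq |\locs|$ non-urgent locations then gives the stated bound of $O((\maxPriceTrans|\Locs|^2)^{2|\locs|+2})$ cutpoints.
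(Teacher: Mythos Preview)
Your proposal follows the same inductive scheme as the paper: induction on the number of non-urgent locations, base case via Proposition~\ref{prop:baseCase}, inductive step by picking a non-urgent $\locMin$ of minimum price-rate, decomposing $\game$ into the finite sequence $\game_{\{\locMin\},r_i}$ via $\Next$, invoking Lemma~\ref{lem:stationarysequence-locMin}(ii) for termination, and gluing value functions and strategies piecewise. Your paragraph on why the glued $\stratmin$ remains an NC-strategy is in fact more explicit than the paper's (which simply asserts that strategies can be reconstructed).

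There is, however, a bookkeeping slip in your cutpoint bound. When you apply the induction hypothesis to $\game_{\{\locMin\},r_i}$, the bound $O((\maxPriceTrans|\Locs|^2)^{2k+2})$ must be read with the location count of $\game_{\{\locMin\},r_i}$, not of $\game$: the $\Waiting$ construction adds one fresh final location $\loc^f$ per non-urgent location of $\game$, so the auxiliary game has strictly more locations. If you iterate with a single symbol $|\Locs|$ standing for two different quantities, the product $N\cdot O((\maxPriceTrans|\Locs|^2)^{2k+2})$ no longer collapses to $O((\maxPriceTrans|\Locs|^2)^{2(k+1)+2})$ as you claim. The paper avoids this by carrying a sharper invariant $O\big((\maxPriceTrans(|\LocsFin|+n^2))^{2n+2}\big)$, parametrised by $|\LocsFin|$ and the number $n$ of non-urgent locations separately; passing to $\game_{\locMin,r_{i-1}}$ replaces $(|\LocsFin|,n)$ by $(|\LocsFin|+n,\,n-1)$, and one checks $|\LocsFin|+(n-1)^2+n \leq |\LocsFin|+n^2$ so the invariant propagates. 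Only at the very end is $|\LocsFin|+n^2 \leq |\Locs|^2$ used to get the stated bound. This is a minor fix, but without it your recursion does not close.
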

\begin{proof} 
  As announced, we show by induction on $n\geq 0$ that every $r$-\SPTG
  $\game$ with $n$ non-urgent locations is finitely optimal, and that
  the number of cutpoints of $\Value_\game(\loc)$ is at most
  $O\left ( (\maxPriceTrans (|\LocsFin|+n^2))^{2n+2}\right)$, which
  suffices to show the above bound, since
  $|\LocsFin|+n^2\leq |\Locs|^2$.

  The base case $n=0$ is given by
  Proposition~\ref{prop:baseCase}. Now, assume that $\game$ has at
  least one non-urgent location, and consider $\locMin$ one with
  minimum price. By induction hypothesis, all $r'$-\SPTG{s}
  $\game_{\locMin,r'}$ are finitely optimal for all $r'\in [0,r]$.
  Let $r_0> r_1> \cdots$ be the decreasing sequence defined by $r_0=r$
  and $r_i=\Next_{\locMin}(r_{i-1})$ for all $i\geq 1$. By
  Lemma~\ref{lem:stationarysequence-locMin}, there exists
  $j\leq |\F_\game|^2+2$ such that $r_j = 0$. Moreover, for all
  $0<i\leq j$, $\Value_\game=\Value_{\game_{\locMin,r_{i-1}}}$ on
  $[r_{i},r_{i-1}]$ by definition of $r_i = \Next_{\locMin}(r_{i-1})$,
  so that $\Value_\game(\loc)$ is a cost function on this interval,
  for all $\loc$, and the number of cutpoints on this interval is
  bounded by
  $O\left ( (\maxPriceTrans
    (|\LocsFin|+(n-1)^2+n))^{2(n-1)+2}\right)=O\left ( (\maxPriceTrans
    (|\LocsFin|+n^2))^{2(n-1)+2}\right)$
  by induction hypothesis (notice that maximal transition prices are
  the same in $\game$ and $\game_{\locMin,r_{i-1}}$, but that we add
  $n$ more final locations in $\game_{\locMin,r_{i-1}}$). Adding the
  cutpoint $1$, summing over $i$ from $0$ to $j\leq |\F_\game|^2+2$,
  and observing that $|\F_\game|\leq 2 \maxPriceTrans |\LocsFin|$, we
  bound the number of cutpoints of $\Value_\game(\loc)$ by
  $O\left ( (\maxPriceTrans (|\LocsFin|+n^2))^{2n+2}\right)$.
  Finally, we can reconstruct fake-optimal and optimal strategies in
  $\game$ from the from fake-optimal and optimal strategies of
  $\game_{\locMin,r_i}$.
\end{proof}

\subparagraph*{Computing the value functions.}  The finite optimality
of \SPTG{s} allows us to compute the value functions. The proof of
Proposition~\ref{prop:ExpCutpoints} suggests a \emph{recursive}
algorithm to do so: from an \SPTG $\game$ with minimal non-urgent
location $\locMin$, solve recursively $\game_{\locMin,1}$,
$\game_{\locMin,\Next(1)}$, $\game_{\locMin,\Next(\Next(1))}$,
\textit{etc}.\ handling the base case where all locations are urgent
with Algorithm~\ref{algo:value-iteration-fixed}. While our results
above show that this is correct and terminates, we propose instead to
solve---without the need for recursion---the sequence of games
$\game_{\Locs\setminus\LocsUrg,1}$,
$\game_{\Locs\setminus\LocsUrg,\Next(1)}, \ldots$ i.e., \emph{making
  all locations urgent at once}. Again, the arguments given above
prove that this scheme is \emph{correct}, but the key argument of
Lemma~\ref{lem:stationarysequence-locMin} that ensures
\emph{termination} can not be applied in this case. Instead, we rely
on the following lemma, stating, that there will be at least one
cutpoint of $\val_\game$ in each interval $[\Next(r),r]$. Observe that
this lemma relies on the fact that $\game$ is finitely optimal, hence
the need to first prove this fact independently with the sequence
$\game_{\locMin,1}$, $\game_{\locMin,\Next(1)}$,
$\game_{\locMin,\Next(\Next(1))}$,\ldots Termination then follows from
the fact that $\val_\game$ has finitely many cutpoints by finite
optimality.

\begin{lemma}\label{lem:r_2-r_1-r_0}
  Let $r_0\in(0,1]$ such that $\game_{\Locs',r_0}$ is finitely
  optimal. Suppose that $r_1=\Next_{\Locs'}(r_0)>0$, and let
  $r_2=\Next_{\Locs'}(r_1)$. There exists $r'\in [r_2,r_1)$ and
  $\loc\in \Locs'$ such that
  \begin{inparaenum}[(i)][$(i)$]
  \item $\Value_{\game}(\loc)$ is affine on $[r',r_1]$, of slope
    equal to $-\price(\loc)$, and
  \item  $\Value_{\game}(\loc,r_1) \neq \Value_{\game}(\loc,r_0) +
    \price(\loc) (r_0-r_1)$.
  \end{inparaenum}
  As a consequence, $\Value_\game(\loc)$ has a cutpoint in
  $[r_1,r_0)$.
\end{lemma}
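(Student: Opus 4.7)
My plan is to locate a single witness location $\loc \in \Locs'$ for which both (i) and (ii) hold; the cutpoint consequence then follows with a short continuity argument at~$r_1$.

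First, I would select $\loc$ via the slope characterization of $r_1 = \Next_{\Locs'}(r_0)$ (Lemma~\ref{lem:operatorNext}). Finite optimality of $\game_{\Locs',r_0}$ makes $\Value_{\game_{\Locs',r_0}}$ piecewise affine with finitely many pieces, and the minimality in the definition of $r_1$ forces the existence of some $\loc \in \Locs'$ whose affine piece of $\Value_{\game_{\Locs',r_0}}(\loc)$ immediately to the left of $r_1$ violates the slope condition. Taking (WLOG) $\loc \in \LocsMin$, this means a piece of slope $s < -\price(\loc)$ on some interval $(r_1-\delta_0, r_1)$.

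For (ii), I would argue by contradiction: if $\Value_\game(\loc,r_1) = \Value_\game(\loc,r_0)+\price(\loc)(r_0-r_1)$, then on $[r_1,r_0]$ every piece of $\Value_\game(\loc) = \Value_{\game_{\Locs',r_0}}(\loc)$ has slope $\geq -\price(\loc)$ while the chord slope is exactly $-\price(\loc)$, forcing each piece to have slope $-\price(\loc)$. Thus $\Value_{\game_{\Locs',r_0}}(\loc)$ coincides on $[r_1,r_0]$ with the line $B_{r_0}(\valuation) := \price(\loc)(r_0-\valuation) + \Value_\game(\loc,r_0)$. Just below $r_1$, however, continuity together with the slope $s < -\price(\loc)$ yields $\Value_{\game_{\Locs',r_0}}(\loc, r_1-\delta) = B_{r_0}(r_1) - s\delta > B_{r_0}(r_1) + \price(\loc)\delta = B_{r_0}(r_1-\delta)$ for small $\delta$, contradicting the availability, in $\game_{\Locs',r_0}$, of Min's transition $\loc \to \loc^f$ with final cost $B_{r_0}(r_1-\delta)$.

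For (i), the upper bound $\Value_\game(\loc,\valuation) \leq B_{r_1}(\valuation) := \price(\loc)(r_1-\valuation) + \Value_\game(\loc,r_1)$ (a line of slope $-\price(\loc)$) is immediate from Min's option, in $\game$, of waiting in $\loc$ (non-urgent here) until $r_1$ and then playing optimally. To upgrade to equality on an interval $[r',r_1] \subseteq [r_2,r_1]$, I would work inside $\game_{\Locs',r_1}$ (whose value agrees with $\Value_\game$ on $[r_2,r_1]$) and rule out that any instant transition from $\loc$ undercuts $B_{r_1}$ near $r_1$. The witness property of $\loc$ in $\game_{\Locs',r_0}$ and the strict separation $B_{r_1} < B_{r_0}$ provided by (ii) are exactly what prevent this; finite optimality of $\game_{\Locs',r_1}$ (via the slope characterization of $r_2$) then promotes the pointwise equality to a whole affine piece. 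The cutpoint consequence is then direct: if $\Value_\game(\loc)$ had no cutpoint in $[r_1,r_0)$, it would be affine on $[r_1,r_0]$ with the same slope $-\price(\loc)$ it has just below $r_1$ (by (i) and absence of a cutpoint at $r_1$), yielding $\Value_\game(\loc,r_1) = \Value_\game(\loc,r_0) + \price(\loc)(r_0-r_1)$, the identity excluded by (ii). The main obstacle is the lower-bound half of (i): the instant-transition costs in $\game_{\Locs',r_1}$ use value functions that differ below $r_1$ from those of $\game_{\Locs',r_0}$, so exploiting the witness property requires a delicate comparison, combining the minimality of $r_1$ with the gap granted by (ii).
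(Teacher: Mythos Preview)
Your argument for (ii) is correct and clean: you pick $\loc\in\Locs'$ from the slope characterisation of $r_1=\Next_{\Locs'}(r_0)$ (Lemma~\ref{lem:operatorNext}), and the combination of Lemma~\ref{lem:rate} on $[r_1,r_0]$ with Lemma~\ref{lem:bounded} just below $r_1$ does yield the desired contradiction. The cutpoint consequence from (i) and (ii) is also fine.

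The genuine gap is (i), and you correctly flag it as ``the main obstacle''---but your sketch does not close it. The witness property you extract says that the slope of $\Value_{\game_{\Locs',r_0}}(\loc)$ immediately below $r_1$ is $<-\price(\loc)$; property (i) demands that the slope of $\Value_{\game}(\loc)=\Value_{\game_{\Locs',r_1}}(\loc)$ immediately below $r_1$ is exactly $-\price(\loc)$. These are \emph{different} games, with different final cost functions ($B_{r_0}$ versus $B_{r_1}$), and below $r_1$ their value functions genuinely diverge. In $\game_{\Locs',r_1}$, $\loc$ is urgent and Min chooses the best instant transition; you need to show that $\loc\to\loc^f$ (cost $B_{r_1}$) is optimal there near $r_1$. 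Your appeal to the strict gap $B_{r_1}<B_{r_0}$ and to the slope violation in $\game_{\Locs',r_0}$ does not give this: the successor values $\Value_{\game_{\Locs',r_1}}(\loc'',\valuation)$ depend on the \emph{new} endpoint $r_1$ throughout the game, so a transition $\loc\to\loc''$ that was suboptimal in $\game_{\Locs',r_0}$ may well undercut $B_{r_1}$ in $\game_{\Locs',r_1}$. Nothing you have written rules that out, and there is no reason to expect that the particular $\loc$ you picked upfront must satisfy (i).

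The paper avoids this problem by not committing to a single witness. It fixes $r'$ globally (the largest left neighbourhood of $r_1$ on which all $\Value_\game(\loc)$ are affine), assumes by contradiction that \emph{every} $\loc\in\Locs'$ fails either (i) or (ii), and then glues fake-optimal and optimal strategies of $\game_{\Locs',r_1}$ and $\game_{\Locs',r_0}$ into a single pair of strategies in $\game_{\Locs',r_0}$. A detailed case analysis on the first step of any conforming play (whether it stays below $r_1$, jumps to some $\loc^f$, or crosses $r_1$) then shows that these glued strategies realise $\Value_\game$ on an interval strictly below $r_1$, contradicting $r_1=\Next_{\Locs'}(r_0)$. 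The key point is that the contradiction hypothesis is used \emph{per location} inside the case analysis---whenever the glued strategy transits through a $\loc^f$, one invokes either $\neg(i)$ or $(i)\wedge\neg(ii)$ for \emph{that} $\loc$. This is why the argument must be global rather than local to a single chosen witness.
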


Algorithm~\ref{alg:solve} implements these ideas. Each iteration of
the \textbf{while} loop computes a new game in the sequence
$\game_{\Locs\setminus\LocsUrg,1}$,
$\game_{\Locs\setminus\LocsUrg,\Next(1)},\ldots$ described above;
solves it thanks to \SolveInstant; and thus computes a new portion of
$\val_\game$ on an interval on the left of the current point
$r\in [0,1]$. More precisely, the vector
$(\val_\game(\loc,1))_{\loc\in \Locs}$ is first computed in
line~\ref{alg:init}. Then, the algorithm enters the {\bf while} loop,
and the game $\game'$ obtained when reaching
line~\ref{alg:game-constructed} is
$\game_{\Locs\setminus\LocsUrg,1}$. Then, the algorithm enters the
{\bf repeat} loop to analyse this game. Instead of building the whole
value function of $\game'$, Algorithm~\ref{alg:solve} builds only the
parts of $\val_{\game'}$ that coincide with $\val_{\game}$. It
proceeds by enumerating the possible cutpoints $a$ of $\val_{\game'}$,
starting in~$r$, by decreasing valuations (line~\ref{alg:cutpoint}),
and computes the value of $\val_{\game'}$ in each cutpoint thanks to
\SolveInstant (line~\ref{alg:call-si}), which yields a new piece of
$\val_{\game'}$. Then, the {\bf if} in line~\ref{line:begin} checks
whether this new piece coincides with $\val_\game$, using the
condition given by Proposition~\ref{lem:SameValue}. If it is the case,
the piece of $\val_{\game'}$ is added to $f_\loc$
(line~\ref{alg:concat}); \textbf{repeat} is stopped otherwise. When
exiting the \textbf{repeat} loop, variable $b$ has value
$\Next(1)$. Hence, at the next iteration of the \textbf{while} loop,
$\game'=\game_{\Locs\setminus\LocsUrg,\Next(1)}$ when reaching
line~\ref{alg:game-constructed}. By continuing this reasoning
inductively, one concludes that the successive iterations of the
\textbf{while} loop compute the sequence
$\game_{\Locs\setminus\LocsUrg,1}$,
$\game_{\Locs\setminus\LocsUrg,\Next(1)},\ldots$ as announced, and
rebuilds $\val_\game$ from them. Termination in exponential time is
ensured by Lemma~\ref{lem:r_2-r_1-r_0}: each iteration of the {\bf
  while} loop discovers at least one new cutpoint of $\val_\game$, and
there are at most exponentially many (note that a tighter bound on
this number of cutpoints would entail a better complexity of our
algorithm).

\begin{example} Let us briefly sketch the execution of
  Algorithm~\ref{alg:solve} on the \SPTG in
  Figure~\ref{fig:ex-ptg2}. During the first iteration of the {\bf
    while} loop, the algorithm computes the correct value functions
  until the cutpoint $\frac 3 4$: in the $repeat$ loop, at first
  $a= 9/10$ but the slope in $\loc_1$ is smaller than the slope that
  would be granted by waiting, as depicted in
  Figure~\ref{fig:ex-ptg2}. Then, $a=3/4$ where the algorithm gives a
  slope of value $-16$ in $\loc_2$ while the cost of this location of
  \MaxPl is $-14$. During the first iteration of the {\bf while} loop,
  the inner {\bf repeat} loop thus ends with $r=3/4$. The next
  iterations of the {\bf while} loop end with $r=\frac 1 2$ (because
  $\loc_1$ does not pass the test in line~\ref{line:begin});
  $r=\frac 1 4$ (because of $\loc_2$) and finally with $r=0$, giving
  us the value functions on the entire interval $[0,1]$. All value
  functions are in Figure~\ref{fig:val_sptg} in the appendix.
\end{example}

\begin{algorithm}[tbp]
  \caption{\texttt{solve}($\game$)}\label{alg:solve}
  \KwIn{\SPTG
    $\game=(\LocsMin, \LocsMax, \LocsFin, \LocsUrg, \fgoalvec,
    \transitions, \price)$} %
  \DontPrintSemicolon%

  $\vec f = (f_\loc)_{\loc\in \locs} :=
  \SolveInstant(\game,1)$\tcc*[r]{$f_\loc\colon \{1\}\to \Rbar$}\label{alg:init}%
  $r := 1$\;%
  \While(\tcc*[f]{Invariant: $f_\loc\colon [r,1] \to \Rbar$}){$0<r$}{%
    $\game' := \Waiting(\game,r,\vec f(r))$ \tcc*[r]{$r$-\SPTG
      $\game' = (\LocsMin,\LocsMax,\LocsFin',\LocsUrg', \fgoalvec',
      T', \price')$}%
    $\LocsUrg' := \LocsUrg' \cup \locs$\tcc*[r]{every location is made
      urgent}%
    $b := r$\label{alg:game-constructed}\;%
    \Repeat(\tcc*[f]{Invariant:$f_\loc\colon [b,1]\to \Rbar$}){ $b=0$
      or $stop$}%
    {%
      $a := \max (\posscp_{\game'}\cap [0,b))$\label{alg:cutpoint}\;%
      $\vec x = (x_\loc)_{\loc\in \locs}
      :=\SolveInstant(\game',a)$\tcc*[r]{$x_\loc
        = \val_{\game'}(\loc,a)$}\label{alg:call-si}%
      \If{$\forall \loc\in\LocsMin \;
        \frac{f_\loc(b)-x_\loc}{b-a}\leq-\price(\loc)\land \forall
        \loc\in\LocsMax \;
        \frac{f_\loc(b)-x_\loc}{b-a}\geq-\price(\loc)$\label{line:begin}}%
      {%
        \lForEach{$\loc\in\locs$}
        {$f_\loc := \big(\valuation\in[a,b] \mapsto f_\loc(b) +
          (\valuation-b)\frac{f_\loc(b)-x_\loc}{b-a}\big) \rhd
          f_\loc$}\label{alg:concat}%
        $b:=a$\,; $stop := false$ }%
      \lElse{$stop := true$\label{line:end}}
    }%
    $r:= b$%
  }%
  \Return{$\vec f$}%
\end{algorithm}

\section{Beyond \SPTG{s}}
\label{sec:solve-PTGs}

In \cite{BouLar06,Rut11,DueIbs13}, \emph{general} \PTG{s} with
\emph{non-negative prices} are solved by reducing them to a finite
sequence of \SPTG{s}, by eliminating guards and resets. It is thus
natural to try and adapt these techniques to our general case, in
which case Algorithm~\ref{alg:solve} would allow us to solve
\emph{general \PTG{s} with arbitrary costs}. Let us explain why it is
not (completely) the case. The technique used to remove guards from
\PTG{s} consists in enhancing the locations with regions while keeping
an equivalent game. This technique \emph{can} be adapted to arbitrary
weights, see Appendix~\ref{app:raptg} for a proof adapted from
\cite[Lemma~4.6]{DueIbs13}.

The technique to handle resets, however, consists in \emph{bounding}
the number of clock resets that can occur in any play following an
optimal strategy of \MinPl or \MaxPl. Then, the \PTG can be
\emph{unfolded} into a \emph{reset-acyclic} \PTG with the same
value. By reset-acyclic, we mean that no cycle in the configuration
graph visits a transition with a reset. This reset-acyclic \PTG can be
decomposed into a finite number of components that contain no reset
and are linked by transitions with resets. These components can be
solved iteratively, from the bottom to the top, turning them into
\SPTG{s}.  Thus, if we \emph{assume} that the \PTG{s} we are given as
input \emph{are} reset-acyclic, we can solve them in \emph{exponential
  time}, and show that their value functions are cost functions with
at most exponentially many cutpoints, using our techniques (see
Appendix~\ref{app:raptg}). Unfortunately, the arguments to bound the
number of resets do not hold for arbitrary costs, as shown by the \PTG
in \figurename~\ref{fig:ex-ptgrr}. We claim that $\val(\loc_0)=0$;
that \MinPl has no optimal strategy, but a family of
$\varepsilon$-optimal strategies $\stratmin^\varepsilon$ each with
value~$\varepsilon$; and that each $\stratmin^\varepsilon$ requires
\emph{memory whose size depends on $\varepsilon$} and might
\emph{yield a play visiting at least $1/\varepsilon$ times the reset}
between $\loc_0$ and $\loc_1$ (hence the number of resets can not be
bounded). For all $\varepsilon>0$, $\stratmin^\varepsilon$ consists
in: waiting $1-\varepsilon$ time units in $\loc_0$, then going to
$\loc_1$ during the $\lceil 1/\varepsilon\rceil$ first visits to
$\loc_0$; and to go directly to $\loc_f$ afterwards. Against
$\stratmin^\varepsilon$, $\MaxPl$ has two possible choices:
\begin{inparaenum}[$(i)$]
\item either wait $0$ time unit in $\loc_1$, wait $\varepsilon$ time
  units in $\loc_2$, then reach $\loc_f$; or
\item wait $\varepsilon$ time unit in $\loc_1$ then force the cycle by
  going back to $\loc_0$ and wait for $\MinPl$'s next move.
\end{inparaenum}
Thus, all plays according to $\stratmin^\varepsilon$ will visit a
sequence of locations which is either of the form
$\loc_0 (\loc_1 \loc_0)^k\loc_1\loc_2\loc_f$, with
$0\leq k <\lceil 1/\varepsilon\rceil$; or of the form
$\loc_0 (\loc_1
\loc_0)^{\left\lceil\frac{1}{\varepsilon}\right\rceil}\loc_f$.
In the former case, the cost of the play will be
$-k\varepsilon+0+\varepsilon=-(k-1)\varepsilon\leq\varepsilon$; in the
latter, $-\varepsilon(\lceil 1/\varepsilon\rceil)+1\leq 0$. This shows
that $\val(\loc_0)=0$, but there is no optimal strategy as none of
these strategies allow one to guarantee a cost of $0$ (neither does
the strategy that waits $1$ time unit in $\loc_0$).

\begin{figure}[tbp] 
  \centering 
  \begin{tikzpicture}[minimum size=5mm, node distance = 3cm] 
    \everymath{\footnotesize}

    \node[draw,circle, label=left:$0$] at (0,0) (q0) {\makebox[0pt][c]{$\loc_0$}};
    \node[draw,rectangle,right of=q0,label=above:$-1$] (q1) {\makebox[0pt][c]{$\loc_1$}};
    \node[draw,rectangle, right of=q1, label=above:$1$] (q2) {\makebox[0pt][c]{$\loc_2$}};
    \node[draw,circle,accepting, right of=q2] (qf) {\makebox[0pt][c]{$\loc_f$}};

    \path[draw,arrows=-latex'] 
    (q2) edge node[above] {$x=1$} (qf) 
    (q1) edge node[above] {$x\leq 1$} (q2) 
    (q0) edge[bend left=10]   node[above] {$x\leq 1$} (q1)
    (q1) edge[bend left=10] node[below] {$x=1, x:=0$} (q0);

    \draw[arrows=-latex',rounded corners] (q0.north) -- ++(0,.5) -- node[above] {$1$}
    ($ (qf.north)+(0,.5) $) -- (qf) ;

  \end{tikzpicture}
  \caption{A \PTG where the number of resets in optimal plays can not
    be bounded a priori.} \label{fig:ex-ptgrr}
\end{figure}

However, we may apply the result on reset-acyclic \PTG{s} to obtain:
\begin{theorem} 
  The value functions of all one-clock \PTG{s} are cost functions with
  at most exponentially many cutpoints.
\end{theorem}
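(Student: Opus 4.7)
The strategy is to reduce the theorem to the reset-acyclic case (Appendix~\ref{app:raptg}) by a \emph{semantic} substitution of reset values, exploiting determinacy of one-clock \PTG{s} (Theorem~\ref{thm:determined}). Given a one-clock \PTG $\game$, I would first apply the guard-elimination construction of Appendix~\ref{app:raptg}, which replaces $\game$ by an equivalent \PTG whose locations are augmented with their current region, so that all guards become trivial at the cost of only a polynomial blow-up in the state space. It then remains to handle resets, which, as Figure~\ref{fig:ex-ptgrr} demonstrates, cannot in general be eliminated by unfolding since their number along optimal plays is unbounded.

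The key observation is that every reset produces a configuration whose clock valuation is $0$, so the continuation value after a reset into location $\loc'$ is the single quantity $w_{\loc'} := \val_\game(\loc', 0) \in \Rbar$, well-defined by Theorem~\ref{thm:determined}. I would then define a companion \PTG $\hame$ obtained from $\game$ by replacing each resetting transition $(\loc, I, \top, \loc')$ with a non-resetting transition $(\loc, I, \bot, \loc'_f)$ leading to a fresh final location $\loc'_f$ endowed with the constant (hence affine) final cost function $\fgoal_{\loc'_f} \equiv w_{\loc'}$. The game $\hame$ contains no reset transitions at all, so it is trivially reset-acyclic.

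The core claim is that $\val_\hame = \val_\game$, which I would prove by a standard two-sided $\varepsilon$-optimal strategy lifting: any $\varepsilon$-optimal strategy of \MinPl (respectively \MaxPl) in $\hame$ extends to an $\varepsilon$-optimal strategy in $\game$ by continuing, after each mimicked reset, with an $\varepsilon$-optimal strategy from $(\loc', 0)$, whose cost contribution is at most (respectively, at least) $w_{\loc'}\pm\varepsilon$; conversely, any strategy in $\game$ induces a strategy in $\hame$ with a comparable cost by stopping at $\loc'_f$. Letting $\varepsilon\to 0$ and invoking determinacy closes the argument. Applying the reset-acyclic bound of Appendix~\ref{app:raptg} to $\hame$, its value functions are cost functions with at most exponentially many cutpoints, and the same therefore holds of $\val_\game$.

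The main obstacle is that $w_{\loc'}$ is defined in terms of $\val_\game$ itself, so the construction of $\hame$ is not effective---which is fully consistent with the theorem claiming only a structural property rather than an algorithm. A secondary technical point is to verify that the reset-acyclic result of Appendix~\ref{app:raptg} is stated (or can be straightforwardly restated) for arbitrary affine, in particular constant, final cost functions $\fgoalvec$, which requires tracking the dependence on $\fgoalvec$ through its proof. One should also check that the potentially infinite cases $w_{\loc'}\in\{-\infty,+\infty\}$ are handled as in Section~\ref{sec:solving-sptg}, by removing the corresponding locations beforehand without affecting the other values.
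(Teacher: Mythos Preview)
Your proposal is correct and follows essentially the same approach as the paper: the paper's proof also replaces each resetting transition $(\loc,g,\top,\loc')$ by a non-resetting transition to a fresh final location with constant final cost $\val_\game(\loc',0)$, notes that this semantic (non-effective) construction yields a reset-acyclic \PTG with the same value, and then invokes the reset-acyclic bound. Your version is more explicit about the guard-elimination preprocessing and the strategy-lifting justification for value equivalence, but the core idea is identical.
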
 
\begin{proof} 
  Let $\game$ be a one-clock \PTG. Let us replace all transitions
  $(\loc,g,\top,\loc')$ resetting the clock by $(\loc,g,\bot,\loc'')$,
  where $\loc''$ is a new final location with
  $\fgoal_{\loc''}=\val_\game(\loc,0)$---observe that
  $\val_\game(\loc,0)$ exists even if we can not compute it, so this
  transformation is well-defined. This yields a reset-acyclic \PTG
  $\game'$ such that $\val_{\game'}=\val_\game$.
\end{proof}

\bibliographystyle{plain}

\newpage
\changepage{3cm}{3cm}{-1.5cm}{-1.5cm}{}{-1cm}{}{}{}


\appendix

\section{Existence and continuity of the value functions: proof of
  Theorem~\ref{prop:continuity-of-val}\label{app:continuity-of-val}}

We start with the proof of determinacy. For all $k\in \R$, define
$\textit{Threshold}(\game,r)$ as the \emph{qualitative} game which is
played like $\game$, and only the objective of $\MinPl$ is altered (in
order to make it qualitative): now $\MinPl$ wins a play if and only if
the cost of the play is $\leq k$. Further, let $P(k)$ be the set of
prefixes of runs ending in a final vertex and whose cost is less than
or equal to $k$. Then the set of winning plays for $\MinPl$ in this
game is $S=\bigcup_{\rho\in P(k)} \textit{Cone}(\rho)$ where
$\textit{Cone}(\rho)$ denotes the set of plays having $\rho$ as a
prefix. The set $S$ is an open set in the topology induced by
cones. In~\cite{GS1953}, it is shown that in any game whose set of
winning plays is an open set is \emph{determined}, i.e. one of the two
players has a winning strategy. Therefore
$\textit{Threshold}(\game,k)$ is determined for all $k$.

Now let us prove that $\lvalgame = \uvalgame$. First, recall that, by
definition of $\lvalgame$ and $\uvalgame$:
\begin{align}
  \lvalgame(c)&\leq \uvalgame(c) \label{eq:2}
\end{align}
for all configurations $c$. Fix a configuration $c$. We consider
several cases:
\begin{enumerate}
\item First assume that $\lvalgame(c) \in \R$. By definition,
  for all $k>\lvalgame(c)$ and all strategies $\sigma_\MaxPl$,
  $\valgs{\game}{\sigma_\MaxPl}(c)<k$.
  %
  Hence, for all $k>\lvalgame(c)$, $\MaxPl$ has no
  winning strategy in the game
  $\textit{Threshold}(\game,k)$. Therefore, by determinacy of this
  game, $\MinPl$ has a winning strategy. Equivalently, for all $k>\lvalgame(c)$,
  there exists $\sigma^{k}_\MinPl$ such that
  $\valgs{\game}{\sigma^{k}_\MinPl}(c)\leq k$. This implies that:
  \begin{align}
    \uvalgame(c)&\leq \lvalgame(c)\label{eq:1}
  \end{align}
  Hence, by~\eqref{eq:1} and \eqref{eq:2} we conclude that:
  $\lvalgame(c) = \uvalgame(c)$ when these values are finite.

\item In the case where $\lvalgame(c) = +\infty$, we conclude,
  by~\eqref{eq:2} that $\uvalgame(c)=+\infty$ too.

\item Finally, in the case where $\lvalgame(c) = -\infty$ then for all
  $k$, $\MaxPl$ has no winning strategy for
  $\textit{Threshold}(\game,k)$. Therefore, by determinacy, $\MinPl$
  has a winning strategy $\sigma^k_\MinPl$ in
  $\textit{Threshold}(\game,k)$. Thus, for all $k$:
  $\uvalgame\leq \valgs{\game}{\sigma^{k}_\MinPl}(c) \leq k$, and:
  $\uvalgame=-\infty$.
\end{enumerate}

\medskip We then turn to the proof of continuity. Therefore, our goal
is to show that for every location $\loc$, region $r\in \reggame$ and
valuations $\valuation$ and $\valuation'$ in $r$,
\[|\val(\loc,\valuation)-\val(\loc,\valuation')| \leq
\maxPriceLoc|\valuation-\valuation'|.\]

This is equivalent to showing
\[\val(\loc,\valuation)\leq\val(\loc,\valuation') +
\maxPriceLoc|\valuation-\valuation'|\quad \textrm{and}\quad
\val(\loc,\valuation')\leq\val(\loc,\valuation) +
\maxPriceLoc|\valuation-\valuation'|\,.\]
As those two equations are symmetric with respect to $\valuation$
and $\valuation'$, we only have to show either of them. We will thus
focus on the latter, which, by using the upper value, can be
reformulated as: for all strategies $\stratmin$ of \MinPl, there
exists a strategy $\stratmin'$ such that
$\valgs{}{\stratmin'}(\loc,\valuation')\leq
\valgs{}{\stratmin}(\loc,\valuation)+\maxPriceLoc|\valuation-\valuation'|$.
Note that this last equation is equivalent to say that there exists
a function $g$ mapping plays $\rho'$ from $(\loc,\valuation')$,
consistent with $\stratmin'$ (i.e., such that
$\rho'=\Play{(\loc,\valuation'),\stratmin',\stratmax}$ for some
strategy $\stratmax$ of $\MaxPl$) to plays from $(\loc,\valuation)$,
consistent with $\stratmin$, such that
\[\cost{\run'}\leq
\cost{g(\run')}+\maxPriceLoc|\valuation-\valuation'|\,.\]

Let $r \in \reggame$, $\valuation,\valuation'\in r$ and $\stratmin$
be a strategy of \MinPl. We define $\stratmin'$ and $g$ by induction
on the size of their arguments; more precisely, we define
$\stratmin'(\run'_1)$ and $g(\run'_2)$ by induction on $k$, for all
plays $\run'_1$ and $\run'_2$ from $(\loc,\valuation')$, consistent
with $\stratmin'$ of size $k-1$ and $k$, respectively. We also show
during this induction that for each play
$\run'=(\loc_1,\valuation'_1)\xrightarrow{c'_1}\cdots\xrightarrow{c'_{k-1}}
(\loc_k,\valuation'_k)$
from $(\loc,\valuation')$, consistent with $\stratmin')$, if we let
$(\loc_1,\valuation_1)\xrightarrow{c_1}\cdots\xrightarrow{c_{\ell-1}}
(\loc_k,\valuation_{\ell}) = g(\run')$:
\begin{enumerate}[$(i)$]
\item $\run'$ and $g(\run')$ have the same length, i.e.,
  $\len{\run}=\ell=k=\len{\run'}$, \item for every
  $i\in\{1,\ldots,k\}$, $\valuation_i$ and $\valuation'_i$ are in
  the same region, i.e., there exists a region $r'\in \reggame$ such
  that $\valuation_i\in r'$ and $\valuation'_i\in r'$,
\item $|\valuation_k-\valuation'_k|\leq |\valuation-\valuation'|$,
\item
  $\cost{\run'} \leq \cost{g(\run')} +
  \maxPriceLoc(|\valuation-\valuation'|
  -|\valuation_n-\valuation'_n|)$.
\end{enumerate}
Notice that no property is required on the strategy $\stratmin'$ for
finite plays that do not start in $(\loc,\valuation')$.

If $k=1$, as there is no play of length $0$, nothing has to be done
to define $\stratmin'$. Moreover, in that case,
$\run' = (\loc,\valuation')$ and $g(\run')=(\loc,\valuation)$. Both
plays have size~$1$, $\valuation$ and $\valuation'$ are in the same
region by hypothesis of the lemma, and
$\cost{\run'} = \cost{g(\run')} =0$, therefore all four properties
are true.

Let us suppose now that the construction is done for a given
$k\geq 1$, and perform it for $k+1$. We start with the construction
of $\stratmin'$. To that extent, consider a play
$\run'=(\loc_1,\valuation'_1)\xrightarrow{c'_1}\cdots\xrightarrow{c'_{k-1}}
(\loc_k,\valuation'_k)$
from $(\loc,\valuation')$, consistent with $\stratmin'$ such that
$\loc_{k}$ is a location of player $\MinPl$. Let $t$ and
$\transition$ be the choice of delay and transition made by
$\stratmin$ on $g(\run')$, i.e.,
$\stratmin(g(\run'))=(t,\transition)$. Then, we define
$\stratmin'(\run')=(t',\transition)$ where
$t' = \max(0, \valuation_{k}+t - \valuation'_{k})$.  The delay $t'$
respects the guard of transition $\transition$ since either
$\valuation_{k}+t = \valuation'_{k}+t'$ or
$\valuation_{k}\leq \valuation_{k}+t\leq \valuation'_{k}$, in which
case $\valuation'_{k}$ is in the same region as $\valuation_{k}+t$
since $\valuation_k$ and $\valuation'_k$ are in the same
region. This is illustrated in Figure~\ref{fig:deltaPrime}.

\begin{figure}[tbp]
  \centering
  \begin{tikzpicture}
    \begin{scope}
      \draw[->,dashed] (-0.5,0) -- (2.5,0);  
      \node  at (0,-0.3) {$\valuation'_{k}$};
      \node (vkp) at (0,0) {$\bullet$};
      \node  at (1,0.3) {$\valuation_{k}$};
      \node (vk) at (1,0) {$\bullet$};
      \node (vkdelta) at (2,0) {$\bullet$};

      \node at (1,-1.5) {(a)};

      \draw[->] (vk) to[bend left] node[midway, above] {$t$} (vkdelta);
      \draw[->] (vkp) to[bend right]  node[midway, below]  {$t'$} (vkdelta);
    \end{scope}

    \begin{scope}[xshift = 4cm]
      \draw[->,dashed] (-0.5,0) -- (2.5,0);  
      \node  at (0,-0.3) {$\valuation_{k}$};
      \node (vk) at (0,0) {$\bullet$};
      \node  at (1,0.3) {$\valuation'_{k}$};
      \node (vkp) at (1,0) {$\bullet$};
      \node (vkdelta) at (2,0) {$\bullet$};

      \node at (1,-1.5) {(b)};

      \draw[->] (vk) to[bend right] node[midway, below] {$t$} (vkdelta);
      \draw[->] (vkp) to[bend left]  node[midway,above]  {$t'$} (vkdelta);
    \end{scope}

    \begin{scope}[xshift = 8cm]
      \draw[->,dashed] (-0.5,0) -- (2.5,0);  
      \node  at (0,-0.3) {$\valuation_{k}$};
      \node (vk) at (0,0) {$\bullet$};
      \node (vkdelta) at (1,0) {$\bullet$};
      \node  at (2,-0.3) {$\valuation'_{k}$};
      \node (vkp) at (2,0) {$\bullet$};

      \node at (1,-1.5) {(c)};

      \draw[->] (vk) to[bend right] node[midway, below] {$t$} (vkdelta);
      \draw[->]  (vkp) edge[loop above] node[midway, above] {$t'$} (vkp);
    \end{scope}

  \end{tikzpicture}

  \caption{The definition of $t'$ when (a)
    $\valuation'_k\leq \valuation_k$, (b)
    $\valuation_k < \valuation'_k < \valuation_k+t$, (c)
    $\valuation_k < \valuation_k+t < \valuation'_k $.}
  \label{fig:deltaPrime}
\end{figure}

We now build the mapping $g$. Let
$\run'=(\loc_1,\valuation'_1)\xrightarrow{c'_1}\cdots\xrightarrow{c'_{k}}
(\loc_{k+1},\valuation'_{k+1})$
be a play from $(\loc,\valuation')$ consistent with $\stratmin'$ and
$\tilde{\run}'
=(\loc_1,\valuation'_1)\xrightarrow{c'_1}\cdots\xrightarrow{c'_{k-1}}
(\loc_k,\valuation'_k)$
its prefix of size $k$. Let $(t', \transition)$ be the delay and
transition taken after $\tilde{\run}'$. Using the construction of
$g$ over plays of length $k$ by induction, the play
$g(\tilde{\run}') =
(\loc_1,\valuation_1)\xrightarrow{c_1}\cdots\xrightarrow{c_{k-1}}
(\loc_k,\valuation_k)$
(with $(\loc_1,\valuation_1)=(\loc,\valuation)$) verifies properties
(i), (ii) and (iii). If $\loc_k$ is a location of \MinPl and
$\stratmin(g(\tilde{\run}')) = (t,\transition)$, then
$g(\run')=g(\tilde{\run}')\xrightarrow{c_{k}}
(\loc_{k+1},\valuation_{k+1})$
is obtained by applying those choices on $g(\tilde{\run}')$. If
$\loc_k$ is a location of \MaxPl, the last valuation
$\valuation_{k+1}$ of $g(\run')$ is rather obtained by choosing
action $(t,\transition)$ verifying
$t = \max (0, \valuation'_{k}+t'-\valuation_k)$. Note that
transition $\transition$ is allowed since both $\valuation_k+t$ and
$\valuation'_k + t'$ are in the same region (for similar reasons as
above).

By induction hypothesis $|\tilde{\run}'| = |g(\tilde{\run}')|$, thus
(i) holds, i.e., $|\run'|=|g(\run')|$. Moreover, $\valuation_{k+1}$
and $\valuation'_{k+1}$ are also in the same region as either they
are equal to $\valuation_k+t$ and $\valuation'_k + t'$,
respectively, or $\transition$ contains a reset in which case
$\valuation_{k+1}=\valuation'_{k+1}=0$ which proves (ii). To prove
(iii), notice that we always have either
$\valuation_k+t = \valuation'_k + t'$ or
$\valuation_k\leq \valuation_k+t \leq
\valuation'_k=\valuation'_k+t'$
or
$\valuation'_k\leq \valuation'_k+t \leq
\valuation_k=\valuation_k+t$.
In all of these possibilities, we have
$|(\valuation_k+t) - (\valuation'_k+t)|\leq
|\valuation_k-\valuation'_k|$.
By noticing again that either $\valuation_{k+1}=\valuation_k+t$
and $\valuation'_{k+1}=\valuation'_k + t'$, or $\transition$
contains a reset in which case
$\valuation_{k+1}=\valuation'_{k+1}=0$, we conclude the proof of
(iii). We finally check property (iv). In both cases: 
\begin{align*}
  \cost{\run'} &= \cost{\tilde{\run}'}+\price(\transition)+t' \price(\loc_k) \\
               &\leq
                 \cost{g(\tilde{\run}')}+\maxPriceLoc(|\valuation-\valuation'|
                 -|\valuation_k-\valuation'_k|)+\price(\transition)+t'
                 \price(\loc_k)\\ 
               &=\cost{g(\run')}
                 +(t'-t)\price(\loc_k)+\maxPriceLoc(|\valuation-\valuation'|
                 -|\valuation_k-\valuation'_k|)\,. 
\end{align*}
If $\transition$ contains no reset, let us prove that
\begin{equation}
  |t'-t|=|\valuation_k-\valuation'_k| -
  |\valuation'_{k+1}-\valuation_{k+1}|\,.\label{eq:t'-t}
\end{equation}
Indeed, since $t' = \valuation'_{k+1}-\valuation'_{k}$ and
$t = \valuation_{k+1}-\valuation_{k}$, we have
$|t'-t|
=|\valuation'_{k+1}-\valuation'_{k}-(\valuation_{k+1}-\valuation_{k})|$.
Then, two cases are possible: either
$t' = \max(0, \valuation_{k}+t - \valuation'_{k})$ or
$t = \max (0, \valuation'_{k}+t'-\valuation_k)$. So we
have three different possibilities:
\begin{itemize}
\item if $t'+\valuation'_k=t +\valuation_k$,
  $\valuation'_{k+1}=\valuation_{k+1}$, thus
  $|t'-t|= |\valuation_k-\valuation'_k| =
  |\valuation_k-\valuation'_k| -
  |\valuation'_{k+1}-\valuation_{k+1}|$.
\item if $t=0$, then
  $\valuation_k=\valuation_{k+1}\geq \valuation'_{k+1}\geq
  \valuation'_k$,
  thus
  $|\valuation'_{k+1}-\valuation'_{k}-(\valuation_{k+1}-\valuation_{k})|=
  \valuation'_{k+1}-\valuation'_k = (\valuation_{k}-\valuation'_k) -
  (\valuation_k-\valuation'_{k+1}) = |\valuation_k-\valuation'_k| -
  |\valuation'_{k+1}-\valuation_{k+1}|$.
\item if $t'=0$, then
  $\valuation'_k=\valuation'_{k+1}\geq \valuation_{k+1}\geq
  \valuation_k$,
  thus
  $|\valuation'_{k+1}-\valuation'_{k}-(\valuation_{k+1}-\valuation_{k})|=
  \valuation_{k+1}-\valuation_k = (\valuation'_{k}-\valuation_k) -
  (\valuation'_k-\valuation_{k+1}) = |\valuation_k-\valuation'_k| -
  |\valuation'_{k+1}-\valuation_{k+1}|$.
\end{itemize}
If $\transition$ contains a reset, then
$\valuation'_{k+1}=\valuation_{k+1}$. If
$t' = \valuation_{k}+t - \valuation'_{k}$, we have that
$|t'-t| = |\valuation_k-\valuation'_k| $. Otherwise, either $t=0$
and $t' \leq \valuation_k-\valuation'_k$, or $t'=0$ and
$t \leq \valuation'_k-\valuation_k$.

In all cases, we have proved \eqref{eq:t'-t}. Coupled with the fact
that $|P(\loc_k)|\leq \maxPriceLoc$, we conclude that:
\[\cost{\run'}\leq \cost{g(\run')}
+\maxPriceLoc(|\valuation-\valuation'|
-|\valuation_{k+1}-\valuation'_{k+1}|)\,.\]

Now that $\stratmin'$ and $g$ are defined (noticing that $g$ is
stable by prefix, we extend naturally its definition to infinite
plays), notice that for all play $\run'$ from $(\loc,\valuation')$
consistent with $\stratmin'$, either $\run'$ does not reach a final
location and its cost is $+\infty$, but in this case $g(\run')$ has
also cost $+\infty$; or $\run'$ is finite. In this case let
$\valuation'_k$ be the clock valuation of its last configuration,
and $\valuation_k$ be the clock valuation of the last configuration
of $g(\run')$. Combining $(iii)$ and $(iv)$ we have
$\cost{\run'}\leq
\cost{g(\run')}+\maxPriceLoc|\valuation-\valuation'|$
which concludes the proof.

\section{Non-continuity of the value function with more than one
  clock}
\label{app:pas-continu2}

\begin{figure}[tbp]
  \begin{center}
    \begin{tikzpicture}[node distance=3cm,minimum size=5mm]
      \node[draw,circle,label=above:$5$] (q0)
      {\makebox[0pt][c]{$\loc_0$}};%
      \node[draw,circle,below left of=q0, label=left:$-5$] (c2)
      {\makebox[0pt][c]{$\loc_2$}};%
      \node[draw,circle,right of=q0,accepting] (q1)
      {\makebox[0pt][c]{$\loc_f$}};%
      \node[draw,circle,below right of=q0,label=right:$5$] (c1)
      {\makebox[0pt][c]{$\loc_1$}};
      
      \path[draw,arrows=-latex'] (q0) edge node[above] {$x=0$} (q1)%
      (q0) edge node[above right] {$x=0$} (c1) %
      (c1) edge node[below] {$y=1, y:=0$} (c2) %
      (c2) edge node[above left] {$x=1, x:=0$} (q0);
    \end{tikzpicture}
  \end{center}
  \caption{A PTG with 2 clocks whose value function is not
    continuous inside a region}
  \label{fig:pas-continu2}
\end{figure}
Let us consider the example
in \figurename~\ref{fig:pas-continu2} (that we describe informally
since we did not properly define games with multiple clocks), with
clocks $x$ and $y$.  One can easily check that, starting from a
configuration $(\loc_0,0,0.5)$ in location $\loc_0$ and where $x=0$
and $y= 0.5$, the following cycle can be taken:
$(\loc_0, 0, 0.5)\xrightarrow{0,\transition_0, 0}(\loc_1,0,
0.5)\xrightarrow{0.5, \transition_1, 2.5}(\loc_2,0.5,0)
\xrightarrow{0.5, \transition_2, -2.5}(\loc_0, 0, 0.5)$, %
where $\transition_0$, $\transition_1$ and $\transition_2$ denote
respectively the transitions from $\loc_0$ to $\loc_1$; from $\loc_1$
to $\loc_2$; and from $\loc_2$ to $\loc_0$.  Observe that the cost of
this cycle is null, and that no other delays can be played, hence
$\uval(\loc_0,0,0.5)=0$.  However, starting from a configuration
$(\loc_0, 0, 0.6)$, and following the same path, yields the cycle %
$(\loc_0, 0, 0.6)\xrightarrow{0,e_0, 0}(\loc_1, 0,
0.6)\xrightarrow{0.4, e_1, 2}(\loc_2,0.4,0) \xrightarrow{0.6, e_2,
  -3}(\loc_0, 0, 0.6)$ %
with cost $-1$. Hence, $\uval(\loc_0,0,0.6)=-\infty$, and the function
is not continuous although both valuations $(0, 0.5)$ and $(0, 0.6)$
are in the same region. Observe that this holds even for priced timed
\emph{automata}, since our example requires only one player.

\section{Memory is required for \MinPl to play optimally}
\label{app:ex-JF}

\begin{figure}[tbp]
\begin{center}
  \begin{tikzpicture}[node distance=2cm,minimum size=5mm,>=latex']
      \node[draw,rectangle](1){\makebox[0mm][c]{$\loc_1$}}; 
      \node[draw,circle](2)[right of=1]{\makebox[0mm][c]{$\loc_2$}}; 
      \node[draw,circle,accepting](3)[right of=2]{\makebox[0mm][c]{$\loc_f$}};
      
      \draw[rounded corners]
      (1) -- (0, -1) -- node[above]{$-W$} (4, -1) -- (3);

      \path 
      (1) edge[bend left=10] node[above]{$-1$} (2) 
      (2) edge[bend left=10] node[below]{$0$} (1)
          edge node[above]{$0$} (3)
      (3) edge[loop right] node[right]{$0$} (3);
    \end{tikzpicture}
\end{center}
\caption{An \SPTG where \MinPl needs memory to play optimally}
\label{fig:Weighted-game}
\end{figure}

As an example, consider the \SPTG of
\figurename~\ref{fig:Weighted-game}, where $W$ is a positive integer,
and every location has price-rate~$0$: hence, this game can be seen as
an (untimed) min-cost reachability game as studied in \cite{BGHM14},
where it has been initially studied. We claim that the values of
locations $\loc_1$ and $\loc_2$ are both $-W$. Indeed, consider the
following strategy for $\MinPl$: during each of the first $W$ visits
to $\loc_2$ (if any), go to $\loc_1$; else, go to $\loc_f$. Clearly,
this strategy ensures that the final location $\loc_f$ will eventually
be reached, and that either
\begin{inparaenum}[$(i)$]
\item transition $(\loc_1,\loc_3)$ (with weight~$-W$) will eventually
  be traversed; or
\item transition $(\loc_1,\loc_2)$ (with weight~$-1$) will be traversed at
  least~$W$ times.
\end{inparaenum}
Hence, in all plays following this strategy, the cost will be at
most~$-W$. This strategy allows $\MinPl$ to secure~$-W$, but he can
not ensure a lower cost, since $\MaxPl$ always has the opportunity
to take the transition $(\loc_1,\loc_f)$ (with weight~$-W$) instead of
cycling between $\loc_1$ and $\loc_2$. Hence, $\MaxPl$'s optimal
choice is to follow the transition $(\loc_1,\loc_f)$ as soon as
$\loc_1$ is reached, securing a cost of $-W$. The $\MinPl$ strategy
we have just given is optimal, and there is \emph{no optimal
  memoryless strategy} for $\MinPl$. Indeed, always playing
$(\loc_2,\loc_f)$ does not ensure a cost at most $-W$; and, always
playing $(\loc_2,\loc_1)$ does not guarantee to reach the target, and
this strategy has thus value $+\infty$.

\section{Fake-optimality: proof of Lemma~\ref{lem:fake-optimality}}
\label{app:fake-optimality}

First of all, notice that all finite plays $\run \in \Play{\stratmin}$
with all clock valuations in the same interval $I$ of
$\intervals(\strat)$ verify
$\cost{\run} \leq |I|\maxPriceLoc + |\locs|\maxPriceTrans
-|\run|/|\locs|$.
Indeed, the cost of $\run$ is the sum of the cost generated by staying
in locations, which is bounded by $|I|\maxPriceLoc$, and the cost of
the transitions. One can extract at least $|\run|/|\locs|$ cycles with
transition prices as most $-1$ (by definition of an NC-strategy), and
what remains is of size at most $|\locs|$, ensuring that the
transition cost is bounded by $|\locs|\maxPriceTrans -|\run|/|\locs|$.

Then, by splitting runs among intervals of $\intervals(\stratmin)$, we
can easily obtain that all finite plays $\run\in\Play{\stratmin}$
verify
$\cost{\run} \leq \maxPriceLoc +
(2|\stratmin|-1)\times|\locs|\maxPriceTrans
-(|\run|-|\stratmin|)/|\locs|$.
Indeed, letting $I_1,I_2,\ldots,I_{k}$ the interval of
$\intervals(\stratmin)$ visited during $\run$ (with
$k\leq |\stratmin|$), one can split $\run$ into $k$ runs
$\run = \run_1 \xrightarrow{c_1} \run_2 \xrightarrow{c_2}
\cdots\run_{k}$
such that in $\run_i$ all clock values are in $I_i$ (remember that
\SPTG{s} contain no reset transitions). By the previous inequality, we
have
$\cost{\run_i} \leq |I_i| \maxPriceLoc + |\locs|\maxPriceTrans
-|\run_i|/|\locs|$.
Thus, also splitting costs $c_i$ with respect to discrete cost and
cost of delaying, we obtain
$\cost{\run} = \sum_{i=1}^{k}\cost{\run_i} + \sum_{i=1}^{k-1} c_i \leq
(2|\stratmin|-1)\times |\locs|\maxPriceTrans + \maxPriceLoc -
(|\run|-|\stratmin|)/|\locs$,
since
$|\run|\leq \sum_i |\run_i| + k \leq \sum_i |\run_i|+|\stratmin|$ and
$\sum_i |I_i| \leq 1$.

We now turn to the proof of the lemma. To that extent, we suppose
known an attractor strategy for \MinPl, i.e., a strategy that ensures
to reach a final location: it exists thanks to the hypothesis on the
finiteness of the values. From every configuration, it reaches a final
location with a cost bounded above by a given constant~$M$. Notice
first that, with the hypothesis that no configuration has a value
$-\infty$ in the \SPTG we consider, it is not possible that
$\fakeValue_\game^{\stratmin}(s) = -\infty$ for a configuration $s$
(i.e., that no runs of $\Play{s,\stratmin}$ reach the target). Indeed,
consider the strategy $\stratmin'$ obtained by playing $\stratmin$
until having computed a cost bounded above by a fixed integer
$N\in\Z$, in which case we switch to the attractor strategy. By the
previous inequality, the switch is sure to happen since the right term
tends to $-\infty$ when the length of $\run$ tends to $\infty$. Then,
we know that the value guaranteed by $\stratmin'$ is at most $N$,
implying that the optimal value $\Value(s)$ is $-\infty$, which
contradicts the hypothesis. Then, to prove the result of the lemma,
consider the strategy $\stratmin'$ obtained by playing $\stratmin$
until having computed a cost bounded above by the finite value
$\fakeValue_\game^{\stratmin}(s)-M$, in which case we switch to the
attractor strategy. Once again, the switch is sure to happen, implying
that every play conforming to $\stratmin$ reaches the target:
moreover, the cost of such a play is necessarily at most
$\fakeValue^{\stratmin}(s)$ by construction. Then, we directly obtain
that
$\Value_\game^{\stratmin'}(s) \leq \fakeValue_\game^{\stratmin}(s)$.

\section{\SPTG{s} with only urgent locations: extended version of
  Section~\ref{sec:urgentSPTG}}
\label{app:urgentSPTG}

We rely on the proofs of \cite{BGHM14} that can easily be
adapted in our case, even though we must give the whole explanation
here, knowing that prices coming from goal functions can be rational,
and hence do not strictly fall in the framework of \cite{BGHM14}.

Since all locations in $\game$ are urgent, we may extract from a play
$\rho=(\loc_0,\valuation)\xrightarrow{c_0}
(\loc_1,\valuation)\xrightarrow{c_1}\cdots$
the clock valuations, as well as prices
$c_i = \price(\loc_i,\loc_{i+1})$, hence denoting plays by their
sequence of locations $\loc_0\loc_1\cdots$. The cost of this play is
$\cost{\rho}=+\infty$ if $\loc_k\not\in\LocsFin$ for all $k\geq 0$;
and
$\cost{\rho}=\sum_{i=0}^{k-1}\price(\loc_i,\loc_{i+1}) +
\fgoal_{\loc_k}(\valuation)$
if $k$ is the least position such that $\loc_k\in\LocsFin$.

\subsection{Computing the value for a particular valuation}
\label{sec:particular-valuation} 

Let us show how to compute the vector
$\val_\valuation = (\val(\loc,\valuation))_{\loc\in\Locs}$, for a
given $\valuation \in[0,\rightpoint]$, in terms of a sequence of
values. Following the arguments of \cite{BGHM14}, we first observe
that locations $\loc$ with values $\val_\valuation(\loc)=+\infty$ and
$\val_\valuation(\loc)=-\infty$ can be pre-computed (using
respectively attractor and mean-payoff techniques) and removed from
the game without changing the values of the other nodes. Then, because
of the particular structure of the game $\game$ (where a real cost is
paid only on the target location, all other prices being integers),
for all plays $\rho$, $\cost{\rho}$ is a value from the set
$\Zstar =\Z+\{\fgoal_\loc(\valuation)\mid \loc\in \LocsFin\}$. We
further define $\Zstarinf = \Zstar\cup\{+\infty\}$. Clearly, $\Zstar$
contains at most $|\LocsFin|$ values between two consecutive integers,
i.e.,
\begin{equation}
  \forall i\in \Z\quad |[i,i+1]\cap\Zstar| \leq
  |\LocsFin| \label{eq:at-least-Qfin} 
\end{equation}

Then, we define an operator
$\operator\colon (\Zstarinf)^\Locs \to (\Zstarinf)^\Locs$ mapping
every vector $\vec x = (x_\loc)_{\loc\in\Locs}$ of $(\Zstarinf)^\Locs$
to $\operator(\vec x) = (\operator(\vec x)_\loc)_{\loc\in\Locs}$
defined by
\[\operator(\vec x)_\loc =
\begin{cases}
  \fgoal_\loc(\valuation) &\textrm{if } \loc\in\LocsFin\\
  \displaystyle{\max_{(\loc,\loc')\in \transitions}}
  \big(\price(\loc,\loc')+x_{\loc'}\big)
  &\textrm{if } \loc\in \LocsMax\\
  \displaystyle{\min_{(\loc,\loc')\in\transitions}}
  \big(\price(\loc,\loc')+x_{\loc'}\big) &\textrm{if } \loc\in
  \LocsMin\,.
\end{cases}\]
We will obtain $\val_{\valuation}$ as the limit of the sequence
$(\vec x^{(i)})_{i\geq 0}$ defined by $x^{(0)}_\loc= +\infty$ if
$\loc\not\in\LocsFin$, and $x^{(0)}_\loc=\fgoal_\loc(\valuation)$ if
$\loc\in\LocsFin$, and then $\vec x^{(i)}=\operator(\vec x^{(i-1)})$
for $i\geq 0$. 

The intuition behind is that \emph{$x_i$ is the value of the game
  (when the clock takes value $\valuation$) if we impose that $\MinPl$
  must reach the target within $i$ steps} (and get a payoff of
$+\infty$ if it fails to do so). Formally, for a play
$\rho=\loc_0 \loc_1 \cdots$, we let $\costbound{i}(\rho)=\cost{\rho}$
if $\loc_k\in\LocsFin$ for some $k\leq i$, and
$\costbound{i}(\rho)=+\infty$ otherwise. We further let
$\bupval{i}(\loc)=\inf_{\minstrategy} \sup_{\maxstrategy}
\costbound{i}(\outcomes((\loc,\valuation),\maxstrategy,\minstrategy))$
(where $\maxstrategy$ and $\minstrategy$ are respectively strategies
of $\MaxPl$ and $\MinPl$). Lemma~1 of \cite{BGHM14} allows us to
easily obtain that
\begin{lemma}
  For all $i\geq 0$, and $\loc\in\Locs$:
  $\vec x^{(i)}_\loc=\bupval{i}(\loc)$.
\end{lemma}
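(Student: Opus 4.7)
The plan is to proceed by induction on $i\geq 0$, verifying that the iterates $\vec x^{(i)}$ of the Bellman-like operator $\operator$ match the bounded upper values $\bupval{i}$. This is a standard dynamic-programming argument, essentially the one used in Lemma~1 of \cite{BGHM14}; the only additional care needed here is that final-location values can be arbitrary reals $\fgoal_\loc(\valuation)\in\Zstar$ rather than integers, but since these values are simply constants that propagate through the recurrence, this changes nothing in the argument.

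For the base case $i=0$, I would argue by direct inspection: if $\loc\in\LocsFin$, every play $\rho$ starting in $(\loc,\valuation)$ has already reached the target at position $0$, so $\costbound{0}(\rho)=\cost{\rho}=\fgoal_\loc(\valuation)$, which matches $x^{(0)}_\loc=\fgoal_\loc(\valuation)$; if $\loc\notin\LocsFin$, then no play reaches the target within $0$ steps, so $\costbound{0}(\rho)=+\infty$ uniformly, and hence $\bupval{0}(\loc)=+\infty=x^{(0)}_\loc$.

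For the inductive step, assume $\vec x^{(i-1)}_{\loc'}=\bupval{i-1}(\loc')$ for all $\loc'\in\Locs$. I split on the type of $\loc$. If $\loc\in\LocsFin$, both sides equal $\fgoal_\loc(\valuation)$, since the operator leaves these entries untouched and the play still reaches a target at position~$0$. If $\loc\in\LocsMax$ (symmetrically $\LocsMin$), the key observation is the one-step decomposition of strategies: any profile $(\stratmax,\stratmin)$ applied from $(\loc,\valuation)$ starts with a choice of transition $(\loc,\loc')\in\transitions$ dictated by $\stratmax$, paying cost $\price(\loc,\loc')$, and then continues as a profile in the subgame starting from $(\loc',\valuation)$. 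Since no time elapses, bounding the remaining number of steps by $i-1$ corresponds exactly to the $\costbound{i-1}$ objective from $\loc'$. The standard $\inf$--$\sup$ exchange (valid because $\MaxPl$'s choice here is discrete over finitely many transitions, and the $\min\max$ over a finite set commutes with the outer $\inf\sup$) then yields
\[\bupval{i}(\loc) = \max_{(\loc,\loc')\in\transitions}\bigl(\price(\loc,\loc')+\bupval{i-1}(\loc')\bigr) = \max_{(\loc,\loc')\in\transitions}\bigl(\price(\loc,\loc')+x^{(i-1)}_{\loc'}\bigr) = \operator(\vec x^{(i-1)})_\loc = x^{(i)}_\loc,\]
using the induction hypothesis for the second equality. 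The case $\loc\in\LocsMin$ is identical with $\min$ in place of $\max$.

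The main (very mild) obstacle is justifying the one-step decomposition in the presence of $+\infty$ values: one must check that if $\bupval{i-1}(\loc')=+\infty$ for some successor, taking that transition cannot help $\MinPl$, and symmetrically that $\MaxPl$ can indeed realize such an infinite value by forcing non-reachability within $i-1$ steps. This is handled by noting that $+\infty$ is absorbing in the sum $\price(\loc,\loc')+x^{(i-1)}_{\loc'}$ and that the outer $\inf$/$\sup$ therefore treat such branches consistently with the operator's definition. Once this point is addressed, the induction carries through cleanly.
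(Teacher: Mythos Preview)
Your proposal is correct and follows exactly the standard induction argument that the paper defers to by citing Lemma~1 of \cite{BGHM14}; the paper gives no independent proof beyond that citation, and your write-up simply spells out that argument with the minor adaptation to real final costs $\fgoal_\loc(\valuation)$, which as you note is inconsequential.
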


Now, let us study how the sequence $(\bupval{i})_{i\geq 0}$ behaves
and converges to the finite values of the game. Using again the same
arguments as in \cite{BGHM14} ($\operator$ is a monotonic and
Scott-continuous operator over the complete lattice
$(\Zstarinf)^\Locs$, etc), the sequence $(\bupval{i})_{i\geq 0}$
converges towards the greatest fixed point of $\operator$. Let us now
show that $\val_\valuation$ is actually this greatest fixed
point. First, Corollary~1 of \cite{BGHM14} can be adapted to obtain


\begin{lemma}\label{lem:after-n-steps-no-infty}
  For all $\loc\in\Locs$:
  $\bupval{|\Locs|}(\loc)\leq |\Locs| \maxPriceTrans+\maxPriceFin\,$.
\end{lemma}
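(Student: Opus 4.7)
The plan is to exhibit, for every location $\loc$, a strategy $\minstrategy$ of $\MinPl$ that forces reaching a final location in at most $|\Locs|$ discrete steps, regardless of $\MaxPl$'s responses. Once we have such a $\minstrategy$, any outcome $\rho=\Play{(\loc,\valuation),\minstrategy,\maxstrategy}$ has at most $|\Locs|-1$ transitions before reaching $\LocsFin$, hence $\costbound{|\Locs|}(\rho)=\cost{\rho}$ and this cost is bounded by $(|\Locs|-1)\maxPriceTrans+\maxPriceFin\leq |\Locs|\maxPriceTrans+\maxPriceFin$, yielding the claimed bound on $\bupval{|\Locs|}(\loc)$.

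To construct $\minstrategy$, I will use a classical (memoryless) attractor argument adapted to the turn-based setting. Set $A_0=\LocsFin$ and inductively
\[A_{k+1}=A_k\cup\{\loc\in\LocsMin\mid \exists (\loc,\loc')\in\transitions,\ \loc'\in A_k\}\cup\{\loc\in\LocsMax\mid \forall (\loc,\loc')\in\transitions,\ \loc'\in A_k\}.\]
This sequence is nondecreasing and contained in $\Locs$, so it stabilises to some $A_\infty$ after at most $|\Locs|$ iterations; the absence-of-deadlock assumption on $\game$ ensures the definition is consistent. From the standing hypothesis announced before the lemma (locations with value $+\infty$ have been detected and removed), every remaining location admits a $\MinPl$-strategy that guarantees reaching $\LocsFin$, which is exactly the condition characterising membership in $A_\infty$. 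Hence $A_\infty=\Locs$, and in particular $\Locs=A_{|\Locs|}$.

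I then let $\minstrategy$ be a memoryless strategy that, at every $\loc\in A_{k+1}\setminus A_k$ belonging to $\LocsMin$, plays a transition $(\loc,\loc')$ witnessing $\loc'\in A_k$ (such a transition exists by definition of $A_{k+1}$); at $\MaxPl$-locations in $A_{k+1}\setminus A_k$, \emph{every} available transition already leads to $A_k$. Consequently, along any play consistent with $\minstrategy$, the $A$-level of the current location strictly decreases at each step, and $A_0=\LocsFin$ is reached in at most $|\Locs|$ steps. Combined with the cost bound from the first paragraph, this concludes the proof.

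There is no serious obstacle here; the only delicate point is to justify that $A_{|\Locs|}=\Locs$, which rests squarely on the preliminary elimination of $\pm\infty$-valued locations discussed just before the lemma (and itself relying on the techniques of~\cite{BGHM14} transposed to the present setting with affine final cost functions).
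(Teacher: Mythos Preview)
Your proof is correct and follows the standard attractor argument, which is precisely the approach the paper implicitly relies on by deferring to \cite[Corollary~1]{BGHM14}. There is a small cosmetic inconsistency between your first paragraph (``at most $|\Locs|-1$ transitions'') and your third (``at most $|\Locs|$ steps''); the tighter $|\Locs|-1$ bound is in fact justified since $A_0=\LocsFin\neq\emptyset$ forces $A_{|\Locs|-1}=\Locs$, but either version yields the stated inequality, so this does not affect correctness.
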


The next step is to show that the values that can be computed along
the sequence (still assuming that $\val_{\valuation}(\loc)$ is finite
for all $\loc$) are taken from a finite set:
\begin{lemma}
  For all $i\geq 0$ and for all $\loc\in\Locs$:
  \[\bupval{|\Locs|+i}(\loc) \in \possval_{\valuation}=
  [-(|\Locs|-1) \maxPriceTrans-\maxPriceFin, |\Locs|
  \maxPriceTrans+\maxPriceFin] \cap \Zstar\]
  where $\possval_{\valuation}$ has cardinality bounded by
  $|\LocsFin|\times \big( (2|\Locs|-1)\maxPriceTrans + 2\maxPriceFin +
  1\big)$.
\end{lemma}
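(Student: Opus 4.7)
The plan is to decompose the statement into four pieces: the upper bound, the lower bound, membership in $\Zstar$, and the cardinality count. Recall that the preceding paragraphs have reduced us to the case in which $\val_\valuation(\loc) \in \R$ for all $\loc$, so I may invoke finiteness of the values throughout.

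First I would handle the upper bound. The operator $\operator$ is monotonic and Scott-continuous over the complete lattice $(\Zstarinf)^\Locs$, and the sequence $(\bupval{i})_{i\geq 0}$ starts from the top element (with $+\infty$ on non-target locations) and is obtained by iteratively applying $\operator$. Hence $(\bupval{i})_{i\geq 0}$ is pointwise non-increasing. Combining this with Lemma~\ref{lem:after-n-steps-no-infty}, I get $\bupval{|\Locs|+i}(\loc)\leq \bupval{|\Locs|}(\loc)\leq |\Locs|\maxPriceTrans+\maxPriceFin$ for all $i\geq 0$ and all $\loc$.

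Next, for the lower bound, I rely again on the pointwise monotonicity: since the sequence converges to the greatest fixed point of $\operator$, which we have identified with $\val_\valuation$, we have $\bupval{|\Locs|+i}(\loc)\geq \val_\valuation(\loc)$. So it suffices to bound $\val_\valuation(\loc)$ from below by $-(|\Locs|-1)\maxPriceTrans-\maxPriceFin$. Here I would invoke the result of \cite{BGHM14} which guarantees, when all values are finite, a memoryless optimal strategy $\maxstrategy^\star$ for $\MaxPl$. Fixing $\maxstrategy^\star$ turns the game into a one-player shortest-path problem for $\MinPl$ on a finite graph. Any cycle in a $\MinPl$-optimal response must have non-negative transition sum (otherwise $\MinPl$ could iterate it to drive the cost to $-\infty$, contradicting the assumed finiteness of $\val_\valuation$), hence $\MinPl$ has an optimal response that is a simple path of length at most $|\Locs|-1$, whose cost is therefore bounded below by $-(|\Locs|-1)\maxPriceTrans$ for the transitions plus $-\maxPriceFin$ for the terminal contribution.

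For membership in $\Zstar$, I would argue by induction on $i$ that whenever $\bupval{i}(\loc)\neq +\infty$, its value equals $\price(\loc,\loc_1)+\price(\loc_1,\loc_2)+\cdots+\fgoal_{\loc_k}(\valuation)$ for some finite sequence of locations with $\loc_k\in\LocsFin$, hence lies in $\Z+\fgoal_{\loc_k}(\valuation)\subseteq \Zstar$. For $i\geq |\Locs|$, Lemma~\ref{lem:after-n-steps-no-infty} rules out the $+\infty$ case, so $\bupval{|\Locs|+i}(\loc)\in\Zstar$. Combined with the preceding two bounds, this yields $\bupval{|\Locs|+i}(\loc)\in\possval_\valuation$. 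Finally, the cardinality bound follows directly: $\possval_\valuation$ is an interval of length $(2|\Locs|-1)\maxPriceTrans+2\maxPriceFin$, which can be covered by at most $(2|\Locs|-1)\maxPriceTrans+2\maxPriceFin+1$ unit-length subintervals, each containing at most $|\LocsFin|$ points of $\Zstar$ by inequality~\eqref{eq:at-least-Qfin}. The main technical obstacle is the lower bound on $\val_\valuation$, as it is the only step that genuinely uses the finite-value hypothesis together with the structural result on $\MaxPl$'s memoryless optimal strategies imported from \cite{BGHM14}; the other three items are essentially bookkeeping.
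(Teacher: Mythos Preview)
Your decomposition and overall argument match the paper's: upper bound from monotonicity plus Lemma~\ref{lem:after-n-steps-no-infty}, lower bound via $\bupval{i}(\loc)\geq \val_\valuation(\loc)$ together with a bound $\val_\valuation(\loc)>-(|\Locs|-1)\maxPriceTrans-\maxPriceFin$, membership in $\Zstar$, and the counting via~\eqref{eq:at-least-Qfin}. Two small remarks.

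First, your justification of $\bupval{|\Locs|+i}(\loc)\geq \val_\valuation(\loc)$ appeals to the identification of the limit with $\val_\valuation$, but in the paper's logical order that identification is established only \emph{after} the present lemma (it is used to show the sequence stabilises, from which convergence to $\val_\valuation$ is then derived). The inequality you need is more elementary and available directly: since $\costbound{i}(\rho)\geq \cost{\rho}$ for every play, one has $\bupval{i}(\loc)\geq \uval(\loc,\valuation)=\val_\valuation(\loc)$ for all $i$. So the step is correct, but the reason you give is circular as stated.

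Second, for the lower bound on $\val_\valuation$ the paper argues from $\MinPl$'s side (following \cite[Lemma~3]{BGHM14}): if $\MinPl$ could secure a cost strictly below $-(|\Locs|-1)\maxPriceTrans-\maxPriceFin$, then any witnessing play must repeat a location and contain a negative-weight cycle that $\MinPl$ can iterate, forcing $\val_\valuation(\loc)=-\infty$. Your route is the dual one, fixing $\MaxPl$'s memoryless optimal strategy and analysing $\MinPl$'s one-player shortest-path problem. Both are valid and amount to the same pigeonhole/negative-cycle observation; your version has the mild cost of importing the existence of $\MaxPl$'s memoryless optimal strategy from \cite{BGHM14}, whereas the paper's version needs nothing beyond the definition of the value.
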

\begin{proof}
  Following the proof of \cite[Lemma~3]{BGHM14}, it is easy to show
  that if $\MinPl$ can secure, from some vertex $\loc$, a cost less
  than $-(|\Locs|-1) \maxPriceTrans-\maxPriceFin$, i.e.,
  $\val(\loc,\valuation)<-(|\Locs|-1) \maxPriceTrans-\maxPriceFin$, then it
  can secure an arbitrarily small cost from that configuration, i.e.,
  $\val(\loc,\valuation)=-\infty$, which contradicts our hypothesis
  that the value is finite.


  Hence, for all $i\geq 0$, for all $\loc$:
  $\bupval{i}(\loc)\geq \val_{\valuation}(\loc)> -(|\Locs|-1)
  \maxPriceTrans-\maxPriceFin$.
  By Lemma~\ref{lem:after-n-steps-no-infty} and since the sequence is
  non-increasing, we conclude that, for all $i\geq 0$ and for all
  $\loc\in\Locs$:
  \[-(| \Locs|-1) \maxPriceTrans-\maxPriceFin <
  \bupval{|\Locs|+i}(\loc)\leq |\Locs|
  \maxPriceTrans+\maxPriceFin\,.\]
  Since all $\bupval{|\Locs|+i}(\loc)$ are also in $\Zstar$, we
  conclude that $\bupval{|\Locs|+i}(\loc)\in\possval_{\valuation}$ for
  all $i\geq 0$. The upper bound on the size of
  $\possval_{\valuation}$ is established by
  \eqref{eq:at-least-Qfin}. 
\end{proof}

This allows us to bound the number of iterations needed for the
sequence to stabilise. The worst case is where all locations are
assigned a value bounded below by
$-(|\Locs|-1) \maxPriceTrans-\maxPriceFin$ from the highest possible
value where all vertices are assigned a value bounded above by
$|\Locs| \maxPriceTrans+\maxPriceFin$, which is itself reached after
$|\Locs|$ steps. Hence:
\begin{corollary}
  The sequence $(\bupval{i})_{i\geq 0}$ stabilises after a number of
  steps at most
  $|\LocsFin|\times |\Locs|\times \big( (2|\Locs|-1)\maxPriceTrans +
  2\maxPriceFin + 1\big)+ |\Locs|$.
\end{corollary}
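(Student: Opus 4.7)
The plan is to combine three ingredients already established in the section: the monotonicity of the iteration, the absorption bound after $|\Locs|$ steps, and the fact that once inside the corridor, each component lives in the finite set $\possval_\valuation$. The argument is a standard descending-chain / pigeonhole count.

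First I would recall that $\operator$ is monotonic on the complete lattice $(\Zstarinf)^\Locs$ and that $\vec{x}^{(0)}$ is the top element on the non-final locations (while being fixed to $\fgoal_\loc(\valuation)$ on final ones, which the operator preserves). A standard induction then gives $\vec{x}^{(i+1)} \vleq \vec{x}^{(i)}$ for every $i \geq 0$, so the sequence $(\bupval{i})_{i\geq 0}$ is non-increasing componentwise. In particular, once $\bupval{i+1}=\bupval{i}$ the sequence has reached a fixed point of $\operator$ and stabilises forever.

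Second, I would invoke the previous lemmas to confine the sequence inside a finite box from step $|\Locs|$ onwards. By Lemma~\ref{lem:after-n-steps-no-infty}, $\bupval{|\Locs|}(\loc) \leq |\Locs|\maxPriceTrans+\maxPriceFin$ for every $\loc$, and by the previous lemma $\bupval{|\Locs|+i}(\loc) \in \possval_\valuation$ for every $i\geq 0$ and every~$\loc$, where
\[
|\possval_\valuation| \;\leq\; |\LocsFin|\times\bigl((2|\Locs|-1)\maxPriceTrans+2\maxPriceFin+1\bigr).
\]

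Third comes the counting step. Suppose that between steps $|\Locs|$ and $|\Locs|+N$ the sequence never stabilises; then at each such step $\bupval{|\Locs|+j+1} \neq \bupval{|\Locs|+j}$, and by monotonicity at least one component strictly decreases. Since each of the $|\Locs|$ components ranges over the finite set $\possval_\valuation$ and is non-increasing, each component can strictly decrease at most $|\possval_\valuation|$ times in total. Summing over components yields $N \leq |\Locs|\cdot|\possval_\valuation|$. Adding the initial $|\Locs|$ steps needed to enter the box gives the stated bound
\[
|\Locs| + |\Locs|\cdot|\possval_\valuation| \;\leq\; |\LocsFin|\times|\Locs|\times\bigl((2|\Locs|-1)\maxPriceTrans+2\maxPriceFin+1\bigr)+|\Locs|.
\]

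There is no real obstacle here, since monotonicity, the location-count absorption lemma, and the size bound on $\possval_\valuation$ have all been supplied just above. The only care to take is that $\possval_\valuation$ is indeed closed under the iteration (i.e.\ that $\bupval{|\Locs|+i}$ cannot leave the interval once inside), which follows because the lower bound $-(|\Locs|-1)\maxPriceTrans-\maxPriceFin$ is an invariant of the finite-value hypothesis via the $-\infty$ contradiction argument in the proof of the previous lemma, and the upper bound is preserved by monotonicity from step~$|\Locs|$.
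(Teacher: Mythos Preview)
Your proposal is correct and follows essentially the same approach as the paper: the paper's justification is the single sentence preceding the corollary, sketching exactly the descending-chain count you carry out (after $|\Locs|$ steps every component lies in $\possval_\valuation$, the sequence is non-increasing, so at most $|\Locs|\cdot|\possval_\valuation|$ further strict drops are possible). You have simply made the pigeonhole bookkeeping explicit.
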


Finally, the proofs of \cite[Lemma~4 and Corollary~2]{BGHM14} allow us
to conclude that this sequence converges towards the value
$\val_{\valuation}$ of the game (when all values are finite), which
proves that the value iteration scheme of
\algorithmcfname~\ref{algo:value-iteration-fixed} computes exactly
$\val_{\valuation}$ for all $\valuation\in [0,\rightpoint]$.  Indeed,
this algorithm also works when some values are not finite. As a
corollary, we obtain a characterisation of the possible values of
$\game$:
\begin{corollary}\label{cor:possible-values}
  For all $\rightpoint$-\SPTG $\game$ with only urgent locations, for
  all location $\loc\in\Locs$ and valuation
  $\valuation\in [0,\rightpoint]$, $\valgs{\game}{}(\loc,\valuation)$
  is contained in the set
  $\possval_{\valuation}\cup\{-\infty,+\infty\}$ of cardinal
  $O(\mathrm{poly}(|\Locs|,\maxPriceTrans,\maxPriceFin))$,
  pseudo-polynomial with respect to the size of~$\game$.
\end{corollary}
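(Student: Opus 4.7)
The plan is to derive this corollary directly as an immediate consequence of the two preceding lemmas and the convergence of the value iteration scheme. Fix $\loc \in \Locs$ and $\valuation \in [0,\rightpoint]$. I distinguish three cases according to whether $\val_\game(\loc,\valuation)$ is $+\infty$, $-\infty$, or finite. The first two cases are trivially handled since the statement of the corollary explicitly adds $\{-\infty,+\infty\}$ to $\possval_\valuation$.

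For the finite case, I invoke the adaptation of Lemma~4 and Corollary~2 of~\cite{BGHM14} (as stated just above the corollary) to conclude that the monotonic sequence $(\bupval{i}(\loc))_{i \geq 0}$ stabilises in finitely many steps to $\val_\game(\loc,\valuation)$. Combining with the preceding Lemma, which establishes that $\bupval{|\Locs|+i}(\loc) \in \possval_\valuation$ for every $i \geq 0$, I obtain that the stabilised value, equal to $\val_\game(\loc,\valuation)$, also belongs to $\possval_\valuation$. This yields the inclusion stated in the corollary.

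It then remains to bound $|\possval_\valuation|$. By definition, $\possval_\valuation = [-(|\Locs|-1)\maxPriceTrans - \maxPriceFin,\, |\Locs|\maxPriceTrans + \maxPriceFin] \cap \Zstar$. The length of this interval is $(2|\Locs|-1)\maxPriceTrans + 2\maxPriceFin$, and since $\Zstar = \Z + \{\fgoal_\loc(\valuation) \mid \loc \in \LocsFin\}$, inequality~\eqref{eq:at-least-Qfin} gives that each unit-length subinterval $[i,i+1]$ contains at most $|\LocsFin|$ elements of $\Zstar$. Multiplying the two bounds yields $|\possval_\valuation| \leq |\LocsFin| \times \big((2|\Locs|-1)\maxPriceTrans + 2\maxPriceFin + 1\big)$, which is polynomial in $|\Locs|$, $\maxPriceTrans$ and $\maxPriceFin$, hence pseudo-polynomial in the size of $\game$.

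No real obstacle arises: all the substantive work has already been done in Lemma~\ref{lem:after-n-steps-no-infty} and the subsequent lemma confining $\bupval{|\Locs|+i}(\loc)$ inside the window $[-(|\Locs|-1)\maxPriceTrans-\maxPriceFin,\,|\Locs|\maxPriceTrans+\maxPriceFin]$; the only point requiring care is to make sure the cited convergence result of~\cite{BGHM14} transfers to our setting where the final cost functions may contribute rational (rather than integer) values, but this is precisely why the ambient set $\Zstar$ is defined as $\Z + \{\fgoal_\loc(\valuation) \mid \loc \in \LocsFin\}$ rather than $\Z$, which keeps all arguments formally unchanged.
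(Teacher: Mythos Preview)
Your proposal is correct and follows exactly the approach the paper intends: the paper presents this result as an immediate corollary of the preceding lemma (placing $\bupval{|\Locs|+i}(\loc)$ in $\possval_\valuation$) together with the convergence of value iteration to $\val_\valuation$ from \cite{BGHM14}, and you have spelled out precisely that derivation, including the same cardinality bound $|\LocsFin|\times\big((2|\Locs|-1)\maxPriceTrans + 2\maxPriceFin + 1\big)$.
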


Finally, Section 3.4 and 3.5 of \cite {BGHM14} explain how to compute
simultaneously optimal strategies for both players. In our context,
this allows us to obtain for every valuation
$\valuation\in[0,\rightpoint]$ and location $\loc$ of an
$\rightpoint$-\SPTG, such that
$\val(\loc,\valuation)\notin \{-\infty,+\infty\}$, a memoryless
optimal strategy for $\MaxPl$, and an optimal switching strategy for
$\MinPl$: a switching strategy is described by a pair
$(\stratmin^1,\stratmin^2)$ of memoryless strategies and a switch
threshold $K$, so that the optimal strategy is obtained by playing
$\stratmin^1$ until the value of the current finite play is below $K$,
in which case, we switch to strategy $\stratmin^2$, that can be taken
as an attractor strategy, that only wants to reach a final location.

\subsection{Study of the complete value functions: $\game$ is finitely
  optimal}

Still for an $\rightpoint$-\SPTG with only urgent locations, we now
study a precise characterisation of the functions
$\val(\loc)\colon \valuation\in[0,\rightpoint] \mapsto
\val(\loc,\valuation)$,
for all $\loc$, in particular showing that these are cost functions of
$\CF{\{[0,\rightpoint]\}}$.

We first define the set $\F_{\game}$ of affine functions over
$[0,\rightpoint]$ as follows:
\[\F_{\game} = \{k+\fgoal_\loc\mid \loc\in\LocsFin \land 
k\in[-(|\Locs|-1)\maxPriceTrans,|\Locs|\maxPriceTrans]\cap\Z\}\]
Observe that this set is finite and that its cardinality is
$2|\Locs|^2\maxPriceTrans$, pseudo-polynomial in the size of
$\game$. Moreover, as a direct consequence of
Corollary~\ref{cor:possible-values}, this set contains enough
information to compute the value of the game in each possible
valuation of the clock, in the following sense:
\begin{lemma}\label{lem:for-all-val-there-is-f-in-F}
  For all $\loc\in \Locs$, for all $\valuation\in [0,\rightpoint]$: if
  $\val(\loc,\valuation)$ is finite, then there is $f\in\F_{\game}$
  such that $\val(\loc,\valuation)=f(\valuation)$.
\end{lemma}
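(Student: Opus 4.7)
The plan is to exploit the iterative characterization of $\val$ as the limit of the value iteration $(\bupval{i})_{i\geq 0}$ from the previous subsection, and to extract a short ``certifying chain'' at the fixed point. By the definition of $\F_\game$, I need to exhibit an integer $n\in[-(|\Locs|-1)\maxPriceTrans,\,|\Locs|\maxPriceTrans]$ together with a final location $\loc'\in\LocsFin$ such that $\val(\loc,\valuation)=n+\fgoal_{\loc'}(\valuation)$.

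I would start by defining, for every location $\loc$ with finite value, the \emph{stabilisation depth} $d(\loc)=\min\{i\geq 0 : \bupval{i}(\loc)=\val(\loc,\valuation)\}$, which equals $0$ whenever $\loc\in\LocsFin$. For every non-final such $\loc$, pick a \emph{fixed-point optimal} neighbour $\mathsf{opt}(\loc)$---that is, one satisfying $\val(\loc,\valuation)=\price(\loc,\mathsf{opt}(\loc))+\val(\mathsf{opt}(\loc),\valuation)$---which exists because $\val_\valuation$ is a fixed point of $\operator$.

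The key technical step is to establish $d(\mathsf{opt}(\loc))\leq d(\loc)-1$. For a MinPl location, combine $\val(\loc,\valuation)=\bupval{d(\loc)}(\loc)\leq \price(\loc,\mathsf{opt}(\loc))+\bupval{d(\loc)-1}(\mathsf{opt}(\loc))$ (from a single step of the iteration) with the monotonicity bound $\bupval{d(\loc)-1}(\mathsf{opt}(\loc))\geq \val(\mathsf{opt}(\loc),\valuation)$ and the fixed-point equality $\val(\loc,\valuation)=\price(\loc,\mathsf{opt}(\loc))+\val(\mathsf{opt}(\loc),\valuation)$; squeezing these inequalities forces $\bupval{d(\loc)-1}(\mathsf{opt}(\loc))=\val(\mathsf{opt}(\loc),\valuation)$. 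The MaxPl case is symmetric, provided one picks $\mathsf{opt}(\loc)$ as a \emph{fixed-point} maximiser rather than merely an iteration-step maximiser; this is the main subtle point, since a step-$d(\loc)$ maximiser need not coincide with a fixed-point maximiser when the neighbour's value has not yet stabilised, so the squeezing only yields the bound when the chosen $\mathsf{opt}(\loc)$ is already optimal at the fixed point.

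Iterating $\mathsf{opt}$ from $\loc$ then produces a chain $\loc=\loc_0,\loc_1,\ldots,\loc_m$ along which $d$ strictly decreases; the locations are therefore pairwise distinct, so $m\leq |\Locs|-1$, and the chain ends at $\loc_m$ with $d(\loc_m)=0$, i.e., $\loc_m\in\LocsFin$. Telescoping $\val(\loc_i,\valuation)=\price(\loc_i,\loc_{i+1})+\val(\loc_{i+1},\valuation)$ along the chain yields $\val(\loc,\valuation)=n+\fgoal_{\loc_m}(\valuation)$ with $n=\sum_{i=0}^{m-1}\price(\loc_i,\loc_{i+1})\in\Z$ and $|n|\leq (|\Locs|-1)\maxPriceTrans$, so $f=n+\fgoal_{\loc_m}\in\F_\game$ is the required affine function.
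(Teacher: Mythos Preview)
Your overall strategy---building a short acyclic chain to a final location along which the fixed-point equalities telescope---is sound and is, in fact, more explicit than the paper, which simply records the lemma as a ``direct consequence of Corollary~\ref{cor:possible-values}'' (itself imported from~\cite{BGHM14}). However, the squeeze you describe for \MinPl{} locations does \emph{not} go through with your choice of $\mathsf{opt}(\loc)$.

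Concretely, for $\loc\in\LocsMin$ you pick a \emph{fixed-point} minimiser $\loc^\ast=\mathsf{opt}(\loc)$ and combine
\[
\val(\loc,\valuation)=\bupval{d(\loc)}(\loc)\leq \price(\loc,\loc^\ast)+\bupval{d(\loc)-1}(\loc^\ast)
\quad\text{with}\quad
\bupval{d(\loc)-1}(\loc^\ast)\geq \val(\loc^\ast,\valuation).
\]
Both inequalities, together with $\val(\loc,\valuation)=\price(\loc,\loc^\ast)+\val(\loc^\ast,\valuation)$, yield only $\val(\loc^\ast,\valuation)\leq \bupval{d(\loc)-1}(\loc^\ast)$, i.e., the \emph{same} direction as monotonicity---there is no squeeze, and $d(\loc^\ast)\leq d(\loc)-1$ need not hold. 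A simple counterexample: $\loc\in\LocsMin$ has two successors $a,b$ with $\price(\loc,a)=\price(\loc,b)=0$, $\val(a,\valuation)=\val(b,\valuation)=0$, $d(a)=1$, $d(b)=5$; then $d(\loc)=2$, yet $b$ is a valid fixed-point minimiser with $d(b)=5>d(\loc)-1$.

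The easy fix is to reverse the roles: at a \MinPl{} location, choose $\loc^\ast$ as an \emph{iteration-step} minimiser, i.e., an argmin in $\bupval{d(\loc)}(\loc)=\min_{\loc'}\big(\price(\loc,\loc')+\bupval{d(\loc)-1}(\loc')\big)$. Then $\val(\loc,\valuation)=\price(\loc,\loc^\ast)+\bupval{d(\loc)-1}(\loc^\ast)$ exactly, and comparing with the fixed-point inequality $\val(\loc,\valuation)\leq \price(\loc,\loc^\ast)+\val(\loc^\ast,\valuation)$ gives $\bupval{d(\loc)-1}(\loc^\ast)\leq \val(\loc^\ast,\valuation)$; now monotonicity squeezes to equality, so $d(\loc^\ast)\leq d(\loc)-1$ \emph{and} $\loc^\ast$ is also a fixed-point minimiser (so the telescoping step still applies). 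Your \MaxPl{} argument with a fixed-point maximiser is correct as written. In short, the asymmetry you identified is real, but it points the other way for \MinPl{}: there you must start from the iteration-step optimum, not the fixed-point one. With this correction the rest of your proof (distinct locations along the chain, $m\leq|\Locs|-1$, and $|n|\leq (|\Locs|-1)\maxPriceTrans$) goes through unchanged.
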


We compute the set of intersections of two affine functions of
$\F_\game$:
\[\posscp_\game =\{\valuation\in [0,\rightpoint]\mid \exists
f_1,f_2\in\F_{\game}\quad f_1\neq f_2\land
f_1(\valuation)=f_2(\valuation)\}\,.\]
This set is depicted in Figure~\ref{fig:network-posscp} on an
example. Observe that $\posscp_\game$ contains at most
$|\F_{\game}|^2$ points since all functions from $\F_{\game}$ are
affine, hence they can intersect at most once with every other
function. Thus, the cardinality of $\posscp_\game$ is
$4|\LocsFin|^4(\maxPriceTrans)^2$, also bounded by a pseudo-polynomial
in the size of~$\game$. Moreover, since all functions of $\F_{\game}$
take rational values in $0$ and $\rightpoint\in\Q$, we know that
$\posscp_\game\subseteq \Q$. This set contains all the cutpoints of
the value function of $\game$, as shown in
Proposition~\ref{prop:baseCase}.

\begin{figure}[tbp]
  \centering
  \newdimen\Xabs
  \newdimen\Xord
  \begin{tikzpicture}
    \node[below] at (7,0) {$\valuation$};
    \draw[->,very thin] (-0.3,0) -- (7,0);
    \draw[->,very thin] (0,-1.7) -- (0,5);
    \node[below left] at (0,0) {0};
    \node[below right] at (6,0) {$\rightpoint$};
    \draw[dotted] (6,-1.7) -- (6,4.5);

    \draw[color=purple,draw,name path = fu] (0,3) -- (6,4);
    \path[very thick,color=purple,draw,name path = f] (0,2) -- (6,3);
    \draw[color=purple,draw,name path = fd] (0,1) -- (6,2);

    \draw[color=cyan,draw,name path = gu] (0,4.5) -- (6,0.5);
    \draw[very thick,color=cyan,draw,name path = g] (0,3.5) -- (6,-0.5);
    \draw[color=cyan,draw,name path = gd] (0,2.5) -- (6,-1.5);
    
    \draw[color=olive,draw,name path = hu] (0,2.5) -- (6,2.5);
    \draw[very thick,color=olive,draw,name path = h] (0,1.5) -- (6,1.5);
    \draw[color=olive,draw,name path = hd] (0,0.5) -- (6,0.5);

    \foreach \fA/\fB in {gd/hu, g/fu, g/hu, fu/gu, f/gu, gu/fd, h/gu, gu/hd} {
      \path[name intersections={of={\fA} and {\fB},by={X}}] (X);
      \pgfgetlastxy{\Xabs}{\Xord};
      \node[fill,circle,inner sep=0pt,minimum size=2mm] (Y) at
      (\Xabs,0) {};
      \draw[dotted,thick] (X) -- (Y);
    }

  \end{tikzpicture}
  \caption{Network of affine functions defined by $\F_\game$:
    functions in bold are final affine functions of $\game$, whereas
    non-bold ones are their translations with weights
    $k\in[-(|\Locs|-1)\maxPriceTrans,|\Locs|\maxPriceTrans]\cap\Z$.
    $\posscp_\game$ is the set of abscisses of intersections points,
    represented by black disks.}
  \label{fig:network-posscp}
\end{figure}

Notice, that this result allows us to compute $\val(\loc)$
for every $\loc\in\Locs$. First, we compute the set
$\posscp_{\game}=\{y_1,y_2,\ldots,y_\ell\}$, which can be done in
pseudo-polynomial time in the size of $\game$. Then, for all
$1\leq i\leq \ell$, we can compute the vectors
$\big(\val(\loc,y_i)\big)_{\loc\in\Locs}$ of values in each location
when the clock takes value $y_i$ using
\algorithmcfname~\ref{algo:value-iteration-fixed}. This provides the
value of $\val(\loc)$ in each cutpoint, for all locations $\loc$,
which is sufficient to characterise the whole value function, as it is
continuous and piecewise affine. Observe that all cutpoints, and
values in the cutpoints, in the value function are rational numbers,
so \algorithmcfname~\ref{algo:value-iteration-fixed} is effective.
Thanks to the above discussions, this procedure consists in a
pseudo-polynomial number of calls to a pseudo-polynomial algorithm,
hence, it runs in pseudo-polynomial time. This allows us to conclude
that $\val_\game(\loc)$ is a cost function for all $\loc$. This proves
item (iii) of the definition of finite optimality for \SPTG{s} with
only urgent locations

Let us conclude the proof that \SPTG{s} are finitely optimal by
showing that $\MinPl$ has a fake-optimal NC-strategy, and $\MaxPl$ has
an optimal FP-strategy. Let
$\valuation_1, \valuation_2,\ldots,\valuation_k$ be the sequence of
elements from $\posscp_\game$ in increasing order, and let us assume
$\valuation_0=0$. For all $0\leq i\leq k$ let $f_i^\loc$ be the
function from $\F_\game$ that defines the piece of $\val_\game(\loc)$
in the interval $[\valuation_{i-1},\valuation_i]$ (we have shown above
that such an $f_i^\loc$ always exists). Formally, for all
$0\leq i\leq k$, $f_i^\loc\in\F_\game$ verifies
$\val(\loc,\valuation)=f_i^\loc(\valuation)$, for all
$\valuation\in
[\valuation^\loc_{i-1},\valuation^\loc_i]$. 
Next, for all $1\leq i\leq k$, let $\mu_{i}$ be a value taken in the
middle of $[\valuation_{i-1}, \valuation_i]$, i.e.,
$\mu_i=\frac{\valuation_i+\valuation_{i-1}}{2}$. Note that all
$\mu_i$'s are rational values since all $\valuation_i$'s are. By
applying \SolveInstant in each $\mu_i$, we can compute
$(\val_\game(\loc,\mu_i))_{\loc\in\Locs}$, and we can extract an
optimal memoryless strategy $\stratmax^i$ for $\MaxPl$ and an optimal
switching strategy $\stratmin^i$ for $\MinPl$. Thus we know that, for
all $\loc\in\Locs$, playing $\stratmin^i$ (respectively,
$\stratmax^i$) from $(\loc, \mu_i)$ allows $\MinPl$ (respectively,
$\MaxPl$) to ensure a cost at most (respectively, at least)
$\val_\game(\loc,\mu_i)=f_i^\loc(\mu_i)$. However, it is easy to check
that the bound given by $f_i^\loc(\mu_i)$ holds in every valuation,
i.e., for all $\loc$, for all $\valuation$
\[\valgs{\game}{\stratmin^i}(\loc,\valuation)\leq f_i^\loc(\valuation)
\qquad \text{ and } \qquad
\valgs{\game}{\stratmax^i}(\loc,\valuation)\geq
f_i^\loc(\valuation)\,.\] This holds because:
\begin{inparaenum}[$(i)$]
\item $\MinPl$ can play $\stratmin^i$ from all clock valuations (in
  $[0,r]$) since we are considering an $r$-\SPTG; and
\item $\MaxPl$ does not have more possible strategies from an
  arbitrary valuation $\valuation\in [0,r]$ than from~$\mu_i$, because
  all locations are urgent and time can not elapse (neither from
  $\valuation$, nor from $\mu_i$).
\end{inparaenum}
And symmetrically for $\MaxPl$.

We conclude that $\MinPl$ can
consistently play the same strategy $\stratmin^i$ from all
configurations $(\loc,\valuation)$ with
$\valuation\in [\valuation_{i-1},\valuation_i]$ and secure a
cost which is at most
$f_i^\loc(\valuation)=\val_\game(\loc,\valuation)$, i.e.,
$\stratmin^i$ is optimal on this interval. By definition of
$\stratmin^i$, it is easy to extract from it a fake-optimal
NC-strategy (actually, $\stratmin^i$ is a switching strategy described
by a pair $(\stratmin^1,\stratmin^2)$, and $\stratmin^1$ can be used
to obtain the fake-optimal NC-strategy). The same reasoning applies to
strategies of $\MaxPl$ and we conclude that $\MaxPl$ has an optimal
FP-strategy.

\section{Every \SPTG is finitely optimal}


We start with an auxiliary lemma showing a property of the rates of
change of the value functions associated to non-urgent locations

\begin{lemma}\label{lem:rate}\todo{%
    Pourquoi ne pas définir la fonction $\textit{slope}$, par exemple
    $\textit{slope}^\ell_{\game}(\valuation,\valuation') = \frac{\Value_\game(\loc,\valuation') -
      \Value_\game(\loc,\valuation)}{\valuation'-\valuation}$? A.}%
  Let $\game$ be an $r$-\SPTG, $\loc$ and $\loc'$ be non-urgent
  locations of $\MinPl$ and $\MaxPl$, respectively. Then for all
  $0\leq \valuation<\valuation'\leq r$:
  \[\frac{\Value_\game(\loc,\valuation') -
    \Value_\game(\loc,\valuation)}{\valuation'-\valuation} \geq
  -\price(\loc)\qquad \textrm{ and } \qquad
  \frac{\Value_\game(\loc',\valuation') -
    \Value_\game(\loc',\valuation)}{\valuation'-\valuation} \leq
  -\price(\loc')\,.\]
\end{lemma}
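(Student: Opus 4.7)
The plan is to observe that in a non-urgent location, the owning player may always decide to \emph{wait} some additional time before taking any transition. This gives an easy strategy-based bound on the value function, since waiting $t$ time units in a location $\loc$ costs exactly $t\,\price(\loc)$. Both inequalities follow by choosing an $\varepsilon$-optimal strategy of the relevant player from the later configuration and prepending a waiting phase to it.

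I will prove the first inequality in detail; the second is strictly symmetric with the roles of \MinPl and \MaxPl swapped. Fix $\varepsilon>0$. By determinacy (Theorem~\ref{thm:determined}), let $\stratmin^\varepsilon$ be an $\varepsilon$-optimal strategy of \MinPl from $(\loc,\valuation')$, i.e., $\valgs{\game}{\stratmin^\varepsilon}(\loc,\valuation')\leq \Value_\game(\loc,\valuation')+\varepsilon$. I define a strategy $\stratmin'$ from $(\loc,\valuation)$ that mimics $\stratmin^\varepsilon$ but prepends a delay of $\valuation'-\valuation$ in $\loc$. Concretely, on the initial play $(\loc,\valuation)$, if $\stratmin^\varepsilon(\loc,\valuation')=(t_0,\transition_0)$, then $\stratmin'(\loc,\valuation)=(t_0+(\valuation'-\valuation),\transition_0)$; this choice is legal because $\loc$ is non-urgent, every guard in the $r$-\SPTG is $[0,r]$, and $\valuation'+t_0\leq r$ since the original choice was legal in $\game$. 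On any later prefix, $\stratmin'$ replays $\stratmin^\varepsilon$ along the natural bijection between plays in $\Play{(\loc,\valuation),\stratmin'}$ and plays in $\Play{(\loc,\valuation'),\stratmin^\varepsilon}$, which is well-defined because \SPTG{s} have no resets (so the only effect is an initial time shift).

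Every play $\rho'\in\Play{(\loc,\valuation),\stratmin'}$ therefore satisfies $\cost{\rho'}=(\valuation'-\valuation)\price(\loc)+\cost{\rho}$ for the corresponding $\rho\in\Play{(\loc,\valuation'),\stratmin^\varepsilon}$. Taking the supremum over $\MaxPl$'s strategies and using $\varepsilon$-optimality of $\stratmin^\varepsilon$:
\[\Value_\game(\loc,\valuation)\;\leq\;\valgs{\game}{\stratmin'}(\loc,\valuation)\;=\;(\valuation'-\valuation)\price(\loc)+\valgs{\game}{\stratmin^\varepsilon}(\loc,\valuation')\;\leq\;(\valuation'-\valuation)\price(\loc)+\Value_\game(\loc,\valuation')+\varepsilon\,.\]
Letting $\varepsilon\to 0$ and dividing by $\valuation'-\valuation>0$ yields the first inequality. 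For the second, \MaxPl owns the non-urgent location $\loc'$, so the analogous construction starts from an $\varepsilon$-optimal \MaxPl-strategy from $(\loc',\valuation')$ and prepends a delay of $\valuation'-\valuation$ in $\loc'$; against any \MinPl response (which is forced to wait during this prepended delay), the cost is at least $(\valuation'-\valuation)\price(\loc')+\Value_\game(\loc',\valuation')-\varepsilon$, leading to $\Value_\game(\loc',\valuation)\geq(\valuation'-\valuation)\price(\loc')+\Value_\game(\loc',\valuation')$ after $\varepsilon\to 0$, which rearranges to the desired inequality.

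The main point to check carefully is the formal construction of the shifted strategy: since a strategy returns a single $(t,\transition)$ pair rather than a sequence of delays, the extra waiting must be absorbed into the first choice, and one must verify that the guard $[0,r]$ is still respected — both of which follow from non-urgency of $\loc$ (resp.\ $\loc'$) and the absence of resets in \SPTG{s}. No genuine difficulty arises.
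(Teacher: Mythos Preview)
Your proof is correct and follows essentially the same approach as the paper: take an $\varepsilon$-optimal strategy of the location's owner from $(\loc,\valuation')$, prepend a delay of $\valuation'-\valuation$ at the first move (legal since the location is non-urgent and all guards are $[0,r]$), and conclude by letting $\varepsilon\to 0$. Your write-up is in fact more explicit than the paper's about why the shifted strategy remains valid.
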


\begin{proof}
  For the location $\loc$, the inequality rewrites in
  \[\Value_\game(\loc,\valuation) \leq
  (\valuation'-\valuation)\price(\loc) +
  \Value_\game(\loc,\valuation')\,.\]
  Using the upper definition of the value (thanks to the determinacy
  result of Theorem~\ref{thm:determined}), it suffices to prove, for
  all $\varepsilon>0$, the existence of a strategy $\minstrategy$
  such that for all strategies $\maxstrategy$ of the opponent
  \[\cost{\Play{(\loc,\valuation),\minstrategy,\maxstrategy}} \leq
  (\valuation'-\valuation)\price(\loc) +
  \Value_\game(\loc,\valuation') + \varepsilon\,.\]
  The definition of the value implies the existence of a strategy
  $\minstrategy'$ such that for all strategies $\maxstrategy$
  \[\cost{\Play{(\loc,\valuation'),\minstrategy',\maxstrategy}} \leq
  \Value_\game(\loc,\valuation') + \varepsilon\,.\]
  Then, $\minstrategy$ can be obtained by playing from
  $(\loc,\valuation)$, at the first turn, as prescribed by
  $\minstrategy'$ but delaying $\valuation'-\valuation$ time units
  more (that we are allowed to do since $\loc$ is non-urgent), and,
  for other turns, directly like $\minstrategy'$. A similar reasoning
  allows us to obtain the result for $\loc'$.
\end{proof}

Then, we observe that the construction of $\game_r$
does not alter the value of the game:

\begin{lemma}\label{lem:waiting} 
  For all $\valuation\in [0,r]$ and locations $\loc$,
  $\Value_{\game}(\loc,\valuation) =
  \Value_{\game_r}(\loc,\valuation)$.
\end{lemma}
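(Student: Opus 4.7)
The conceptual heart of the result is that the new transition from any non-urgent location $m$ to its clone $m^f$ in $\game_r$ is essentially a built-in simulator of ``wait in $m$ until the clock reaches $r$ and then inherit the value $\Value_\game(m,r)$'': indeed, its added final cost $\fgoal'_{m^f}(\nu)=(r-\nu)\price(m)+\Value_\game(m,r)$ is exactly the waiting cost plus the value at $r$. Lemma~\ref{lem:rate} says that $\Value_\game(m,\nu)$ already sits on the ``correct'' side of this waiting expression (above, for \MinPl's locations; below, for \MaxPl's), which should mean that the new options allow neither player to improve on the value of $\game$. By Theorem~\ref{thm:determined} both $\game$ and $\game_r$ are determined, so the desired equality reduces to two inequalities; by symmetry (swapping the roles of the players and using the other side of Lemma~\ref{lem:rate}), it suffices to establish $\Value_{\game_r}(\loc,\valuation)\leq \Value_\game(\loc,\valuation)$.

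Fix $\varepsilon>0$ and take $\minstrategy$ an $\varepsilon$-optimal \MinPl-strategy in $\game$ from $(\loc,\valuation)$. Since $\game_r$ augments $\game$ only with optional outgoing transitions, $\minstrategy$ lifts naturally to a strategy of $\game_r$. The plan is to bound $\cost{\run'}$ for $\run'=\Play{(\loc,\valuation),\minstrategy,\maxstrategy'}$ uniformly over \MaxPl-strategies $\maxstrategy'$ in $\game_r$. If $\run'$ uses no new transition, then it is a play of $\game$ (against the $\game$-projection of $\maxstrategy'$) and its cost is at most $\Value_\game(\loc,\valuation)+\varepsilon$. Otherwise, since every $m^f$ is final, $\run'$ ends at the first use of a new transition, at some configuration $(m,\nu)$ with necessarily $m\in \LocsMax\setminus\LocsUrg$, and its cost equals $c_{\mathrm{pref}}+(r-\nu)\price(m)+\Value_\game(m,r)$, where $c_{\mathrm{pref}}$ is the prefix cost; the \MaxPl case of Lemma~\ref{lem:rate} bounds this by $c_{\mathrm{pref}}+\Value_\game(m,\nu)$.

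The main technical step is then to upper-bound $c_{\mathrm{pref}}+\Value_\game(m,\nu)$ by $\Value_\game(\loc,\valuation)+O(\varepsilon)$. For this I would construct a hybrid \MaxPl-strategy $\maxstrategy$ in $\game$ that follows $\maxstrategy'$ along the prefix up to $(m,\nu)$ and then switches to an $\varepsilon$-optimal \MaxPl-strategy of $\game$ from $(m,\nu)$ onward. The cost of the resulting $\game$-play against $\minstrategy$ is at least $c_{\mathrm{pref}}+\Value_\game(m,\nu)-\varepsilon$ (by $\varepsilon$-optimality of the tail) and at most $\Value_\game(\loc,\valuation)+\varepsilon$ (by $\varepsilon$-optimality of $\minstrategy$), so that $\cost{\run'}\leq \Value_\game(\loc,\valuation)+2\varepsilon$; letting $\varepsilon\to 0$ concludes. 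I expect this hybrid construction, together with the care needed to invoke the correct side of Lemma~\ref{lem:rate} for the owner of the triggering non-urgent location, to be the main obstacle; the reverse inequality then follows by a symmetric argument swapping the roles of the two players.
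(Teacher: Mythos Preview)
The paper does not actually supply a proof of this lemma: in the appendix it is stated right after Lemma~\ref{lem:rate} with only the sentence ``we observe that the construction of $\game_r$ does not alter the value of the game,'' and the main text leaves it as something ``one can check.'' Your proposal is a correct way to fill this gap. Lifting an $\varepsilon$-optimal strategy of one player from $\game$ to $\game_r$, bounding the opponent's use of a new $m^f$-transition via the matching case of Lemma~\ref{lem:rate}, and closing with the hybrid-strategy estimate $c_{\mathrm{pref}}+\Value_\game(m,\nu)\leq\Value_\game(\loc,\valuation)+2\varepsilon$ is precisely the argument the paper's one-line intuition points to, and the symmetric direction goes through in the same way.

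One small point worth making explicit: your sentence ``$\game_r$ augments $\game$ only with optional outgoing transitions'' relies on reading the definition as $T'=T\cup\{\cdots\}$ with the original guards kept intact, so that every $\game$-strategy is literally a $\game_r$-strategy. The paper simultaneously declares $\game_r$ to be an $r$-\SPTG, which taken strictly would truncate all guards to $[0,r]$; under that reading a $\game$-strategy that delays past $r$ does not lift for free. The repair is routine --- whenever the lifted strategy would cross $r$ at one of its own locations $m$, reroute it to $m^f$ and use that $\Value_\game(m,r)$ is at most the value of the specific continuation $\minstrategy$ prescribes from $(m,r)$, so the rerouted cost is still dominated by the cost of the hybrid play in $\game$ --- but flagging it keeps your argument robust to either reading of the construction.
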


Now, we turn our attention to the construction of
$\game_{\Locs',r}$. We show that, even if the locations in $\Locs'$
are turned into urgent locations, we may still obtain for them a
similar result of the rates of change as the one of
Lemma~\ref{lem:rate}:

\begin{lemma}\label{lem:bounded}
  For all locations $\loc \in \Locs'\cap\LocsMin$ (respectively,
  $\loc \in \Locs'\cap\LocsMax$), and $\valuation \in [0,r]$,
  $\Value_{\game_{\Locs',r}}(\loc,\valuation) \leq
  (r-\valuation)\price(\loc)+\Value_{\game}(\loc,r)$ 
  (respectively,  $\Value_{\game_{\Locs',r}}(\loc,\valuation)
  \geq (r-\valuation)\price(\loc) + \Value_{\game}(\loc,r)\;)$.
\end{lemma}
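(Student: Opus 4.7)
The plan is to exploit the \emph{bail-out transitions} added by the $\Waiting$ construction. Recall that, for every originally non-urgent location $\loc$, $\Waiting(\game,r,\vec x)$ introduces a fresh final location $\loc^f$ with final cost $\fgoal'_{\loc^f}(\valuation) = (r-\valuation)\price(\loc) + x_\loc$ and a transition $(\loc, [0,r], \bot, \loc^f)$ of (discrete) cost~$0$. Instantiating $\vec x$ to $(\Value_\game(\loc,r))_{\loc}$ yields $\game_r$, and $\game_{\Locs',r}$ is obtained from $\game_r$ by making the locations of $\Locs'$ urgent — crucially without touching the transitions to the $\loc^f$'s, nor the $\loc^f$'s themselves (which remain final).

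The key observation I would make first is that, for every $\loc \in \Locs'$ (now urgent in $\game_{\Locs',r}$) and every $\valuation \in [0,r]$, the transition $(\loc,[0,r],\bot,\loc^f)$ can still be fired with delay $0$ — which respects the urgency constraint — leading from $(\loc,\valuation)$ to $(\loc^f,\valuation)$ at zero discrete cost. Since $\loc^f$ is final, the play terminates immediately with total cost
\[
\fgoal'_{\loc^f}(\valuation) \;=\; (r-\valuation)\price(\loc) + \Value_\game(\loc,r).
\]

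Then, for $\loc \in \Locs' \cap \LocsMin$, I define a strategy $\stratmin$ of $\MinPl$ that prescribes exactly this first move from every configuration of the form $(\loc,\valuation)$ (and is arbitrary elsewhere). Because $\MinPl$ owns $\loc$, $\MaxPl$ cannot prevent this choice, so every play conforming to $\stratmin$ and starting in $(\loc,\valuation)$ has cost $(r-\valuation)\price(\loc)+\Value_\game(\loc,r)$. Hence $\valgs{\game_{\Locs',r}}{\stratmin}(\loc,\valuation)$ equals this value, and by determinacy (Theorem~\ref{thm:determined}) it upper-bounds $\Value_{\game_{\Locs',r}}(\loc,\valuation)$. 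The symmetric argument, in which $\MaxPl$ fires the bail-out transition from any $\loc \in \Locs' \cap \LocsMax$, yields the matching lower bound.

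There is essentially no obstacle here: the whole argument rests on the single observation that turning $\loc$ urgent in $\game_{\Locs',r}$ leaves intact the zero-delay escape to $\loc^f$, so that the owner of $\loc$ can unilaterally realize the "wait-until-$r$-then-stop" cost \emph{without} needing to actually let time elapse in $\loc$. Consequently the simple one-move strategy analysis above suffices, and no appeal to Proposition~\ref{lem:SameValue} or to a more refined strategy lifting is required.
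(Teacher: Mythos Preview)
Your proof is correct and follows exactly the same approach as the paper's: both simply observe that the owner of $\loc$ can fire the zero-delay transition to $\loc^f$ (which survives the urgency restriction) to secure the cost $(r-\valuation)\price(\loc)+\Value_\game(\loc,r)$. The paper's proof is a one-line version of your argument.
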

\begin{proof}
  It suffices to notice that from $(\loc,\valuation)$, $\MinPl$
  (respectively, $\MaxPl$) may choose to go directly in $\loc^f$
  ensuring the value
  $(r-\valuation)\price(\loc)+ \Value_{\game}(\loc,r)$.
\end{proof}

\subsection{Proof of Proposition~\ref{lem:SameValue}}
\label{app:SameValue}

Let $\stratmin$ and $\stratmax$ be a fake-optimal NC-strategy of
$\MinPl$ and an optimal FP-strategy of $\MaxPl$ in
$\game_{\Locs'\cup\{\loc\},r}$, respectively. Notice that both
strategies are also well-defined finite positional strategies in
$\game_{\Locs',r}$.

First, let us show that $\stratmin$ is indeed an NC-strategy in
$\game_{\Locs',r}$. Take a finite play
$(\loc_0,\valuation_0) \xrightarrow{c_0} \cdots \xrightarrow{c_{k-1}}
(\loc_k,\valuation_k)$,
of length $k\geq 2$, that conforms with $\stratmin$ in
$\game_{\Locs',r}$, and with $\loc_0=\loc_k$ and
$\valuation_0,\valuation_k$ in the same interval $I$ of
$\intervals(\stratmin)$. For every $\loc_i$ that is in $\LocsMin$, and
$\valuation\in I$, $\stratmin(\loc_i,\valuation)$ must have a $0$
delay, otherwise $\valuation_k$ would not be in the same interval as
$\valuation_0$. Thus, the play
$(\loc_0,\valuation_0) \xrightarrow{c'_0} \cdots
\xrightarrow{c'_{k-1}} (\loc_k,\valuation_0)$
also conforms with $\stratmin$ (with possibly different costs).
Furthermore, as all the delays are $0$ we are sure that this play is
also a valid play in $\game_{\Locs'\cup\{\loc\},r}$, in which
$\stratmin$ is an NC-strategy. Therefore,
$\price(\loc_0,\loc_1) + \cdots + \price (\loc_{k-1},\loc_k)\leq -1$,
and $\stratmin$ is an NC-strategy in $\game_{\Locs',r}$.

We now show the result for $\loc\in \LocsMin$. The proof for
$\loc\in\LocsMax$ is a straightforward adaptation. Notice that every
play in $\game_{\Locs',r}$ that conforms with $\stratmin$ is also a
play in $\game_{\Locs'\cup\{\loc\},r}$ that conforms with $\stratmin$,
as $\stratmin$ is defined in $\game_{\Locs'\cup\{\loc\},r}$ and thus
plays with no delay in location $\loc$. Thus, for all
$\valuation\in [a,r]$ and $\loc'\in \locs$, by the optimality result
of Lemma~\ref{lem:fake-optimality},
\begin{equation}\label{eq:fake-value-inequ}
  \Value_{\game_{\Locs',r}}(\loc',\valuation) \leq
  \fakeValue_{\game_{\Locs',r}}^{\stratmin}(\loc',\valuation) =
  \fakeValue_{\game_{\Locs'\cup\{\loc\},r}}^{\stratmin}(\loc',\valuation)=
  \Value_{\game_{\Locs'\cup\{\loc\},r}}(\loc',\valuation)\,.
\end{equation}

To obtain that
$\Value_{\game_{\Locs',r}}(\loc',\valuation)=
\Value_{\game_{\Locs'\cup\{\loc\},r}}(\loc',\valuation)$,
it remains to show the reverse inequality. To that extent, let $\run$
be a finite play in $\game_{\Locs',r}$ that conforms with $\stratmax$,
starts in a configuration $(\loc',\valuation)$ with
$\valuation\in [a,r]$, and ends in a final location. We show by
induction on the length of $\run$ that
$\cost{\run}\geq
\Value_{\game_{\Locs'\cup\{\loc\},r}}(\loc',\valuation)$.
If $\run$ has size $1$ then $\loc'$ is a final configuration and
$\cost{\run} = \Value_{\game_{\Locs'\cup\{\loc\},r}}(\loc',\valuation)
= \fgoal'_{\loc'}(\valuation)$.

Otherwise $\run= (\loc',\valuation) \xrightarrow{c} \run'$ where
$\run'$ is a run that conforms with $\stratmax$, starting in a
configuration $(\loc'',\valuation'')$ and ending in a final
configuration. By induction hypothesis, we have
$\cost{\run'} \geq
\Value_{\game_{\Locs'\cup\{\loc\},r}}(\loc'',\valuation'')$.
We now distinguish three cases, the two first being immediate:
\begin{itemize}
\item \underline{If $\loc'\in \LocsMax$}, then
  $\stratmax (\loc',\valuation)$ leads to the next configuration
  $(\loc'',\valuation'')$, thus
  \[\Value_{\game_{\Locs'\cup\{\loc\},r}}(\loc',\valuation) =
  \Value_{\game_{\Locs'\cup\{\loc\},r}}^{\stratmax}(\loc',\valuation)= c +
  \Value_{\game_{\Locs'\cup\{\loc\},r}}^{\stratmax}(\loc'',\valuation'') \leq c+
  \cost{\run'} = \cost{\run}\,.
  \]

\item \underline{If $\loc'\in \LocsMin$, and $\loc'\neq \loc$ or
    $\valuation''=\valuation$}, we have that
  $(\loc',\valuation) \xrightarrow{c} (\loc'',\valuation'')$ is a
  valid transition in $\game'$. Therefore,
  $\Value_{\game_{\Locs'\cup\{\loc\},r}}(\loc',\valuation) \leq c +
  \Value_{\game_{\Locs'\cup\{\loc\},r}}(\loc'',\valuation'')$, hence
  \[\cost{\run} = c+\cost{\run'} \geq c+
  \Value_{\game_{\Locs'\cup\{\loc\},r}}(\loc'',\valuation'') \geq
  \Value_{\game_{\Locs'\cup\{\loc\},r}}(\loc',\valuation).\]

\item Finally, \underline{if $\loc'=\loc$ and
    $\valuation'' >\valuation$}, then
  $c = (\valuation''-\valuation)\price(\loc)+\price (\loc,\loc'')$.
  As
  $(\loc,\valuation'') \xrightarrow{\price (\loc,\loc'')}
  (\loc'',\valuation'')$
  is a valid transition in $\game_{\Locs'\cup\{\loc\},r}$, we have
  $\Value_{\game_{\Locs'\cup\{\loc\},r}}(\loc,\valuation'') \leq
  \price (\loc,\loc'') +
  \Value_{\game_{\Locs'\cup\{\loc\},r}}(\loc'',\valuation'')$.
  Furthermore, since $\valuation''\in [a,r]$, we can use
  \eqref{eq:SlopeBigger} to obtain
  \[
  \Value_{\game_{\Locs'\cup\{\loc\},r}}(\loc,\valuation) \leq
  \Value_{\game_{\Locs'\cup\{\loc\},r}}(\loc,\valuation'') +
  (\valuation''-\valuation)\price(\loc)\leq
  \Value_{\game_{\Locs'\cup\{\loc\},r}}(\loc'',\valuation'') +
  \price(\loc,\loc'') + (\valuation''-\valuation)\price(\loc)\,.
  \]
  Therefore
  \begin{align*}
    \cost{\run} 
    &=
      (\valuation''-\valuation) \price(\loc)+
      \price(\loc,\loc'')+\cost{\run'} \\
    &\geq  
      (\valuation''-\valuation) \price(\loc)+ \price(\loc,\loc'')+
      \Value_{\game_{\Locs'\cup\{\loc\},r}}(\loc'',\valuation'') 
      \geq \Value_{\game_{\Locs'\cup\{\loc\},r}}(\loc',\valuation)\,.
  \end{align*}
\end{itemize}
This concludes the induction.  As a consequence,
\[\inf_{\stratmin'\in \stratsofmin(\game_{\Locs',r})}
\costgame{\game_{\Locs',r}}{\Play{(\loc',\valuation),\stratmin',\stratmax}}
\geq \Value_{\game_{\Locs'\cup\{\loc\},r}}(\loc',\valuation)\]
for all locations $\loc'$ and $\valuation\in [a,r]$, which finally
proves that
$\Value_{\game_{\Locs',r}}(\loc',\valuation)\geq
\Value_{\game_{\Locs'\cup\{\loc\},r}}(\loc',\valuation)$.
Fake-optimality of $\minstrategy$ over $[a,r]$ in
$\game_{\Locs'\cup\{\loc\},r}$ is then obtained by
\eqref{eq:fake-value-inequ}.

\subsection{Proof that $\Next(r)<r$}
\label{app:operatorNext}
This lemma allows us to effectively compute $\Next(r)$:

\begin{lemma}\label{lem:operatorNext}
  Let $\game$ be an $\SPTG$, $\Locs'\subseteq \Locs\setminus\LocsUrg$,
  and $r\in(0,1]$, such that $\game_{\Locs'',r}$ is finitely optimal
  for all $\Locs''\subseteq \Locs'$.
  Then, $\Next_{\Locs'}(r)$ is the minimal valuation such that for all
  locations $\loc\in \Locs'\cap\LocsMin$ (respectively,
  $\loc \in \Locs'\cap\LocsMax$), the slopes of the affine sections of
  the cost function $\Value_{\game_{\Locs',r}}(\loc)$ on
  $[\Next(r),r]$ are at least (respectively, at most) $-\price(\loc)$.
  Moreover, $\Next(r)<r$.
\end{lemma}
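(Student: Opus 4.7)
The plan is to establish $\Next(r)=v^*$ via two inclusions, where $v^*$ denotes the minimal valuation at which the slope condition stated in the lemma holds, and then derive $\Next(r)<r$ from the $\Waiting$ construction.

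For the harder inclusion $\Next(r)\leq v^*$, I would show that $\Value_{\game_{\Locs',r}}$ and $\Value_\game$ agree on $[v^*,r]$ by iteratively removing urgency from the locations of $\Locs'$ via Proposition~\ref{lem:SameValue}. Fix any enumeration $\Locs'=\{\loc_1,\ldots,\loc_k\}$ and set $\Locs'_i=\Locs'\setminus\{\loc_1,\ldots,\loc_i\}$, so $\Locs'_0=\Locs'$ and $\Locs'_k=\emptyset$. By induction on $i$, I prove that $\Value_{\game_{\Locs'_i,r}}=\Value_{\game_{\Locs',r}}$ on $[v^*,r]$. The inductive step invokes Proposition~\ref{lem:SameValue} with parameters $\Locs'\leftarrow\Locs'_{i+1}$ and $\loc\leftarrow\loc_{i+1}$ (well-defined since $\Locs'_{i+1}\cup\{\loc_{i+1}\}=\Locs'_i$ and $\loc_{i+1}\notin\LocsUrg$, and $\game_{\Locs'_i,r}$ is finitely optimal by hypothesis). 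Its slope hypothesis concerns $\Value_{\game_{\Locs'_i,r}}(\loc_{i+1})$, which by induction agrees with $\Value_{\game_{\Locs',r}}(\loc_{i+1})$ on $[v^*,r]$; so the slope hypothesis is inherited from the defining condition of $v^*$. The conclusion $\Value_{\game_{\Locs'_i,r}}=\Value_{\game_{\Locs'_{i+1},r}}$ on $[v^*,r]$ completes the induction. At $i=k$ we obtain $\Value_{\game_{\Locs',r}}=\Value_{\game_{\emptyset,r}}=\Value_{\game_r}=\Value_\game$ on $[v^*,r]$ (the last equality by Lemma~\ref{lem:waiting}), showing that $v^*$ lies in the set defining $\Next(r)$, hence $\Next(r)\leq v^*$.

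The reverse inclusion $v^*\leq\Next(r)$ is almost immediate: on $[\Next(r),r]$, the functions $\Value_{\game_{\Locs',r}}(\loc)$ and $\Value_\game(\loc)$ coincide by the very definition of $\Next(r)$. Since each $\loc\in\Locs'$ is non-urgent in $\game$, Lemma~\ref{lem:rate} ensures that the slopes of $\Value_\game(\loc)$ satisfy the required inequalities, which then transfer to $\Value_{\game_{\Locs',r}}(\loc)$.

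For the strict inequality $\Next(r)<r$, I exploit the fresh final locations $\loc^f$ introduced by $\Waiting$. From any $(\loc,\valuation)$ with $\loc\in\Locs'$ in $\game_{\Locs',r}$, the owner may move with zero delay to $\loc^f$, paying $(r-\valuation)\price(\loc)+\Value_\game(\loc,r)$. Together with $\Value_{\game_{\Locs',r}}(\loc,r)=\Value_\game(\loc,r)$ (which holds because no delay is possible at valuation $r$, so urgency becomes irrelevant there), this yields the slope bound on every secant between $\valuation$ and $r$, and by continuity the rightmost affine piece of $\Value_{\game_{\Locs',r}}(\loc)$ satisfies the slope condition. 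By finite optimality this piece spans an interval $(v_\loc,r]$ with $v_\loc<r$; choosing $v$ strictly greater than $\max_{\loc\in\Locs'}v_\loc$ (still $<r$) gives $\Next(r)\leq v<r$. The main subtlety is the iterative propagation in the first inclusion: agreement of value functions produced by the induction hypothesis is precisely what allows the slope hypothesis of Proposition~\ref{lem:SameValue} to be re-applied at each step, converting an inductive statement about values into an inductive statement about slopes.
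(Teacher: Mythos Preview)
Your proof is correct and follows essentially the same line as the paper for the characterization $\Next(r)=v^*$: one direction via Lemma~\ref{lem:rate}, the other via iterated application of Proposition~\ref{lem:SameValue} together with Lemma~\ref{lem:waiting} (you phrase the minimality half as a direct inclusion, the paper by contradiction, but the content is identical).

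For $\Next(r)<r$, your argument differs slightly from the paper's. You observe directly that the rightmost affine piece of each $\Value_{\game_{\Locs',r}}(\loc)$ satisfies the slope bound, using the move to $\loc^f$ (i.e.\ Lemma~\ref{lem:bounded}) together with the equality $\Value_{\game_{\Locs',r}}(\loc,r)=\Value_\game(\loc,r)$ at the right endpoint; since those rightmost pieces are non-degenerate by finite optimality, this yields $v^*<r$. The paper instead uses the already-established minimality of $\Next(r)$ to exhibit a piece immediately to its left where the slope condition fails, and combines this with Lemma~\ref{lem:bounded} to reach a contradiction with $\Next(r)=r$. Your route is a touch more self-contained (it does not rely on the characterization and avoids the case split on $\Next(r)=0$), at the small price of needing the easy observation that urgency is irrelevant at valuation $r$.
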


\begin{proof}
  Since $\Value_{\game_{\Locs',r}}(\loc)=\Value_\game(\loc)$ on
  $[\Next(r),r]$, and as $\loc$ is non-urgent in $\game$,
  Lemma~\ref{lem:rate} states that all the slopes of
  $\Value_\game(\loc)$ are at least (respectively, at most)
  $-\price(\loc)$ on $[\Next(r),r]$.

  We now show the minimality property by contradiction. Therefore, let
  $r'<\Next(r)$ such that all cost functions
  $\Value_{\game_{\Locs',r}}(\loc)$ are affine on $[r',\Next(r)]$, and
  assume that for all $\loc\in \Locs'\cap\LocsMin$ (respectively,
  $\loc \in \Locs'\cap\LocsMax$), the slopes of
  $\Value_{\game_{\Locs',r}}(\loc)$ on $[r',\Next(r)]$ are at least
  (respectively, at most) $-\price(\loc)$. Hence, this property holds
  on $[r',r]$. Then, by applying Proposition~\ref{lem:SameValue}
  $|\Locs'|$ times (here, we use the finite optimality of the games
  $\game_{\Locs'',r}$ with $\Locs''\subseteq \Locs'$), one can show
  that for all $\valuation\in [r',r]$
  $\Value_{\game_{r}}(\loc,\valuation) =
  \Value_{\game_{\Locs',r}}(\loc,\valuation)$.
  Using Lemma~\ref{lem:waiting}, we also know that for all
  $\valuation\leq r$, and $\loc$,
  $\Value_{\game_{r}}(\loc,\valuation)=\Value_\game(\loc,\valuation)$.
  Thus,
  $\Value_{\game_{r,\Locs'}}(\loc,\valuation) =
  \Value_\game(\loc,\valuation)$.
  As $r'<\Next(r)$, this contradicts the definition of
  $\Next_{\Locs'}(r)$.

  We finally prove that $\Next(r)<r$. This is immediate in case
  $\Next(r)=0$, since $r>0$. Otherwise, from the result obtained
  previously, we know that there exists $r'<\Next(r)$, and
  $\locMin\in \Locs'$ such that $\Value_{\game_{\Locs',r}}(\locMin)$
  is affine on $[r',\Next(r)]$ of slope smaller (respectively,
  greater) than $-\price(\locMin)$ if $\locMin\in\LocsMin$
  (respectively, $\locMin\in\LocsMax$), i.e.,
  \[
  \begin{cases}
    \Value_{\game_{\Locs',r}}(\locMin,r') >
    \Value_{\game_{\Locs',r}}(\locMin,\Next(r))+(\Next(r)-r')
    \price(\locMin) & \text{if } \locMin\in \LocsMin \\
    \Value_{\game_{\Locs',r}}(\locMin,r') <
    \Value_{\game_{\Locs',r}}(\locMin,\Next(r))+(\Next(r)-r')
    \price(\locMin) & \text{if } \locMin\in \LocsMax \,.
  \end{cases}
  \]
  From Lemma~\ref{lem:bounded}, we also know that
  \[
  \begin{cases}
    \Value_{\game_{\Locs',r}}(\locMin,r') \leq
    \Value_{\game_{\Locs',r}}(\locMin,r)+(r-r') \price(\locMin) &
    \text{ if } \locMin\in \LocsMin \\
    \Value_{\game_{\Locs',r}}(\locMin,r') \geq
    \Value_{\game_{\Locs',r}}(\locMin,r)+(r-r') \price(\locMin) &
    \text{ if } \locMin\in \LocsMax\,.
  \end{cases}\] Both equations combined imply
  \[
  \begin{cases}
    \Value_{\game_{\Locs',r}}(\locMin,r) >
    \Value_{\game_{\Locs',r}}(\locMin,\Next(r)) + (\Next(r)-r)
    \price(\locMin) & \text{if } \locMin\in \LocsMin \\
    \Value_{\game_{\Locs',r}}(\locMin,r) <
    \Value_{\game_{\Locs',r}}(\locMin,\Next(r)) + (\Next(r)-r)
    \price(\locMin) & \text{if } \locMin\in \LocsMax
  \end{cases}
  \]
  which is not possible if $\Next(r)=r$.
\end{proof}

\subsection{Pieces of the value functions are segments of $\F_\game$}
\label{app:eqGoal}

\begin{lemma}\label{lem:eqGoal}
  Assume that $\game_{\locMin,r}$ is finitely optimal. If
  $\Value_{\game_{\locMin,r}}(\locMin)$ is affine on a non-singleton
  interval $I\subseteq [0,r]$ with a slope greater\footnote{For this
    result, the order does not depend on the owner of the location,
    but rather depends on the fact that $\locMin$ has minimal price
    amongst locations of $\game$.}  than $-\price(\locMin)$, then
  there exists $f\in \F_\game$ such that for all $\valuation \in I$,
  $\Value_{\game_{\locMin,r}}(\locMin,\valuation) = f(\valuation)$.
\end{lemma}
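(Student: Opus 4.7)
The plan is to show that on $I$, the optimal play from $(\locMin, \valuation)$ in $\game_{\locMin,r}$ is \emph{instantaneous} (no delay is taken anywhere) and terminates in an original final location $\loc_f \in \LocsFin$ of $\game$. This will give the cost as $k + \fgoal_{\loc_f}(\valuation)$ for some integer $k$, and a length bound on the path will yield $k \in \mathcal{I}$, hence $f = k + \fgoal_{\loc_f} \in \F_\game$ as required.

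First I would use the minimality of $\price(\locMin)$ over $\Locs$ to derive $-\price(\loc) \leq -\price(\locMin) < s$ for every $\loc \in \Locs$, where $s$ is the slope of $\Value_{\game_{\locMin,r}}(\locMin)$ on $I$. Next, I would establish the no-delay property by fixing a fake-optimal NC-strategy of $\MinPl$ and an optimal FP-strategy of $\MaxPl$ (which exist by the finite-optimality hypothesis on $\game_{\locMin,r}$); on an affine sub-piece of $I$, the induced play follows a fixed trajectory, so its cost decomposes as $k + \sum_i t_i\,\price(\loc_i) + \fgoal'_{\loc_{\mathrm{end}}}(\valuation + \sum_i t_i)$, where $t_i \geq 0$ is the delay at the $i$-th visited non-urgent location $\loc_i$, $k$ is the total discrete transition cost, and $\loc_{\mathrm{end}}$ is the final location reached. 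A linear-programming analysis of how each $t_i$ is chosen by its owner yields either $t_i = 0$ or $t_i$ saturated; the latter case contributes slope $-\price(\loc_i) \leq -\price(\locMin) < s$ to the total value, contradicting the affine slope $s$. Hence every $t_i$ is $0$ and the play is instantaneous.

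Once the play is known to be instantaneous, the slope of $\Value_{\game_{\locMin,r}}(\locMin)$ on the sub-piece coincides with the slope of $\fgoal'_{\loc_{\mathrm{end}}}$. Waiting clones $\loc^f$ from the $\Waiting$ construction have slope $-\price(\loc) \leq -\price(\locMin) < s$ and are thus excluded, so $\loc_{\mathrm{end}} = \loc_f$ is an original final location of $\game$. To bound $|k|$, I would invoke finiteness of the values: any cycle in the instantaneous optimal play must contribute zero discrete cost (otherwise iterating it would drive $\Value$ to $\pm\infty$), so pruning zero-cost cycles yields a simple path of at most $|\Locs|-1$ transitions, whence $|k| \leq (|\Locs|-1)\maxPriceTrans$ and $k \in \mathcal{I}$. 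The hard part will be rigorously justifying the no-delay step in the two-player setting: it requires showing that the affine slope $s$ propagates to the sub-intervals visited at each location and that the linear-programming decomposition truly captures the optimal joint interaction of both players, so that Lemma~\ref{lem:rate} together with the minimality of $\price(\locMin)$ forces the contradiction whenever some $t_i$ is positive.
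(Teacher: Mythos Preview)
Your approach is essentially the paper's: fix a fake-optimal NC-strategy $\stratmin$ and an optimal FP-strategy $\stratmax$, restrict to a non-singleton sub-interval $I'\subseteq I$ on which both behave uniformly, show that any positive delay in the induced play would make the slope at most $-\price(\locMin)$ (via minimality of $\price(\locMin)$), hence the play is instantaneous; then the same slope comparison rules out the waiting clones $\loc^f$, and the final location lies in $\LocsFin$.

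The one step that does not go through as you wrote it is the bound on $k$. Your argument that ``any cycle in the instantaneous optimal play must contribute zero discrete cost, otherwise iterating it would drive $\Value$ to $\pm\infty$'' is not sound in a two-player setting: a cycle may alternate between locations of both players, so neither can unilaterally iterate it. In fact, since $\stratmin$ is an NC-strategy, any cycle in a play conforming to $\stratmin$ with all valuations in the same interval has discrete cost $\leq -1$, so zero-cost cycles cannot occur. The paper's argument is simpler and direct: once all delays are zero, every configuration along the play is $(\loc_j,\valuation)$ for the same $\valuation$; since $\stratmin$ and $\stratmax$ are memoryless, a repeated location would force the play to loop forever and never reach a final location. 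But the play does reach one (if not, the NC-property makes prefix costs tend to $-\infty$, and Lemma~\ref{lem:fake-optimality} then contradicts the optimality of $\stratmax$). Hence the play is already a simple path of length at most $|\Locs|$, giving $k\in[-(|\Locs|-1)\maxPriceTrans,|\Locs|\maxPriceTrans]\cap\Z$ directly, with no pruning needed.
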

\begin{proof}
  Let $\stratmin$ and $\stratmax$ be some fake-optimal NC-strategy and
  optimal FP-strategy in $\game_{\locMin,r}$. As $I$ is a
  non-singleton interval, there exists a subinterval $I'\subset I$,
  which is not a singleton and is contained in a interval of
  $\stratmin$ and of $\stratmax$.

  Let $\valuation \in I'$. As already noticed in the proof of
  Lemma~\ref{lem:r_2-r_1-r_0}, the play
  $\Play{(\locMin,\valuation),\stratmin,\stratmax}$ necessarily
  reaches a final location and has cost
  $\Value_{\game_{\locMin,r}}(\locMin,\valuation)$. Let
  $(\loc_0,\valuation_0) \xrightarrow{c_0} \cdots
  (\loc_k,\valuation_k)$
  be its prefix until the first final location $\loc_k$ (the prefix
  used to compute the cost of the play).  We also let
  $\valuation'\in I'$ be a valuation such that
  $\valuation<\valuation'$.

  Assume by contradiction that there exists an index $i$ such that
  $\valuation<\valuation_i$ and let $i$ be the smallest of such
  indices. For each $j< i$, if $\loc_j\in \LocsMin$, let
  $(t,\transition) = \stratmin(\loc_j,\valuation)$ and
  $(t',\transition') = \stratmin(\loc_j,\valuation')$. Similarly, if
  $\loc_j\in \LocsMax$, we let
  $(t,\transition) = \stratmax(\loc_j,\valuation)$ and
  $(t',\transition') = \stratmax(\loc_j,\valuation')$. As $I'$ is
  contained in an interval of $\stratmin$ and $\stratmax$, we have
  $\transition = \transition'$ and either $t = t' = 0$, or
  $\valuation + t= \valuation'+t'$. Applying this result for all
  $j<i$, we obtain that
  $(\loc_0,\valuation') \xrightarrow{c'_0} \cdots
  (\loc_{i-1},\valuation') \xrightarrow{c'_{i-1}}
  (\loc_i,\valuation_i) \xrightarrow{c_i} \cdots
  (\loc_k,\valuation_k)$
  is a prefix of $\Play{(\locMin,\valuation'),\stratmin,\stratmax}$:
  notice moreover that, as before, this prefix has cost
  $\Value_{\game_{\locMin,r}}(\locMin,\valuation')$.  In particular,
  \[\Value_{\game_{\locMin,r}}(\locMin,\valuation') =
  \Value_{\game_{\locMin,r}}(\locMin,\valuation)
  -(\valuation'-\valuation) \price(\loc_{i-1})\leq
  \Value_{\game_{\locMin,r}}(\locMin,\valuation)
  -(\valuation'-\valuation) \price(\locMin)\]
  which implies that the slope of
  $\Value_{\game_{\locMin,r}}(\locMin)$ is at most $-\price(\locMin)$,
  and therefore contradicts the hypothesis. As a consequence, we have
  that $\valuation_i = \valuation$ for all $i$.
  
  Again by contradiction, assume now that $\loc_k = \loc^f$ for some
  $\loc\in \Locs\setminus\LocsUrg$. By the same reasoning as before,
  we then would have
  $\Value_{\game_{\locMin,r}}(\locMin,\valuation') =
  \Value_{\game_{\locMin,r}}(\locMin,\valuation)
  -(\valuation'-\valuation) \price(\loc)$,
  which again contradicts the hypothesis.

  Therefore, $\loc_k\in \LocsFin$. If we let
  $w=\price(\loc_0,\loc_1)+\cdots+\price(\loc_{k-1},\loc_k)$, we have
  $\Value_{\game_{\locMin,r}}(\locMin,\valuation) =
  w+\fgoal_{\loc_k}(\valuation)$.
  Since $\stratmin$ and $\stratmax$ are FP-strategies, that play
  constantly in valuation $\valuation$, we know that
  $(\loc_0,\valuation) \xrightarrow{c_0} \cdots (\loc_k,\valuation)$
  has no cycle, therefore
  $w\in [-(|\Locs|-1)\maxPriceTrans,|\Locs|\maxPriceTrans]\cap \Z$.
  Notice that the previous developments also show that for all
  $\valuation'\in I'$ (here, $\valuation<\valuation'$ is not needed),
  $\Value_{\game_{\locMin,r}}(\locMin,\valuation') =
  w+\fgoal_{\loc_k}(\valuation')$,
  with the same location $\loc_k$, and weight $k$. Since this equality
  holds on $I'\subseteq I$ which is not a singleton, and
  $\Value_{\game_{\locMin,r}}(\locMin)$ is affine on $I$, it holds
  everywhere on $I$.
\end{proof}

\subsection{Proof of Lemma~\ref{lem:stationarysequence-locMin}}
\label{app:stationarysequence-locMin}

For the first item, we assume $\locMin\in \LocsMin$, since the proof
of the other case only differ with respect to the sense of the
inequalities. From Lemma~\ref{lem:operatorNext}, we know that in
$\game_{\locMin,r_i}$ there exists $r'<r_{i+1}$ such that
$\Value_{\game_{\locMin,r_i}}(\locMin)$ is affine of $[r',r_{i+1}]$
and its slope is smaller that $-\price(\locMin)$, i.e.,
$\Value_{\game_{\locMin,r_i}}(\locMin,r_{i+1})<
\Value_{\game_{\locMin,r_i}}(r') - (r_{i+1}-r')\price(\locMin)$.
Lemma~\ref{lem:bounded} also ensures that
$\Value_{\game_{\locMin,r_i}}(\locMin,r')\leq
\Value_\game(\locMin,r_i) + (r_i-r')\price(\locMin)$.
Combining both inequalities allows us to conclude.

\medskip
We now turn to the proof of the second item, showing the stationarity
of sequence $(r_i)$. We consider first the case where
\underline{$\locMin\in \LocsMax$}.  Let $i>0$ such that $r_i \neq 0$
(if there exist no such $i$ then $r_1=0$). Recall from
Lemma~\ref{lem:operatorNext} that there exists $r'_i<r_i$ such that
$\Value_{\game_{\locMin,r_{i-1}}}(\locMin)$ is affine on $[r'_i,r_i]$,
of slope greater than $-\price(\locMin)$. In particular,
\[
\frac{\Value_{\game_{\locMin,r_{i-1}}}(\locMin,r_i)-
  \Value_{\game_{\locMin,r_{i-1}}}(\locMin,r'_i)} {r_i-r'_i} >
-\price(\locMin)\,.\]
Lemma~\ref{lem:eqGoal} states that on $[r'_i,r_i]$,
$\Value_{\game_{\locMin,r_{i-1}}}(\locMin)$ is equal to some
$f_i\in \F_\game$. As $f_i$ is an affine function,
$f_i(r_i) = \Value_{\game_{\locMin,r_{i-1}}}(\locMin,r_i)$, and
$f_i(r'_i) = \Value_{\game_{\locMin,r_{i-1}}}(\locMin,r'_i)$, for
all~$\valuation$,
\[f_i(\valuation) = \Value_{\game_{\locMin,r_{i-1}}}(\locMin,r_i) +
\frac{\Value_{\game_{\locMin,r_{i-1}}}(\locMin,r'_i)-
  \Value_{\game_{\locMin,r_{i-1}}}(\locMin,r_i)} {r_i - r'_i} (r_i
-\valuation).\]
Since $\game_{\locMin,r_{i-1}}$ is assumed to be finitely optimal, we
know that
$\Value_{\game_{\locMin,r_{i-1}}}(\locMin,r_i) =
\Value_{\game}(\locMin,r_i)$,
by definition of $r_i=\Next_{\locMin}(r_{i-1})$.  Therefore, for all
valuation $\valuation<r_i$, we have
$f_i(\valuation) < \Value_\game(\locMin,r_i) +\price(\locMin)
(r_i-\valuation)$.

Consider then $j> i$ such that $r_j \neq 0$.  We claim that
$f_j \neq f_i$.  Indeed, we have $\Val_\game(\locMin,r_j) = f_j(r_j)$.
As, in $\game$, $\locMin$ is a non-urgent location,
Lemma~\ref{lem:rate} ensures that ($\star$):
$\Value_\game(\locMin,r_j) \geq \Value_\game(\locMin,r_i)
+\price(\locMin) (r_i-r_j)$.
As for all $i'$, $\Value_\game(\locMin,r_{i'}) = f_{i'}(r_{i'})$,
($\star$) is equivalent to
$f_j(r_j) \geq f_i(r_i) +\price(\locMin) (r_i-r_j)$.  Recall that
$f_i$ has a slope strictly greater that $-\price(\locMin)$, therefore
$f_i(r_j) < f_i(r_i) +\price(\locMin) (r_i-r_j) \leq f_j(r_j)$. As a
consequence $f_i \neq f_j$ (this is depicted in
Figure~\ref{fig:fineqfj}).

\begin{figure}\centering
  \begin{tikzpicture}
    \node[above] at (0,5) {$\Value_{\game}(\locMin,\valuation)$};
    \node[below] at (7,0) {$\valuation$};
    \draw[->] (-0.3,0) -- (7,0);
    \draw[->] (0,-0.3) -- (0,5);
    \draw[gray] (6,0) -- (6,5);
    \draw[gray] (3,0) -- (3,5);
    \draw[dashed] (7,4) -- (-0.5,0.25);
    \node[above] at (7,4) {$-\price(\locMin)$};
    
    \draw (6,3.5) -- ++(-0.75,-0.375) -- ++ (-0.75,0.2) -- ++ (-0.75,-0.1) -- ++ (-0.75,-0.375) -- ++ (-1,-0.5) -- ++ (-1,0.4);
    \draw (6,3.5) -- ++(0.5,-0.3);
    \draw[dashed] (3,2.85) -- ++(-2,-1);
    
    \draw[dotted] (6,3.5) -- ++(-1.5,-2);
    \draw[dotted] (3,2.85) -- ++(-1.5,-1.5);
    \node[below] at (4.5,1.5) {$f_i$};
    \node[above] at (1.5,1.35) {$f_j$};
    
    \node[below] at (3,0) {$r_j$};
    \node[below] at (6,0) {$r_i$};

  \end{tikzpicture}
  
  \caption{The case $\locMin\in\LocsMax$: a geometric proof of
    $f_i\neq f_j$. The dotted lines represents $f_i$ and $f_j$, the
    dashed lines have slope $-\price(\locMin)$, and the plain line
    depicts $\Value_\game(\locMin,\cdot)$. Because the slope of $f_i$
    is strictly smaller than $-\price(\locMin)$, and the value at
    $r_j$ is above the dashed line it can not be the case that
    $f_i(r_j) = \Value_\game(\locMin,r_j)=f_j(r_j)$.}
  \label{fig:fineqfj}
\end{figure}

Therefore, there can not be more than $|\F_\game|+1$ non-null elements
in the sequence $r_0\geq r_1\geq \cdots$, which proves that there
exists $i\leq |\F_\game|+2$ such that $r_i=0$.

\medskip We continue with the case where
\underline{$\locMin\in \LocsMin$}. Let
$r_\infty = \inf\{r_i \mid i\geq 0\}$.  In this case, we look at the
affine parts of $\Val_\game(\locMin)$ with a slope greater than
$-\price(\locMin)$, and we show that there can only be finitely many
such segment in $[r_\infty,1]$.  We then show that there is at least
one such segment contained in $[r_{i+1},r_i]$ for all $i$, bounding
the size of the sequence.

In the following, we call \emph{segment} every interval
$[a,b]\subset (r_\infty,1]$ such that $a$ and $b$, are two
consecutive cutpoints of the cost function $\Val_\game(\locMin)$
over $(r_\infty,1]$. Recall that it means that $\Val_\game(\locMin)$
is affine on $[a,b]$, and if we let $a'$ be the greatest cutpoint
smaller than $a$, and $b'$ the smallest cutpoint greater than $b$,
the slopes of $\Val_\game(\locMin)$ on $[a',a]$ and $[b,b']$ are
different from the slope on $[a,b]$.  We abuse the notations by
referring to \emph{the slope of a segment $[a,b]$} for the slope of
$\Val_\game(\locMin)$ on $[a,b]$ and simply call \emph{cutpoint} a
cutpoint of $\Val_\game(\locMin)$.

To every segment $[a,b]$ with a slope greater than
$-\price(\locMin)$, we associate a function $f_{[a,b]}\in \F_\game$
as follows. Let $i$ be the smallest index such that
$[a,b]\cap[r_{i+1},r_i]$ is a non singleton interval $[a',b']$.
Lemma~\ref{lem:eqGoal} ensures that there exists
$f_{[a,b]}\in \F_\game$ such that for all $\valuation \in [a',b']$,
$\Value_\game(\locMin,\valuation) = f_{[a,b]}(\valuation)$.

Consider now two disjoint segments $[a,b]$ and $[c,d]$ with a slope
strictly greater than $-\price(\locMin)$, and assume that
$f_{[a,b]}=f_{[c,d]}$ (in particular both segments have the same
slope). Without loss of generality, assume that $b<c$. We claim that
there exists a segment $[e,g]$ in-between $[a,b]$ and $[c,d]$ with a
slope greater than the slope of $[c,d]$, and that $f_{[e,g]}$ and
$f_{[a,b]}$ intersect over $[b,c]$, in a point of abscisse $x$, i.e.,
$x\in [b,c]$ verifies $f_{[e,g]}(x) = f_{[a,b]}(x)$ (depicted in
Figure~\ref{fig:pureGeometry}).

\begin{figure}\centering
  \begin{tikzpicture}
    \draw (0,0) -- (1,1) -- (2,0.5) -- (2.25,4) -- (4,6.5) -- (5,5) -- (6,6);
    \draw[dashed] (-0.5,-0.5) -- (6.5,6.5);

    \node[above] at (0,0) {$a$}; 
    \node[above] at (1,1) {$b$};
    \node[below] at (2,0.5) {$g$}; 
    \node[above] at (2.25,4) {$e$};
    \node[above] at (4,6.5) {$\alpha$}; 
    \node[below] at (5,5) {$c$}; 
    \node[below] at (6,6) {$d$};
    \node at (2.115,2.115) {$\bullet$};
    \node[below right] at (2.115,2.115) {$x$};
  \end{tikzpicture}
  \caption{In order for the segments $[a,b]$ and $[c,d]$ to be aligned, there must exist a segment with a biggest slope crossing $f_[a,b]$ (represented by a dashed line) between $b$ and $c$.}
  \label{fig:pureGeometry}
\end{figure}

Let $\alpha$ be the greatest cutpoint smaller than $c$. We know that
the slope of $[\alpha,c]$ is different from the one of $[c,d]$.  If it
is greater then define $e=\alpha$ and $x=g=c$, those indeed satisfy
the property.  If the slope of $[\alpha,c]$ is smaller than the one of
$[c,d]$, then for all $\valuation \in [\alpha,c)$,
$\Val_\game(\locMin,\valuation) > f_{[c,d]}(\valuation)$.  Let $x$ be
the greatest point in $[b,\alpha]$ such that
$\Val_\game(\locMin,x) = f_{[c,d]}(x)$. We know that it exists since
$\Val_\game(\locMin,b) = f_{[c,d]}(b)$, and $\Val_\game(\locMin)$ is
continuous.  Observe that
$\Val_\game(\locMin,\valuation) > f_{[c,d]}(\valuation)$, for all
$x<\valuation<c$.  Finally, let $g$ be the smallest cutpoint of
$\Val_\game(\locMin)$ strictly greater than $x$, and $e$ the greatest
cutpoint of $\Val_\game(\locMin)$ smaller than or equal to $x$. By
construction $[e,g]$ is a segment that contains $x$. The slope of the
segment $[e,g]$ is
$s_{[e,g]}=\frac{\Val_\game(\locMin,g) - \Val_\game(\locMin,x)}{g-x}$,
and the slope of the segment $[c,d]$ is equal to
$s_{[c,d]} = \frac{f_{[c,d]}(g) - f_{[c,d]}(x)}{g-x}$. Remembering
that $\Val_\game(\locMin,x) = f_{[c,d]}(x)$, and that
$\Val_\game(\locMin,g) > f_{[c,d]}(g)$ since $g\in (x,c)$, we obtain
that $s_{[e,g]} > s_{[c,d]}$. Finally, since
$\Val_\game(\locMin,x) = f_{[c,d]}(x) = f_{[e,g]}(x)$, it is indeed
the abscisse of the intersection point of $f_{[c,d]}=f_{[a,b]}$ and
$f_{[e,g]}$, which concludes the proof of the previous claim.

For every function $f\in \F_\game$, there are less than $|\F_\game|$
intersection points between $f$ and the other functions of $\F_\game$
(at most one for each pair $(f,f')$). If $f$ has a slope greater than
$-\price(\locMin)$, thanks to the previous paragraph, we know that
there are at most $|\F_\game|$ segments $[a,b]$ such that
$f_{[a,b]}=f$. Summing over all possible functions $f$, there are at
most $|\F_\game|^2$ segments with a slope greater than
$-\price(\locMin)$.

Now, we link those segments with the valuations $r_i$'s, for $i>0$.
By item $(i)$, thanks to the finite-optimality of
$\game_{\locMin,r_{i}}$,
$\Val_\game(\locMin,r_{i+1}) < (r_i-r_{i+1}) \price(\locMin) +
\Val_\game(\locMin,r_i)$.
Furthermore, Lemma~\ref{lem:r_2-r_1-r_0} states that the slope of the
segment directly on the left of $r_i$ is equal to
$-\price(\locMin)$. With the previous inequality in mind, this can not
be the case if $\Value_\game(\locMin)$ is affine over the whole
interval $[r_{i+1},r_{i}]$. Thus, there exists a segment $[a,b]$ of
slope strictly greater than $-\price(\locMin)$ such that
$b\in [r_{i+1},r_{i}]$. As we also know that the slope left to
$r_{i+1}$ is $-\price(\locMin)$, it must be the case that
$a\in [r_{i+1},r_i]$. Hence, we have shown that in-between $r_{i+1}$
and $r_i$, there is always a segment (this is depicted in
Figure~\ref{fig:slopesLocMin}). As the number of such segments is
bounded by $|\F_\game|^2$, we know that the sequence $r_i$ is
stationary in at most $|\F_\game|^2+1$ steps, i.e., that there exists
$i\leq |\F_\game|^2+1$ such that $r_i=0$.

\begin{figure}\centering
  \begin{tikzpicture}
    \node[above] at (0,5) {$\Value_{\game}(\locMin,\valuation)$};
    \node[below] at (7,0) {$\valuation$};
    \draw[->] (-0.3,0) -- (7,0);
    \draw[->] (0,-0.3) -- (0,5);
    \draw[gray] (6,0) -- (6,5);
    \draw[gray] (3,0) -- (3,5);
    \draw[dashed] (7,4) -- (-0.5,0.25);
    \node[above] at (7,4) {$-\price(\locMin)$};
    
    \node[below] at (3,0) {$r_{i+1}$};
    \node[below] at (6,0) {$r_i$};
    
    \draw (6,3.5) -- (5,3);
    \draw[double] (5,3) --  (4,1.5);
    \draw (4,1.5) -- (3,1) -- (2,0.5) --  (1,-0.3);
    \node at (5,3) {$\bullet$};
    \node at (4,1.5) {$\bullet$};
    \node[left, text width = 2.5cm, text centered] at (0,2) {$\Value_\game(r_i) +$\\ $\price(\locMin) (r_{i}-r_{i+1})$};
    \draw[gray] (0,2) -- (3,2);
    
    \draw [dashed] (3,1) -- ++(-2.5,-1.25);

  \end{tikzpicture}
  \caption{The case $\locMin\in\LocsMin$: as the value at $r_{i+1}$ is strictly below $\Value_\game(r_i) + \price(\locMin) (r_{i}-r_{i+1})$, as the slope on the left of $r_i$ and of $r_{i+1}$ is $-\price(\locMin)$, there must exist a segment (represented with a double line) with slope greater than $-\price(\locMin)$ in $[r_{i+1},r_i)$. }
  \label{fig:slopesLocMin}
\end{figure}

\subsection{Proof of Lemma~\ref{lem:r_2-r_1-r_0}}
\label{app:r_2-r_1-r_0}

We denote by $r'$ the smallest valuation (smaller than
$r_1$) such that for all locations~$\loc$, $\Value_{\game}(\loc)$ is
affine over $[r',r_1]$. Then, the proof goes by contradiction: using
Lemma~\ref{lem:operatorNext}, we assume that for all
$\loc \in \Locs'\cap \LocsMin$ (respectively,
$\loc \in \Locs'\cap \LocsMax$):
\begin{itemize}
\item either ($\neg(i)$) the slope of $\Value_{\game}(\loc)$ on
  $[r',r_1]$ is greater (respectively, smaller) than
  $-\price(\loc)$,
\item or ($(i)\wedge \neg(ii)$) for all $\valuation\in [r',r_1]$,
  $\Value_{\game}(\loc,\valuation) = \Value_{\game}(\loc,r_0) +
  \price(\loc) (r_0-\valuation)$.
\end{itemize}

Let $\stratmin^0$ and $\stratmax^0$ (respectively, $\stratmin^1$ and
$\stratmax^1$) be a fake-optimal NC-strategy and an optimal
FP-strategy in $\game_{\Locs',r_0}$ (respectively,
$\game_{\Locs',r_1}$). Let
$r'' = \max (\points(\stratmin^1)\cup\points(\stratmax^1))\cap
[r',r_1)$,
so that strategies $\stratmin^1$ and $\stratmax^1$ have the
\emph{same behaviour} on all valuations of the interval $(r'',r_1)$,
i.e., either always play urgently the same transition, or wait, in
a non-urgent location, until reaching some valuation greater than or
equal to $r_1$ and then play the same transition.

Observe preliminarily that for all $\loc \in \Locs'\cap \LocsMin$
(respectively, $\loc \in \Locs'\cap \LocsMax$), if on the interval
$(r'',r_1)$, $\stratmin^1$ (respectively, $\stratmax^1$) goes to
$\loc^f$ then the slope on $[r'',r_1]$ (and thus on $[r',r_1]$) is
$-\price(\loc)$. Thus for such a location $\loc$, we know that
$(i)\wedge \neg(ii)$ holds for $\loc$ (by letting $r'$ be $r''$).

For other locations $\loc$, we will construct a new pair of NC- and
FP-strategies $\stratmin$ and $\stratmax$ in $\game_{\Locs',r_0}$
such that for all locations $\loc$ and valuations
$\valuation \in (r'',r_1)$
\begin{equation}\label{eq:fake-notfake}
  \fakeValue_{\game_{\Locs',r_0}}^{\stratmin}(\loc,\valuation) \leq
  \Value_{\game}(\loc,\valuation)\leq
  \Value_{\game_{\Locs',r_0}}^{\stratmax}(\loc,\valuation) \,.
\end{equation}
As a consequence, with Lemma~\ref{lem:fake-optimality} (over game
$\game_{\Locs',r_0}$), one would have that
$\Value_{\game_{\Locs',r_0}}(\loc,\valuation) =
\Value_\game(\loc,\valuation)$,
which will raise a contradiction with the definition of $r_1$ as
$\Next_{\Locs'}(r_0)<r_0$ (by Lemma~\ref{lem:operatorNext}), and
conclude the proof.

We only show the construction for $\stratmin$, as it is very similar
for $\stratmax$. Strategy $\stratmin$ is obtained by combining
strategies $\stratmin^1$ over $[0,r_1]$, and $\stratmin^0$ over
$[r_1,r_0]$: a special care has to be spent in case $\stratmin^1$
performs a jump to a location $\loc^f$, since then, in $\stratmin$,
we rather glue this move with the decision of strategy $\stratmin^0$
in $(\loc,r_1)$. Formally, let $(\loc,\valuation)$ be a
configuration of $\game_{\Locs',r_0}$ with $\loc \in \LocsMin$. We
construct $\stratmin(\loc,\valuation)$ as follows:
\begin{itemize}
\item if $\valuation \geq r_1$,
  $\stratmin(\loc,\valuation) = \stratmin^0(\loc,\valuation)$;
\item if $\valuation < r_1$, $\loc\not\in\Locs'$ and
  $\stratmin^1(\loc,\valuation) = \big(t,(\loc, \loc^f)\big)$ for
  some delay $t$ (such that $\valuation+t\leq r_1$), we let
  $\stratmin(\loc,\valuation) = \big (r_1-\valuation+t',(\loc,
  \loc') \big )$ where $(t',(\loc, \loc'))=\stratmin^0(\loc,r_1)$;
\item otherwise
  $\stratmin(\loc,\valuation) = \stratmin^1(\loc,\valuation)$.
\end{itemize}

For all finite plays $\run$ in $\game_{\Locs',r_0}$ that conform to
$\stratmin$, start in a configuration $(\loc,\valuation)$ such that
$\valuation\in (r'',r_0]$ and
$\loc \notin \{{\loc'}^f\mid \loc'\in \Locs\}$, and end in a final
location, we show by induction that
$\costgame{\game_{\Locs',r_0}}{\run} \leq
\Value_\game(\loc,\valuation)$.
Note that $\run$ either only contains valuations in $[r_1,r_0]$, or
is of the form
$(\loc,\valuation) \xrightarrow{c} (\loc^f,\valuation')$, or is of
the form $(\loc,\valuation)\xrightarrow{c} \run'$ with $\run'$ a run
that satisfies the above restriction.
\begin{itemize}
\item If $\valuation \in [r_1,r_0]$, then $\run$ conforms with
  $\stratmin^0$, thus, as $\stratmin^0$ is fake-optimal,
  $\costgame{\game_{\Locs',r_0}}{\run} \leq
  \Value_{\game_{\Locs',r_0}}(\loc,\valuation) =
  \Value_\game(\loc,\valuation)$
  (the last inequality comes from the definition of
  $r_1=\Next_{\Locs'}(r_0)$). Therefore, in the following cases, we
  assume that $\valuation \in (r'',r_1)$.
\item Consider then the case where $\run$ is of the form
  $(\loc,\valuation) \xrightarrow{c} (\loc^f,\valuation')$.
  \begin{itemize}
  \item if $\loc\in \Locs'\cap\LocsMin$, $\loc$ is urgent in
    $\game_{\Locs',r_0}$, thus
    $\valuation'=\valuation$. Furthermore, since $\run$ conforms
    with $\stratmin$, by construction of $\stratmin$, the choice of
    $\stratmin^1$ on $(r'',r_1)$ consists in going to $\loc^f$,
    thus, as observed above, $(i)\wedge \neg(ii)$ holds for
    $\loc$. Therefore,
    $\Value_{\game}(\loc,\valuation) = \Value_{\game}(\loc,r_0) +
    \price(\loc) (r_0-\valuation) = \fgoal_{\loc_f}(\valuation) =
    \costgame{\game_{\Locs',r_0}}{\run}$.
  \item If $\loc \in \LocsMin\setminus\Locs'$, by construction, it
    must be the case that
    $\stratmin(\loc,\valuation) = \big(r_1-\valuation+t',(\loc,
    \loc^f) \big)$
    where
    $\big(t,(\loc, \loc^f)\big) = \stratmin^1(\loc,\valuation)$ and
    $\big(t',(\loc, \loc^f)\big) = \stratmin^0(\loc,r_1)$.  Thus,
    $\valuation' = r_1 +t'$. In particular, observe that
    $\costgame{\game_{\Locs',r_0}}{\run} = (r_1-\valuation)
    \price(\loc) + \costgame{\game_{\Locs',r_0}}{\run'}$
    where
    $\run' = (\loc,r_1)\xrightarrow {c'} (\loc^f,\valuation')$. As
    $\run'$ conforms with $\stratmin^0$ which is fake-optimal in
    $\game_{\Locs',r_0}$,
    $\costgame{\game_{\Locs',r_0}}{\run'} \leq
    \Value_{\game_{\Locs',r_0}}(\loc,r_1) =
    \Value_{\game}(\loc,r_1)$
    (since $r_1=\Next(r_0)$). Thus
    $\costgame{\game_{\Locs',r_0}}{\run} \leq (r_1-\valuation)
    \price(\loc) + \Value_{\game}(\loc,r_1) =
    \costgame{\game_{\Locs',r_1}}{\run''}$
    where
    $\run''= (\loc,\valuation)
    \xrightarrow{c''}(\loc^f,\valuation+t)$
    conforms with $\stratmin^1$ which is fake-optimal in
    $\game_{\Locs',r_1}$. Therefore,
    $\costgame{\game_{\Locs',r_0}}{\run} \leq
    \Value_{\game_{\Locs',r_1}}(\loc,\valuation) =
    \Value_\game(\loc,\valuation)$ (since $r_1=\Next(r_0)$).
  \item If $\loc\in \LocsMax$ then
    $\costgame{\game_{\Locs',r_0}}{\run} =
    (\valuation'-\valuation)\price(\loc) +
    \fgoal_{\loc_f}(\valuation') =
    (\valuation'-\valuation)\price(\loc) + (r_0-\valuation')
    \price(\loc) + \Value_\game(\loc,r_0)= (r_0-\valuation)
    \price(\loc)+ \Value_\game(\loc,r_0)$.
    By Lemma~\ref{lem:rate}, since
    $\loc\in \LocsMax\setminus\LocsUrg$ ($\loc$ is not urgent in
    $\game$ since $\loc^f$ exists),
    $\Value_\game(\loc,r_1) \geq (r_0-r_1) \price(\loc)
    +\Value_\game(\loc,r_0)$.
    Furthermore, observe that if we define $\run'$ as the play
    $(\loc,\valuation)\xrightarrow{c'} (\loc^f,\valuation)$ in
    $\game_{\Locs',r_1}$, then $\run'$ conforms with $\stratmin^1$
    and
    \begin{align*}
      \costgame{\game_{\Locs',r_1}}{\run'} 
      & = (r_1-\valuation) \price(\loc) + \Value_\game(\loc,r_1) \\
      & \geq (r_1-\valuation) \price(\loc) + (r_0-r_1)
        \price(\loc) +\Value_\game(\loc,r_0) \\ 
      & = (r_0-\valuation) \price(\loc) + \Value_\game(\loc,r_0) \\
      & = \costgame{\game_{\Locs',r_0}}{\run}\,.
    \end{align*}
    Thus, as $\stratmin^1$ is fake-optimal in $\game_{\Locs',r_1}$,
    $\costgame{\game_{\Locs',r_0}}{\run} \leq
    \costgame{\game_{\Locs',r_1}}{\run'} \leq
    \Value_{\game_{\Locs',r_1}}(\loc,\valuation) =
    \Value_{\game}(\loc, \valuation)$.
  \end{itemize}
\item We finally consider the case where
  $\run = (\loc,\valuation)\xrightarrow{c}\run'$ with $\run'$ that
  starts in configuration $(\loc',\valuation')$ such that
  $\loc'\notin \{{\loc''}^f\mid \loc''\in \Locs\}$. By induction
  hypothesis
  $\costgame{\game_{\Locs',r_0}}{\run'}\leq
  \Value_\game(\loc',\valuation')$.
  \begin{itemize}
  \item If $\valuation' \leq r_1$, let $\run''$ be the play of
    $\game_{\Locs',r_1}$ starting in $(\loc',\valuation')$ that
    conforms with $\stratmin^1$ and $\stratmax^1$. If $\run''$ does
    not reach a final location, since $\stratmin^1$ is an
    NC-strategy, the costs of its prefixes tend to $-\infty$. By
    considering the strategy $\stratmin'$ of
    Lemma~\ref{lem:fake-optimality}, we would obtain a run
    conforming with $\stratmax^1$ of cost smaller than
    $\Value_{\game_{\Locs',r_1}}(\loc',\valuation')$ which would
    contradict the optimality of $\stratmax^1$. Hence, $\run''$
    reaches the target. Moreover, since $\stratmax^1$ is optimal and
    $\stratmin^1$ is fake-optimal, we finally know that
    $\costgame{\game_{\Locs',r_1}}{\run''} =
    \Value_{\game_{\Locs',r_1}}(\loc',\valuation')
    =\Value_{\game}(\loc',\valuation')$
    (since $\valuation'\in [\Next(r_1),r_1]$). Therefore,
    \begin{align*}
      \costgame{\game_{\Locs',r_0}}{\run} 
      & = (\valuation'-\valuation) \price(\loc) +
        \price(\loc,\loc') +\costgame{\game_{\Locs',r_0}}{\run'} \\ 
      & \leq (\valuation'-\valuation) \price(\loc) +
        \price(\loc,\loc') + 
        \Value_\game(\loc',\valuation') \\ 
      & = (\valuation'-\valuation) \price(\loc) + \price(\loc,\loc')
        + \cost{\run''} = \cost{(\loc,\valuation) \xrightarrow{c'} \run''} 
    \end{align*}
    Since the play $(\loc,\valuation) \xrightarrow{c'} \run''$
    conforms with $\stratmin^1$, we finally have
    $\costgame{\game_{\Locs',r_0}}{\run} \leq
    \cost{(\loc,\valuation) \xrightarrow{c'} \run''} \leq
    \Value_{\game_{\Locs',r_1}}(\loc,\valuation)=\Value_\game(\loc,\valuation)$.
  \item If $\valuation'> r_1$ and $\loc\in \LocsMax$, let $\run^1$
    be the play in $\game_{\Locs',r_1}$ defined by
    $\run^1 = (\loc,\valuation) \xrightarrow{c'}
    (\loc^f,\valuation)$
    and $\run^0$ the play in $\game_{\Locs',r_0}$ defined by
    $\run^0= (\loc,r_1) \xrightarrow{c''} \run'$. We have
    \begin{align*}
      \costgame{\game_{\Locs',r_0}}{\run} 
      &=
        (\valuation'-\valuation)\price(\loc) + \price(\loc,\loc') +
        \costgame{\game_{\Locs',r_0}}{\run'} \\
      &=
        \underbrace{\fgoal_{\loc_f}(\valuation)}_{
        =\costgame{\game_{\Locs',r_1}}{\run^1}}-\Value_\game(\loc,r_1) 
        + 
        \underbrace{(\valuation'-r_1)\price(\loc) + \price(\loc,\loc') +
        \costgame{\game_{\Locs',r_0}}{\run'}}_{
        =\costgame{\game_{\Locs',r_0}}{\run^0}}\,. 
    \end{align*}
    Since $\run^{0}$ conforms with $\stratmin^0$, fake-optimal, and
    reaches a final location,
    $\costgame{\game_{\Locs',r_0}}{\run^0}\leq
    \Value_{\game_{\Locs',r_0}}(\loc,r_1) = \Value_\game(\loc,r_1)$
    (since $r_1=\Next_{\Locs'}(r_0)$).  We also have that $\run^1$
    conforms with $\stratmin^1$, so the previous explanations
    already proved that
    $\costgame{\game_{\Locs',r_1}}{\run^1} \leq
    \Value_\game(\loc,\valuation)$.
    As a consequence
    $\costgame{\game_{\Locs',r_0}}{\run} \leq
    \Value_\game(\loc,\valuation)$.
  \item If $\valuation'> r_1$ and $\loc\in \LocsMin$, we know that
    $\loc$ is non-urgent, so that $\loc\not\in \Locs'$. Therefore,
    by definition of $\stratmin$,
    $\stratmin(\loc,\valuation) = ( r_1-\valuation+t',(\loc,\loc'))$
    where $\stratmin^1(\loc,\valuation) = (t,(\loc, \loc^f))$ for
    some delay $t$, and
    $\stratmin^0(\loc,r_1) = (t',(\loc, \loc'))$. If we let $\run^1$
    be the play in $\game_{\Locs',r_1}$ defined by
    $\run^1 = (\loc,\valuation) \xrightarrow{c'}(\loc^f,\valuation)$
    and $\run^0$ the play in $\game_{\Locs',r_0}$ defined by
    $\run^0= (\loc,r_1)\xrightarrow{c''} \run'$, as in the previous
    case, we obtain that
    $\costgame{\game_{\Locs',r_0}}{\run} \leq
    \Value_\game(\loc,\valuation)$.
  \end{itemize}
\end{itemize}

As a consequence of this induction, we have shown that for all
$\loc\in \Locs$, and for all $\valuation\in (r'',r_1)$,
$\fakeValue_{\game_{\Locs',r_0}}^{\stratmin}(\loc,\valuation) \leq
\Value_\game(\loc,\valuation)$,
which shows one inequality of \eqref{eq:fake-notfake}, the other
being obtained very similarly.

\section{Run of the algorithm on an example}
\label{app:example-running}

\begin{figure}[tbp]
  \begin{center}
    \begin{tikzpicture}[xscale=.8,yscale=0.55]

      \draw[->] (6,0) -- (11,0) node[anchor=north] {$x$};
      \draw	(6,0) node[anchor=south] {$0$}
      (7,0) node[anchor=south] {$\frac 1 4$}
      (8,0) node[anchor=south] {$\frac 1 2$}
      (9,0) node[anchor=south] {$\frac 3 4$}
      (10,0) node[anchor=north] {$1$};

      \draw[->] (6,0) -- (6,-4) node[anchor=east] {$\val(\loc_2,x)$};
      \draw	(6,-3.3) node[anchor=east] {$-9.5$}
      (6,-2) node[anchor=east] {$-6$}
      (6.05,-1.85) node[anchor=west] {$-5.5$}
      (6,-0.6) node[anchor=east] {$-2$}
      (6,0.3) node[anchor=east] {$1$};

      \draw[thick] (6,-3.3) -- (7,-2) -- (8,-1.85)--(9,-0.6)--(10,0.3);

      \draw[->] (6,-5) -- (11,-5) node[anchor=north] {$x$};
      \draw	(6,-5) node[anchor=south] {$0$}
      (7,-5) node[anchor=south] {$\frac 1 4$}
      (8,-5) node[anchor=south] {$\frac 1 2$}
      (9,-5) node[anchor=south] {$\frac 3 4$}
      (9.6,-5) node[anchor=south] {$\frac 9 {10}$}
      (10,-5) node[anchor=south] {$1$};

      \draw[->] (6,-5) -- (6,-9) node[anchor=east] {$\val(\loc_1,x)$};
      \draw	(6,-8.3) node[anchor=east] {$-9.5$}
      (6,-7) node[anchor=east] {$-6$}
      (6.05,-6.85) node[anchor=west] {$-5.5$}
      (6,-5.6) node[anchor=east] {$-2$}
      (6,-5.1) node[anchor=east] {$-0.2$};

      \draw[thick] (6,-8.3) -- (7,-7) -- (8,-6.85)--(9,-5.6)--(9.6,-5.1)--(10,-5);

      \draw[->] (-2,0) -- (3,0) node[anchor=north] {$x$};
      \draw	(-2,0) node[anchor=south] {$0$}
      (-1,0) node[anchor=south] {$\frac 1 4$}
      (0,0) node[anchor=south] {$\frac 1 2$}
      (2,0) node[anchor=south] {$1$};

      \draw[->] (-2,0) -- (-2,-4) node[anchor=east] {$\val(\loc_3,x)$};
      \draw	(-2,-3) node[anchor=east] {$-10$}
      (-2,-1.7) node[anchor=east] {$-6$}
      (-2.05,-1.5) node[anchor=west] {$-5.5$}
      (-2,-2) node[anchor=west] {$-7$};

      \draw[thick] (-2,-3) -- (-1,-1.7) -- (0,-1.5)--(2,-2);

      \draw[->] (-2,-5) -- (3,-5) node[anchor=north] {$x$};
      \draw	(-2,-5) node[anchor=south] {$0$}
      (2,-5) node[anchor=south] {$1$};

      \draw[->] (-2,-5) -- (-2,-9) node[anchor=east] {$\val(\loc_4,x)$};
      \draw	(-2,-6.3) node[anchor=east] {$-4$}
      (-2,-7.3) node[anchor=east] {$-7$};

      \draw[thick] (-2,-6.3) -- (2,-7.3);

      \draw[->] (6,-10) -- (11,-10) node[anchor=north] {$x$};
      \draw	(6,-10) node[anchor=south] {$0$}
      (9,-10) node[anchor=south] {$\frac 3 4$}
      (10,-10) node[anchor=south] {$1$};

      \draw[->] (6,-10) -- (6,-14) node[anchor=east] {$\val(\loc_5,x)$};
      \draw	(6,-13.5) node[anchor=east] {$-14$}
      (6,-10.5) node[anchor=east] {$-2$}
      (6,-9.8) node[anchor=east] {$1$};

      \draw[thick] (6,-13.5) -- (9,-10.5) --(10,-9.8);

      \draw[->] (6,-15) -- (11,-15) node[anchor=north] {$x$};
      \draw	(6,-15) node[anchor=south] {$0$}
      (10,-15) node[anchor=south] {$1$};

      \draw[->] (6,-15) -- (6,-19) node[anchor=east] {$\val(\loc_6,x)$};
      \draw	(6,-18) node[anchor=east] {$-11$}
      (6,-14.8) node[anchor=east] {$1$};

      \draw[thick] (6,-18) -- (10,-14.8);

      \draw[->] (-2,-10) -- (3,-10) node[anchor=north] {$x$};
      \draw	(-2,-10) node[anchor=south] {$0$}
      (2,-10) node[anchor=south] {$1$};

      \draw[->] (-2,-10) -- (-2,-14) node[anchor=east] {$\val(\loc_7,x)$};
      \draw	(-2,-13.5) node[anchor=east] {$-16$}
      (-2,-10) node[anchor=east] {$0$};

      \draw[thick] (-2,-13.5) -- (2,-10);

    \end{tikzpicture}

    \caption{Value functions of the \SPTG of Figure~\ref{fig:ex-ptg2}}
    \label{fig:val_sptg}
  \end{center}
\end{figure}

Figure~\ref{fig:val_sptg} shows the value functions of the \SPTG of
Figure~\ref{fig:ex-ptg2}. Here is how the algorithm obtains those
functions.  First it computes the functions at valuation $1$, thanks
to \SolveInstant. Then, it computes the value of the game where all
states are urgent but additional terminal states have been added by
the \Waiting function to allow waiting until $1$. This step gives the
correct value functions until the cutpoint $\frac 3 4$: in the
$repeat$ loop, at first $a= 9/10$ but the slope in $\loc_1$ is smaller
than the slope that would be granted by waiting. Then $a=3/4$ where
the algorithm gives a slope of value $-16$ in $\loc_2$ while the cost
of this $\MaxPl$'s location is $-14$. We thus choose $r:=3/4$ and
compute the algorithm on the interval $[0, r]$ with final states
allowing one to wait until $r$ and get the already known value in
$r$. The algorithm then stops at $\frac 1 2$ in order to allow
$\loc_1$ to wait, then in $\frac 1 4$ because of $\loc_2$ and finally
the algorithm reaches $0$ giving us the value functions on the entire
interval $[0,1]$.

\section{Reset-acyclic \PTG{s}}
\label{app:raptg}

Towards solving reset-acyclic \PTG{s}, our first step is to remove
strict guards from the transitions, i.e., guards of the form $(a,b]$,
$[b,a)$ or $(a,b)$ with $a,b\in \N$. For this, we enhance the \PTG
with regions in a method similar to what is done
in~\cite[Lemma~4.6]{DueIbs13}. Formally, let
$\game=
(\LocsMin,\LocsMax,\LocsFin,\LocsUrg,\fgoalvec,\transitions,\price)$
be a \PTG. We define the region-\PTG of $\game$ as
$\game'=
(\LocsMin',\LocsMax',\LocsFin',\LocsUrg',\fgoalvec',\transitions',\price')$
where:
\begin{itemize}
\item $\LocsMin'=\{(\loc,I)\mid \loc\in \LocsMin, I\in \reggame\}$;
\item $\LocsMax'=\{(\loc,I)\mid \loc\in \LocsMax, I\in \reggame\}$;
\item $\LocsFin=\{(\loc,I)\mid \loc\in \LocsFin, I\in \reggame\}$;
\item $\LocsUrg=\{(\loc,I)\mid \loc\in \LocsUrg, I\in \reggame\}$;
\item $\forall (\loc,I)\in \LocsFin', \fgoal'_{\loc,I}=\fgoal_\loc$;
\item \todo{B: isn't it the case that $I'_g = \overline{I_g\cap I}$ in
    every cases?} \begin{align*}
    \transitions'
    &=\Bigg\{((\loc,I),\overline{I_g\cap I},R,(\loc',I'))\mid
      (\loc,I_g,R,\loc')\in \transitions, I' =
      {\footnotesize\begin{cases}
          I &\text{if } R=\bot\\
          \{0\} &\text{otherwise}
      \end{cases}}
                                         \Bigg\} \\
    & \quad\cup\big\{((\loc,(M_k,M_{k+1})),\{M_{k+1}\},\bot,(\loc,\{M_{k+1}\}))
      \mid \loc\in \Locs, (M_k,M_{k+1}) \in \reggame\big\} \\
    & \quad\cup\big\{((\loc,\{M_{k}\}),\{M_{k}\},\bot,(\loc,(M_k,M_{k+1})))
      \mid \loc\in \Locs, (M_k,M_{k+1}) \in \reggame\big\}\,;
  \end{align*}
\item $\forall (\loc,I)\in \Locs', \price'(\loc,I) = \price(\loc)$;
  and $\forall ((\loc,I),I_g,R,(\loc',I'))\in \transitions'$, if
  $(\loc,I_g,R,\loc')\in \transitions$, then
  $\price((\loc,I),I_g,R,(\loc',I'))=\price(\loc,I_g,R,\loc)$,
  else $\price((\loc,I),I_g,R,(\loc',I'))=0$.
\end{itemize}

It is easy to verify that, in all configurations
$((\loc,\{M_k\}),\valuation)$ reachable from the null valuation, the
valuation $\valuation$ is $M_k$. More interestingly, in all
configurations $((\loc,(M_k,M_{k+1})),\valuation)$ reachable from the
null valuation, the valuation $\valuation$ is in $[M_k, M_{k+1}]$:
indeed if $\valuation=M_k$ (respectively, $M_{k+1}$), it intuitively
simulates a configuration of the original game with a valuation
arbitrarily close to $M_k$, but greater than $M_k$ (respectively,
smaller than $M_{k+1}$). The game can thus take transitions with guard
$x>M_k$, but can not take transitions with guard $x=M_k$ anymore.

\begin{lemma}
  \label{lem:region_ptg}
  Let $\game$ be a one-clock \PTG, and $\game'$ be its region-\PTG
  defined as before. For $(\loc,I)\in \Locs\times\reggame$ and
  $\valuation\in I$,
  $\val_\game(\loc,\valuation)=\val_{\game'}((\loc,I),\valuation)$.
  Moreover, we can transform an $\varepsilon$-optimal strategy of
  $\game'$ into a $\varepsilon'$-optimal strategy of $\game$ with
  $\varepsilon' > \varepsilon$.
\end{lemma}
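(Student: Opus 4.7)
The plan is to establish a cost-preserving correspondence between plays in $\game$ and plays in $\game'$, then invoke the determinacy result (Theorem~\ref{thm:determined}) to deduce value equality, and finally translate strategies between the two games.

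First, I would define two mappings. Given a play
$\rho$ in $\game$ starting from $(\loc,\valuation)$ with
$\valuation \in I$, we build a play $\Phi(\rho)$ in $\game'$ starting from
$((\loc,I),\valuation)$ as follows. Every step of $\rho$ consisting of a delay $t$ from $(\loc_0,\valuation_0)$ to $(\loc_0,\valuation_0+t)$ followed by a discrete transition is decomposed according to the regions traversed by the clock on the interval $[\valuation_0,\valuation_0+t]$: each time the valuation crosses a region boundary $M_k$ or $M_{k+1}$, we insert in $\Phi(\rho)$ the zero-delay zero-cost region-change transitions
$((\loc_0,(M_{k-1},M_k)),\{M_k\},\bot,(\loc_0,\{M_k\}))$ and
$((\loc_0,\{M_k\}),\{M_k\},\bot,(\loc_0,(M_k,M_{k+1})))$ constructed in $\game'$, and we let the remaining delay be spent in the new location. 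Conversely, given a play $\rho'$ in $\game'$, define $\Psi(\rho')$ as the play in $\game$ obtained by dropping the region component from every location and merging all consecutive region-change transitions (which take zero delay and zero cost) into the subsequent delay. These maps are mutual inverses up to the insertion of trivial zero-cost segments.

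Second, I would verify cost preservation: every region-change transition contributes $0$ both in duration and discrete price; the location price-rate in $(\loc,I)$ equals $\price(\loc)$; the final cost functions of $\game'$ are inherited verbatim from $\fgoalvec$; and the set of original transitions taken in $\Phi(\rho)$ (with their new guards) is exactly the set of transitions taken in $\rho$, with identical discrete prices. Hence $\cost{\Phi(\rho)} = \cost{\rho}$ and symmetrically for $\Psi$. From this, translating strategies is mechanical: an FP-strategy $\sigma'$ of a player in $\game'$ induces a strategy $\sigma$ of the same player in $\game$ by letting $\sigma(\loc,\valuation)$ mimic the action $\sigma'$ would perform from $((\loc,I),\valuation)$ after the forced sequence of region changes, and the reverse translation splits delays across boundaries with the corresponding region-change transitions. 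Because costs match exactly along pairs of corresponding plays, we obtain $\val_\game(\loc,\valuation)=\val_{\game'}((\loc,I),\valuation)$ using the determinacy of both $\game$ and $\game'$ given by Theorem~\ref{thm:determined}.

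The main subtlety, and the place where the $\varepsilon' > \varepsilon$ slack is unavoidable, concerns the behaviour at boundary valuations such as $((\loc,(M_k,M_{k+1})),M_{k+1})$, where $\game'$ offers both the region-change transition and any original transition whose closed guard $\overline{I_g \cap (M_k,M_{k+1})}$ contains $M_{k+1}$. I would argue that the region-change transition leads to $((\loc,\{M_{k+1}\}),M_{k+1})$ from which exactly the same original transitions are enabled as at $(\loc,M_{k+1})$ in $\game$, so no player obtains a strictly better option than in $\game$. When translating an $\varepsilon$-optimal strategy $\sigma'$ of $\game'$ back to $\game$, the only loss may come from the fact that $\sigma'$ can distinguish, at valuation $M_{k+1}$, between the ``left-region'' location $(\loc,(M_k,M_{k+1}))$ and the ``boundary'' location $(\loc,\{M_{k+1}\})$, whereas $\game$ sees only $(\loc,M_{k+1})$; this is handled by perturbing the simulating strategy by an arbitrarily small delay before the boundary, which, by continuity of the value function on each region (Theorem~\ref{prop:continuity-of-val}), costs at most $\varepsilon'-\varepsilon$ for any $\varepsilon' > \varepsilon$. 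This perturbation argument is the main technical obstacle, but parallels the treatment in \cite[Lemma~4.6]{DueIbs13}.
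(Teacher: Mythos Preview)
Your approach---a cost-preserving correspondence between plays, plus a perturbation argument at region borders invoking continuity---is exactly the line the paper takes; its own proof is a two-sentence sketch saying that strategies of $\game'$ that play \emph{on} region borders are replaced by strategies of $\game$ that play \emph{increasingly close} to the border, the loss being controllable.

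Two places in your write-up need tightening. First, the sentence ``no player obtains a strictly better option than in $\game$'' is not justified by the observation that the region-change transition leads to a configuration equivalent to $(\loc,M_{k+1})$ in $\game$: the whole point is that at $((\loc,(M_k,M_{k+1})),M_{k+1})$ the player \emph{also} has the closed-guard copies of transitions whose original guard was strict (e.g.\ $x<M_{k+1}$), and these are \emph{not} available at $(\loc,M_{k+1})$ in $\game$. So there \emph{are} strictly more options; what you must argue is that these extra options do not change the \emph{value}, which is precisely what the perturbation-plus-continuity argument buys you---but then the cost-preservation of $\Phi/\Psi$ alone does not yield value equality, and you should route the equality itself through the $\varepsilon$-approximation in both directions rather than through an exact play bijection.

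Second, ``an arbitrarily small delay'' is not enough: a single fixed perturbation $\delta$ can be charged once per border crossing, and the opponent may force arbitrarily many crossings before the target is reached, so the accumulated error $n\cdot\delta\cdot\maxPriceLoc$ is not bounded a priori. This is why the paper says ``increasingly close to the border as time passes'': you need a summable sequence of perturbations (say $\delta_n=\delta/2^n$ at the $n$-th crossing) so that the total deviation stays below $\varepsilon'-\varepsilon$ regardless of the length of the play.
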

\begin{proof}
  The proof consists in replacing strategies of $\game'$ where players
  can play on the borders of regions, by strategies of $\game$ that
  play increasingly close to the border as time passes. If played
  close enough, the loss created can be chosen as small as we want. 
\end{proof}

Consider now the region-\PTG $\game$ associated to a reset-acyclic
\PTG (and of polynomial size with respect to the original \PTG). We
can decompose the graph of $\game$ into strongly connected components
(that do not contain reset transitions by hypothesis). Consider first
its bottom strongly connected components, i.e., components with no
reset transitions exiting from them. All clock constraints are of the
form $[a,b]$ with $a<b$, or $\{a\}$. We denote by
$0=M_0<M_1<\cdots< M_K$ the constants appearing in the guards of the
component (adding $0$). Then, solving the component amounts
to \begin{inparaenum}[$(i)$]
\item solve the sub-game with only transitions with guard $\{M_k\}$,
  replacing then these transitions by final locations with the cost
  just computed,
\item solve the modified sub-game with only transitions with guard
  $[M_{k-1},M_k]$, by first shrinking the guards to transform the game
  into an \SPTG, and so on, until $M_0=0$. 
\end{inparaenum}
Once all bottom strongly connected components are solved, we replace
the reset transitions going to them by final locations again, using
the cost computed so far. We continue until no strongly connected
components remain.  Each \SPTG being solvable in exponential time, the
overall reset-acyclic can be solved in exponential time too.

\end{document}